\documentclass[a4paper,11pt,envcountsame]{llncs}  %
 \usepackage[utf8]{inputenc}
\pagestyle{plain}
\usepackage[english]{babel}
\usepackage[usenames,dvipsnames]{xcolor}
\usepackage{graphicx}
\usepackage{color}
\usepackage{xspace}
\usepackage{placeins}
\usepackage{booktabs}
\usepackage{tabu}
\usepackage{fourier-orns}
\usepackage{subcaption}
\usepackage{amsmath}
\usepackage{amssymb}
\usepackage{mathrsfs}
\usepackage{stmaryrd}
\usepackage{textcomp}
\usepackage{enumerate}
\usepackage[linesnumbered]{algorithm2e}
\usepackage[noend]{algpseudocode}
\usepackage{yfonts}
\usepackage{multirow}
\usepackage{tikz}
\usepackage[symbol]{footmisc}
\usepackage{todonotes}
\usepackage{hyperref}
\usepackage[capitalize,nameinlink,noabbrev]{cleveref}
\usepackage[hmargin=3cm,vmargin=3cm]{geometry} %
\usepackage{bm}

\usetikzlibrary{fit}
\tikzset{%
  hcolor/.style={color=black!20!red},
  highlight/.style={rectangle,rounded corners,draw,dashed,
    thick,inner sep=0pt,hcolor}
}
\newcommand{\tikzmark}[2]{\tikz[overlay,remember picture,
  baseline=(#1.base)] \node (#1) {#2};}
\newcommand{\Highlight}[3]{%
    \tikz[overlay,remember picture]{
    \node[highlight,fit=(#1.north west) (#2.south east)] (submat) {};
    \node[hcolor,below=0mm of submat] {\small#3};
    }
}

\newcommand{\etal}{\mbox{\emph{et al.}}\xspace}
\newcommand\ie{\emph{i.e.}\xspace}
\newcommand\eg{\emph{e.g.}\xspace}
\newcommand\viz{\emph{viz.}\xspace}
\newcommand\defas{:=}

\newcommand\Fq{\ensuremath{\mathbb{F}_q}\xspace}
\newcommand\K{\ensuremath{\mathbb{K}}\xspace}
\newcommand\id{\bm{I}}
\newcommand\zm{\bm{0}}
\newcommand\om{\bm{1}}
\newcommand\diag{\bm{D}}
\newcommand\tpart{\bm{\Delta}}
\newcommand\trig{\bm{T}}

\newcommand\select{\mathit{S}}

\DeclareMathOperator\wt{wt}
\DeclareMathOperator\zr{\mathcal{Z}}

\DeclareMathOperator\bigo{O}
\DeclareMathOperator\gam{\gamma}
\DeclareMathOperator\safe{\mathcal{C}}

\DeclareMathOperator\rowdim{rowdim}
\DeclareMathOperator\coldim{coldim}
\DeclareMathOperator\kbasis{kerb}
\DeclareMathOperator\fun{\mathit{f}}
\DeclareMathOperator\circuit{\mathit{C}}
\DeclareMathOperator\simu{\pi}

\hypersetup{
   colorlinks=true,
   citecolor=ForestGreen,
   linkcolor=BrickRed,
   urlcolor=Fuchsia
 }

\setcounter{MaxMatrixCols}{20}

\makeatletter\def\@thmcountersep{.}\makeatother
\newcounter{condsec}
\spnewtheorem{condition}{Condition}[condsec]{\bfseries}{\itshape}
\crefname{condition}{Condition}{Conditions}
\newcounter{pcondsec}
\spnewtheorem{precondition}{Precondition}[pcondsec]{\bfseries}{\itshape}
\crefname{precondition}{Precondition}{Preconditions}

\newcommand{\appendixref}[1]{\hyperref[#1]{Appendix~\ref*{#1}}}

\author{Pierre Karpman\inst{1} \and Daniel S. Roche\inst{2}}
\institute{Université Grenoble Alpes, France \and United States Naval Academy, U.S.A.\\
\vspace{2em}
\texttt{pierre.karpman@univ-grenoble-alpes.fr}, \texttt{roche@usna.edu}}

\begin{document}

\title{New Instantiations of the CRYPTO~2017 Masking Schemes}

\maketitle

\begin{abstract}
At CRYPTO~2017, Belaïd \etal presented two new private multiplication algorithms over finite fields, to be used in secure masking schemes.
To date, these algorithms have the lowest known complexity in terms of \emph{bilinear} multiplication and random
masks respectively, both being linear in the number of shares $d+1$. Yet, a practical drawback of both algorithms
is that their safe \emph{instantiation} relies on finding matrices satisfying certain conditions. In their work,
Belaïd \etal only address these up to $d=2$ and 3 for the first and second algorithm respectively, limiting so far the
practical usefulness of their schemes.

In this paper, we use in turn an algebraic, heuristic, and experimental approach to find many more safe instances
of Belaïd \etal's algorithms. This results in explicit such instantiations up to order $d = 6$ over large fields,
and up to $d = 4$ over practically relevant fields such as $\mathbb{F}_{2^8}$. 
\keywords{Masking, linear algebra, MDS matrices.}
\end{abstract}

\section{Introduction}

It has become a well-accepted fact that the black-box security of a cryptographic scheme and the security of one of its real-life implementations may be two quite different matters. In the latter case, numerous
side-channels or fault injection techniques may be used to aid in the cryptanalysis of what could otherwise be a very sensible design (for instance a provably-secure mode of operation on top of a
block cipher with no known dedicated attacks).

A successful line of side-channel attacks is based on the idea of differential power analysis (DPA), which was introduced by Kocher, Jaffe and Jun at CRYPTO'99~\cite{DBLP:conf/crypto/KocherJJ99}.
The practical importance of this threat immediately triggered an effort from cryptographers to find adequate protections. One of the notable resulting counter-measures is
the \emph{masking} approach from Chari \etal and Goubin \& Patarin~\cite{DBLP:conf/crypto/ChariJRR99,DBLP:conf/ches/GoubinP99}. 
The central idea of this counter-measure is to add a ``mask'' to sensitive variables whose observation through a side-channel could otherwise leak secret information;
such variables for instance consist of intermediate values in a block cipher computation that depend on a known plaintext and a round key.
Masking schemes apply a secret-sharing technique to several masked instances of every sensitive variable: a legitimate user knowing
all the shares can easily compute the original value, while an adversary
is now forced to observe more than one value in order to learn anything secret.
The
utility of this overall approach is that it is experimentally the case that the work required to observe $n$ values accurately through DPA increases exponentially
with $n$.

The challenge in masking countermeasures is to find efficient ways to compute with shared masked data while maintaining the property that the observation of
$n$ intermediate values is necessary to learn a secret (for some parameter $n$). When computations are specified as arithmetic circuits over a finite field
\Fq, this task reduces mostly to the specification of secure shared addition and multiplication in that field. A simple and commonly used secret sharing
scheme used in masking is the linear mapping $x \mapsto
\left(r_1,\ldots,r_d,x + \sum_{i=1}^d r_i\right)$ which makes addition
trivial; the problem then becomes how to multiply shared values.
At CRYPTO~2003, Ishai, Sahai and Wagner introduced exactly such a shared multiplication over $\mathbb{F}_2$,
proven secure in a $d$-probing model that they introduced~\cite{DBLP:conf/crypto/IshaiSW03}. Their scheme requires
$d(d+1)/2$ random field elements (\ie bits) and $(d+1)^2$ field multiplications to protect against an adversary able to observe $d$ intermediate values.
This relatively high quadratic complexity in the \emph{order} $d+1$ of the masking lead to an effort to decrease the theoretical and/or practical
cost of masking.

At EUROCRYPT~2016, Belaïd \etal presented a masking scheme over $\mathbb{F}_2$ with \emph{randomness complexity} decreased to $d + d^2/4$; implementations
at low but practically relevant orders $d \leq 4$ confirmed the gain offered by their new algorithm~\cite{DBLP:conf/eurocrypt/BelaidBPPTV16}.
At CRYPTO~2017, the same authors presented two new private
multiplication algorithms over arbitrary finite fields \Fq
\cite{DBLP:conf/crypto/BelaidBPPTV17}.
The first, \emph{Algorithm~4}, decreases the number of \emph{bilinear} multiplications to $2d+1$ at the cost of additional constant multiplications and increased
randomness complexity compared to the algorithm of Ishai \etal; the second, \emph{Algorithm~5}, decreases the randomness complexity to only $d$, at the cost
of $d(d+1)$ constant multiplications. Furthermore, both algorithms are proven secure w.r.t. the strong, composable notions of $d$-(strong) non-interference
from Barthe \etal~\cite{DBLP:conf/ccs/BartheBDFGSZ16}.
Yet a practical drawback of these last two algorithms is that their safe instantiation depends on finding matrices satisfying a certain number of conditions.
Namely, Algorithm~4 uses two (related) matrices in $\Fq^{d\times d}$ for an instantiation at order $d+1$ over \Fq, while Algorithm~5 uses a
single matrix in $\Fq^{d+1 \times d}$ for the same setting. In their paper, Belaïd \etal only succeed in providing ``safe matrices'' for the small cases
$d = 2$ and $d = 2,\,3$ for Algorithms~4 and~5 respectively, and in giving a non-constructive existence theorem for safe matrices when
$q \geq Q = \bigo(d)^{d+1}$ (resp. $q \geq Q = \bigo(d)^{d+2}$).

\subsection{Our contribution}
In this work, we focus on the problem of safely instantiating the two algorithms of Belaïd \etal from CRYPTO~2017.
We first develop equivalent conditions which are in some sense simpler and
much more efficient to check computationally.
We use this reformulation to develop useful
\emph{preconditions} based on MDS matrices that increase the likelihood
that a given matrix is safe. We show
how to generate matrices that satisfy our preconditions by construction,
which then allows to give an explicit sufficient condition, as well as
a construction of safe matrices for both schemes at order $d\le 3$.
Our simplification of the matrix conditions also naturally transforms
into a testing algorithm, an efficient implementation of which is used to perform
an extensive experimental search.
We provide explicit matrices for safe instantiations in all of the
following cases:
\begin{itemize}
  \item For $d=3$, fields $\mathbb{F}_{2^k}$ with $k\ge 3$
  \item For $d=4$, fields $\mathbb{F}_{2^k}$ with $5\le k \le 16$
  \item For $d=5$, fields $\mathbb{F}_{2^k}$ with $10 \le k \le 16$, and additionally $k = 9$ for Algorithm~5. 
  \item For $d=6$, fields $\mathbb{F}_{2^k}$ with $15 \le k \le 16$
\end{itemize}
These are the first known instantiations for $d\ge 4$ or for $d=3$ over
$\mathbb{F}_{2^3}$. We also gather detailed statistics about the proportion of safe matrices in all of these cases.

\subsection{Roadmap}
We recall the two masking schemes of CRYPTO~2017 and the associated matrix conditions in \cref{sec:scheme_def}.
We give our simplifications of the latter in \cref{sec:simp} and state our preconditions in \cref{sec:precond}.
A formal analysis of the case of order up to 3 is given in
\cref{sec:analytic}, where explicit conditions and instantiations for
these orders are also developed. We present
our algorithms and discuss their implementations in \cref{sec:algo}, and conclude with experimental results in \cref{sec:exp}.

\section{Preliminaries}
\label{sec:prelims}

\subsection{Notation}
We use $\K^{m\times n}$ to denote the set of matrices with $m$ rows and
$n$ columns over the field $\K$. We write $m = \rowdim A$ and
$n=\coldim \bm{A}$.
For any vector $\bm{v}$, $\wt(\bm{v})$ denotes the \emph{Hamming
weight} of $\bm{v}$, \ie, the number of non-zero entries.

We use $\zm_{m\times n}$ (resp. $\om_{m\times n}$) to denote the
all-zero (resp. all-one) matrix in $\K^{m\times n}$ for any fixed $\K$
(which will
always be clear from the context). Similarly, $\bm{I}_d$ is the identity matrix of dimension $d$. 

We generally use bold upper-case to denote matrices and bold lower-case
to denote vectors. (The exception is some lower-case Greek letters for
matrices that have been already defined in the literature, notably
$\bm\gamma$.)
For a matrix $\bm{M}$,
$\bm{M}_{i,j}$ is the coefficient at the $i^\text{th}$ row and
$j^\text{th}$ column, with numbering (usually) starting from one.
(Again, $\bm\gamma$ will be an exception as its row numbering starts at 0.)
Similarly, a matrix may be directly defined from its coefficients as
$\begin{pmatrix} \bm{M}_{i,j} \end{pmatrix}$.

We use ``hexadecimal notation'' for binary field elements. This means that $a = \linebreak \sum_{i=0}^{n-1}a_iX^i \in \mathbb{F}_{2^n} \cong \mathbb{F}_2[X]/\langle I(X)\rangle$ (where $I(X)$ is a degree-$n$ irreducible polynomial) is equated to the integer $\tilde{a} = \sum_{i=0}^{n-1}a_i2^i$,
which is then written in base 16. The specific field representations we
use throughout are:
\begin{itemize}
\item $\mathbb{F}_{2^2} \cong \mathbb{F}_2[x]/\langle X^2+X+1\rangle$;
\item $\mathbb{F}_{2^3} \cong \mathbb{F}_2[x]/\langle X^3+X+1\rangle$;
\item $\mathbb{F}_{2^4} \cong \mathbb{F}_2[x]/\langle X^4+X+1\rangle$;
\item $\mathbb{F}_{2^5} \cong \mathbb{F}_2[x]/\langle X^5+X^2+1\rangle$;
\item $\mathbb{F}_{2^6} \cong \mathbb{F}_2[X]/\langle X^6+X+1\rangle$;
\item $\mathbb{F}_{2^7} \cong \mathbb{F}_2[X]/\langle X^7+X+1\rangle$;
\item $\mathbb{F}_{2^8} \cong \mathbb{F}_2[X]/\langle X^8+X^4+X^3+X+1\rangle$;
\item $\mathbb{F}_{2^9} \cong \mathbb{F}_2[X]/\langle X^9+X+1\rangle$;
\item $\mathbb{F}_{2^{10}} \cong \mathbb{F}_2[X]/\langle X^{10}+X^3+1\rangle$;
\item $\mathbb{F}_{2^{11}} \cong \mathbb{F}_2[X]/\langle X^{11}+X^2+1\rangle$;
\item $\mathbb{F}_{2^{12}} \cong \mathbb{F}_2[X]/\langle X^{12}+X^3+1\rangle$;
\item $\mathbb{F}_{2^{13}} \cong \mathbb{F}_2[X]/\langle X^{13}+X^4+X^3+X+1\rangle$;
\item $\mathbb{F}_{2^{14}} \cong \mathbb{F}_2[X]/\langle X^{14}+X^5+1\rangle$;
\item $\mathbb{F}_{2^{15}} \cong \mathbb{F}_2[X]/\langle X^{15}+X+1\rangle$;
\item $\mathbb{F}_{2^{16}} \cong \mathbb{F}_2[X]/\langle X^{16}+X^5+X^3+X+1\rangle$.
\end{itemize}

Additional notation is introduced on first use.

\subsection{MDS \& Cauchy matrices}

An $[n,k,d]_\K$ linear code of length $n$, dimension $k$, minimum distance $d$ over the field $\K$ is \emph{maximum-distance separable} (MDS) if it reaches the Singleton bound,
\ie if $d = n - k + 1$. An \emph{MDS matrix} is the redundancy part $\bm{A}$ of a systematic generating matrix $\bm{G} = \begin{pmatrix} \bm{I}_k & \bm{A}\end{pmatrix}$
of a (linear) MDS code of length double its dimension.

A useful characterization of MDS matrices of particular interest in our case is stated in the following theorem (see \eg~\cite[Chap.~11, Thm.~8]{mdsConj}):
\begin{theorem}
\label{prop:mds_minors}
A matrix is MDS if and only if all its minors are non-zero, \ie all its square sub-matrices are invertible.
\end{theorem}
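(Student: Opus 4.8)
The plan is to route through the coding-theoretic content of ``MDS'' and translate it, via a parity-check matrix, into the minor condition. Since a minor of $\bm{A}$ is by definition the determinant of a square sub-matrix, and a square matrix over a field is invertible iff its determinant is non-zero, the two formulations in the statement are trivially equivalent; so it suffices to show that $\bm{A}\in\K^{k\times k}$ is MDS if and only if every square sub-matrix of $\bm{A}$ is invertible. Put $n=2k$ and let $\bm{G}=\begin{pmatrix}\bm{I}_k & \bm{A}\end{pmatrix}$ generate the code $\mathcal{C}$; the target minimum distance for being MDS is $d=n-k+1=k+1$, which is also the largest possible here by the Singleton bound. First I would exhibit a parity-check matrix: one checks directly that $\bm{G}\bm{H}^{\mathsf T}=\bm{0}$ with $\bm{H}=\begin{pmatrix}-\bm{A}^{\mathsf T}&\bm{I}_k\end{pmatrix}$, and since $\bm{H}$ has full row rank $k$ it is a parity-check matrix of $\mathcal{C}$.

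Next I would invoke the standard fact that a linear code has minimum distance at least $\delta$ exactly when every $\delta-1$ columns of a parity-check matrix are linearly independent. Applied here, $\mathcal{C}$ is MDS iff every $k$ columns of $\bm{H}$ are linearly independent, i.e.\ iff every $k\times k$ sub-matrix of $\bm{H}$ is invertible. The remaining step is to show that this condition on $\bm{H}$ coincides with ``all square sub-matrices of $\bm{A}$ are invertible''. A $k\times k$ sub-matrix $\bm{M}$ of $\bm{H}$ is specified by a set $S\subseteq\{1,\dots,k\}$ of columns taken from the $-\bm{A}^{\mathsf T}$ block together with a set $T\subseteq\{1,\dots,k\}$, with $|S|+|T|=k$, of unit columns taken from the $\bm{I}_k$ block. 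Performing Laplace expansion of $\det\bm{M}$ along each unit column deletes, for every $j\in T$, that column and the $j$-th row, leaving up to sign the $|S|\times|S|$ sub-matrix of $-\bm{A}^{\mathsf T}$ with columns indexed by $S$ and rows indexed by $\{1,\dots,k\}\setminus T$. Hence $\bm{M}$ is invertible iff that sub-matrix of $\bm{A}^{\mathsf T}$ is, and, transposing (the determinant being transpose-invariant), iff the $|S|\times|S|$ sub-matrix of $\bm{A}$ with row set $S$ and column set $\{1,\dots,k\}\setminus T$ is invertible.

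To conclude I would observe that as $S$ ranges over all subsets of $\{1,\dots,k\}$ and $T$ over all subsets of the complementary size, the pair $\bigl(S,\{1,\dots,k\}\setminus T\bigr)$ ranges over exactly the pairs of row/column index sets of equal cardinality, so these sub-matrices are precisely all the square sub-matrices of $\bm{A}$ (the empty choice corresponding to the convention that the $0\times 0$ minor equals $1\neq 0$). This gives the desired equivalence. The one delicate point — and the part I would write out most carefully — is the bookkeeping in this last step: confirming that the Laplace expansion really yields the claimed sub-matrix, and that the correspondence $\bm{M}\mapsto\bigl(S,\{1,\dots,k\}\setminus T\bigr)$ hits every square sub-matrix of $\bm{A}$, of every size from $0$ up to $k$, and not merely the full-size ones. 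Everything else is routine linear algebra together with the textbook parity-check characterisation of minimum distance.
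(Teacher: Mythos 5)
Your proof is correct. Note that the paper does not actually prove this statement: it is quoted as a known fact with a pointer to MacWilliams--Sloane (Chap.~11, Thm.~8), so there is no in-paper argument to compare against line by line. Your route—pass to the parity-check matrix $\bm{H}=\begin{pmatrix}-\bm{A}^{\mathsf T} & \bm{I}_k\end{pmatrix}$, use the standard fact that minimum distance $\ge\delta$ is equivalent to any $\delta-1$ columns of $\bm{H}$ being independent, and then match $k\times k$ column-submatrices of $\bm{H}$ with arbitrary square submatrices of $\bm{A}$ via Laplace expansion along the selected unit columns—is sound, and the bookkeeping you flag as delicate does work out: deleting, for each chosen unit column $e_j$, that column and row $j$ leaves (up to sign) the submatrix of $-\bm{A}^{\mathsf T}$ on rows $\{1,\dots,k\}\setminus T$ and columns $S$, and the map $(S,T)\mapsto(S,\{1,\dots,k\}\setminus T)$ is onto all pairs of equal-size row/column index sets, covering every minor of $\bm{A}$ of every size (the all-unit choice giving the trivial $0\times 0$ case). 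The classical textbook proof is essentially the dual of yours: it works on the generator side, using that an $[n,k]$ code is MDS iff every $k$ columns of a generator matrix are linearly independent (every $k$-set of coordinates is an information set), and then performs the same expansion on $k$-column submatrices of $\bm{G}=\begin{pmatrix}\bm{I}_k & \bm{A}\end{pmatrix}$, where the identity columns play the role your unit columns of $\bm{H}$ play. The two arguments are of equal length and content; the generator-side version saves the (easy) verification that $\bm{H}$ is a parity-check matrix, while yours leans on the slightly more commonly quoted distance criterion in terms of $\bm{H}$. Either is a complete and acceptable proof of the stated equivalence.
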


Square \emph{Cauchy matrices} satisfy the above condition by
construction, and are thence MDS. A (non-necessarily square) matrix
$\bm{A} \in \K^{n\times m}$ is a Cauchy matrix if $\bm{A}_{i,j}
= (x_i - y_j)^{-1}$,
where $\{x_1,\ldots,x_n,y_1,\ldots,y_m\}$ are $n+m$ distinct elements of $\K$.

A Cauchy matrix $\bm{A}$ may be \emph{extended} to a matrix $\widetilde{\bm{A}}$ by adding a row or a column of ones. It can be shown that all square submatrices of $\widetilde{\bm{A}}$ are invertible, and thus
themselves MDS~\cite{DBLP:journals/tit/RothS85}. By analogy and by a slight abuse of terminology, we will say of a square matrix $\bm{A}$ that it is \emph{extended MDS} (XMDS) if all square submatrices of $\bm{A}$
extended by one row or column of ones are MDS. Further depending on the context,
we may only require this property to hold for row (or column) extension
to call a matrix XMDS. 

A (possibly extended) Cauchy matrix $\bm{A}$ may be \emph{generalized} to a matrix $\bm{A'}$ by multiplying it with
(non-zero) row and column scaling: one has $\bm{A'}_{i,j} = c_id_j\cdot(x_i - y_j)^{-1}$, $c_id_j \neq 0$.
All square submatrices of generalized (extended) Cauchy matrices are MDS~\cite{DBLP:journals/tit/RothS85}, but not necessarily XMDS, as one may already use the scaling to set any row or column of $\bm{A'}$ to an arbitrary value.

\subsection{Security notions for masking schemes}

We recall the security notions under which the masking schemes studied in this paper were analysed. These are namely $d$-\emph{non-interference} ($d$-NI) and $d$-\emph{strong non-interference} ($d$-SNI), which were
both introduced by Barthe \etal~\cite{DBLP:conf/ccs/BartheBDFGSZ16} as stronger and composable alternatives to the original $d$-probing model of Ishai \etal~\cite{DBLP:conf/crypto/IshaiSW03}.

Note that none of the notions presented below are explicitly used in this paper, and we only present them for the sake of completeness. Our exposition is strongly based on the one of Belaïd \etal~\cite{DBLP:conf/crypto/BelaidBPPTV17}.

\begin{definition}[Gadgets]
Let $\fun : \mathbb{K}^n \rightarrow \mathbb{K}^m$, $u$, $v\,\in\mathbb{N}$; a \emph{$(u,v)$-gadget} for the function $\fun$ is a randomized circuit $\circuit$ such that for every tuple $(\bm{x}_1,\ldots,\bm{x}_n) \in (\mathbb{K}^u)^n$
and every set of random coins $\mathcal{R}$, $(\bm{y}_1,\ldots,\bm{y}_m) \mapsfrom \circuit(\bm{x}_1,\ldots,\bm{x}_n;\mathcal{R})$ satisfies:
\[
\left(\sum_{j=1}^v \bm{y}_{1,j},\ldots,\sum_{j=1}^v \bm{y}_{m,j}\right) = \fun\left(\sum_{j=1}^u \bm{x}_{1,j},\ldots,\sum_{j=1}^u \bm{x}_{m,j}\right).
\]
One further defines $x_i$ as $\sum_{j=1}^u\bm{x}_{i,j}$, and similarly for $y_i$; $\bm{x}_{i,j}$ is called the $j^\text{th}$ \emph{share} of $x_i$.
\end{definition}

In the above, the randomized circuit $\circuit$ has access to random-scalar gates that generate elements of $\mathbb{K}$ independently and uniformly at random, and the
variable $\mathcal{R}$ records the generated values for a given execution.
Furthermore, one calls \emph{probes} any subset of the wires of $\circuit$ (or equivalently edges of its associated graph).

\begin{definition}[$t$-Simulability]
Let $\circuit$ be a $(u,v)$-gadget for $\fun : \mathbb{K}^n \rightarrow \mathbb{K}^n$, and $\ell$, $t\,\in\mathbb{N}$. A set $\{p_1,\ldots,p_\ell\}$ of probes of $\circuit$ is said to be \emph{$t$-simulable}
if $\exists\, I_1,\ldots,I_n \subseteq \{1,\ldots,u\};\,\#I_i \leq t$ and a randomized function $\simu : (\mathbb{K}^t)^n \rightarrow \mathbb{K}^\ell$ such that for any $(\bm{x}_1,\ldots,\bm{x}_n) \in (\mathbb{K}^u)^n$,
$\{p_1,\ldots,p_\ell\} \sim \{\simu(\{x_{1,i},\,i\in I_1\},\ldots,\{x_{n,i},\,i \in I_n\})\}$.
\end{definition}

This notion of simulability leads to the following.

\begin{definition}[$d$-Non-interference]
A $(u,v)$-gadget $\circuit$ for a function over $\mathbb{K}^n$ is \emph{$d$-non-interfering} (or $d$-NI) if and only if any set of at most $d$ probes of $\circuit$ is $t$-simulable, $t \leq d$. 
\end{definition}

\begin{definition}[$d$-Strong non-interference]
A $(u,v)$-gadget $\circuit$ for a function over $\mathbb{K}^n$ is \emph{$d$-strong non-interfering} (or $d$-SNI) if and only if for every set $P_1$ of at most $d_1$ \emph{internal probes} (that do not depend
on ``output wires'' or output shares $y_{i,j}$'s) and every set $P_2$ of $d_2$ \emph{external probes} (on output wires or shares) such that $d_1 +d_2 \leq d$, then $P_1 \cup P_2$ is $d_1$-simulable.
\end{definition}

It is clear that a $d$-SNI gadget is also $d$-NI. Barthe \etal also showed that the two notions were not equivalent, but that the composition of a $d$-NI and a $d$-SNI gadget was $d$-SNI~\cite{DBLP:conf/ccs/BartheBDFGSZ16}.

\section{The masking schemes of CRYPTO 2017}
\label{sec:scheme_def}

We recall here the main ideas of the two masking schemes of Belaïd \etal introduced at CRYPTO~2017~\cite{DBLP:conf/crypto/BelaidBPPTV17} and their associated matrix conditions;
we refer to that paper for a full description of the gadgets and algorithms.

\subsection{Masking with a linear number of bilinear multiplications~\cite[§4]{DBLP:conf/crypto/BelaidBPPTV17}}
\label{sec:lin_bilin}

This scheme is the composition of two gadgets, only the first of which is of interest to us. In order to build a $d$-SNI multiplication gadget with
$d+1$ input and output shares, Belaïd \etal first give a $d$-NI gadget with $d+1$ input and $2d+1$ output shares, and then compress
its output into $d+1$ shares using a $d$-SNI gadget from Carlet \etal~\cite{DBLP:journals/iacr/CarletPRR16}.

To implement $d$-NI multiplication over a field $\mathbb{K}$, the first gadget
needs a certain matrix $\bm{\gamma} \in \mathbb{K}^{d\times d}$; in turn, this
defines a related matrix ${\bm\delta} \in \mathbb{K}^{d\times d}$ as ${\bm\delta} = \om_{d\times d} - {\bm\gamma}$.
The multiplication algorithm is then derived from the equality:
\begin{multline*}
\label{eqn:lin_bilin}
a \cdot b = \left(a_0 + \sum_{i=1}^d(r_i + a_i)\right) \cdot \left(b_0 + \sum_{i=1}^{d}(s_i + b_i)\right)\\
- \sum_{i=1}^dr_i\cdot\left(b_0+\sum_{j=1}^d({\bm\delta}_{i,j}s_j + b_j)\right) - \sum_{i=1}^d s_i\cdot\left(a_0 + \sum_{j=1}^d({\bm\gamma}_{i,j}r_j+a_j)\right),
\end{multline*}
where $a = \sum_{i=0}^d a_i$, $b = \sum_{i=0}^d b_i$ are the shared multiplicands, and the $r_i$s and $s_i$s are arbitrary (\emph{a priori} random) values.
This equality leads to defining the output shares of this first gadget as:
\begin{itemize}
\item $c_0 \defas \left(a_0 + \sum_{i=1}^d(r_i + a_i)\right) \cdot \left(b_0 + \sum_{i=1}^{d}(s_i + b_i)\right)$;
\item $c_i \defas - r_i\cdot\left(b_0+\sum_{j=1}^d({\bm\delta}_{i,j}s_j + b_j)\right)$, $1 \leq i \leq d$;
\item $c_{i+d} \defas - s_i\cdot\left(b_0+\sum_{j=1}^d({\bm\gamma}_{i,j}s_j + b_j)\right)$, $1 \leq i \leq d$.
\end{itemize}

By considering a proper scheduling of the operations needed to compute the above shares and the probes that this makes available to the adversary,
Belaïd \etal show that a necessary and sufficient condition for their resulting scheme to be $d$-SNI is that ${\bm\gamma}$ and ${\bm\delta}$ \emph{both} satisfy 
a certain condition, stated below.

\setcounter{condsec}{4}\setcounter{condition}{0}
\begin{condition}[\cite{DBLP:conf/crypto/BelaidBPPTV17}]
\label{cond41}
Let ${\bm\gamma} \in \mathbb{K}^{d\times d}$; $\ell = 2d^2+4d + 1$;
$\diag_{{\bm\gamma},j} \in \mathbb{K}^{d\times d}$ be the diagonal matrix whose
non-zero entry at row $i$ is equal to ${\bm\gamma}_{j,i}$; $\trig_d \in
\mathbb{K}^{d\times d}$
be the upper-triangular matrix whose non-zero entries are all one; and
$\trig_{{\bm\gamma},j} \in \mathbb{K}^{d\times d} = \diag_{{\bm\gamma},j}\trig_d$. Equivalently:
\begin{align*}
\id_d = \begin{pmatrix}
1 & 0 & \cdots & 0\\
0 & 1 &  & 0\\
\vdots &  & \ddots & \vdots\\
0 & \cdots & 0 & 1\\
\end{pmatrix}, &\qquad
\diag_{{\bm\gamma},j} = \begin{pmatrix}
{\bm\gamma}_{j,1} & 0 & \cdots & 0\\
0 & {\bm\gamma}_{j,2} &  & 0\\
\vdots &  & \ddots & \vdots\\
0 & \cdots & 0 & {\bm\gamma}_{j,d}\\
\end{pmatrix},\\
\\
\trig_d = \begin{pmatrix}
1 & 1 & \cdots & 1\\
0 & 1 & \cdots & 1\\
\vdots &  & \ddots & \vdots\\
0 & \cdots & 0 & 1\\
\end{pmatrix}, &\qquad
\trig_{{\bm\gamma},j} = \begin{pmatrix}
{\bm\gamma}_{j,1} & {\bm\gamma}_{j,1} & \cdots & {\bm\gamma}_{j,1}\\
0 & {\bm\gamma}_{j,2} & \cdots & {\bm\gamma}_{j,2}\\
\vdots &   & \ddots & \vdots\\
0 & \cdots & 0 & {\bm\gamma}_{j,d}\\
\end{pmatrix}.\\
\end{align*}
One then defines $\bm{L} \in \mathbb{K}^{(d+1)\times\ell}$ and
$\bm{M_{\bm\gamma}} \in \mathbb{K}^{d\times\ell}$ as:
\[\begin{matrix}
  \bm{L} &=& \bigg( &
    \begin{matrix}
      1 \\ \zm_{d\times 1}
    \end{matrix}&\begin{matrix}
      \zm_{1\times d} \\ \id_d
    \end{matrix}&\begin{matrix}
      \zm_{1\times d} \\ \zm_{d\times d}
    \end{matrix}&\begin{matrix}
      \zm_{1\times d} \\ \id_d
    \end{matrix}&\begin{matrix}
      \zm_{1\times d} \\ \id_d
    \end{matrix}&\cdots&\begin{matrix}
      \zm_{1\times d} \\ \id_d
    \end{matrix}&\begin{matrix}
      \om_{1\times d} \\ \trig_d
    \end{matrix}&\begin{matrix}
      \om_{1\times d} \\ \trig_d
    \end{matrix}&\cdots&\begin{matrix}
      \om_{1\times d} \\ \trig_d
    \end{matrix}&
  \bigg), \\[6 mm]
  \bm{M}_{\bm\gamma} &=& ( &
    \zm_{d\times 1} &
    \zm_{d\times d} &
    \id_d &
    \id_d &
    \diag_{{\bm\gamma},1} &
    \cdots &
    \diag_{{\bm\gamma},d} &
    \trig_d &
    \trig_{{\bm\gamma},1} &
    \cdots &
    \trig_{{\bm\gamma},d} &
  ).
\end{matrix}\]
Finally, ${\bm\gamma}$ is said to satisfy \cref{cond41} if for any
vector $\bm{v} \in \mathbb{K}^\ell$ of Hamming weight $\wt(\bm{v}) \leq
d$ such that $\bm{L}\bm{v}$ contains
no zero coefficient (\ie is of maximum Hamming weight $d+1$), then $\bm{M}_{\bm\gamma}\bm{v}\neq\zm_{d\times 1}$.
\end{condition}
An equivalent, somewhat more convenient formulation of \cref{cond41} can be obtained by contraposition: ${\bm\gamma}$ satisfies \cref{cond41} if
\begin{equation}
\label{eqn:equiv41}
\bm{v} \in \ker(\bm{M}_{\bm\gamma}) \wedge \wt(\bm{v}) \leq d
\Rightarrow \wt(\bm{L}\bm{v}) < d + 1.
\end{equation}
Whichever formulation is adopted, the logic behind this condition is that a violation of the implication means that there exists a linear combination of at most $d$ probes 
that depends on all the input shares (as $\bm{L}\bm{v}$ is of full weight) and on no random mask (as $\bm{M}_{\bm\gamma}\bm{v} = \zm_{d\times1}$).
In that respect, $\bm{L}$ and $\bm{M}$ behave as ``indicator matrices'' for the shares and masks on which depend individual probes.

\subsection{Masking with linear randomness complexity~\cite[§5]{DBLP:conf/crypto/BelaidBPPTV17}}
\label{sec:lin_rand}
The second scheme that we consider is defined by a single $d$-NI
multiplication gadget over $\mathbb{K}$ that has $(d+1)$ input and output shares.
An instantiation depends on a matrix
${\bm\gamma} \in \mathbb{K}^{(d+1)\times d}$ whose rows sum to zero, \ie, such that $\sum_{i=0}^{d} {\bm\gamma}_i = \zm_{1\times d}$.\footnote{Note that for convenience in the subsequent share definitions and consistency
with the notation of~\cite{DBLP:conf/crypto/BelaidBPPTV17}, the row
index of ${\bm\gamma}$ starts from zero and not one.} This lets us defining the output shares as:
\begin{itemize}
\item $c_i = a_0b_i + \sum_{j = 1}^d({\bm\gamma}_{i,j}r_j + a_jb_i)$, $0 \leq i \leq d$,
\end{itemize}
where again $a = \sum_{i=0}^d a_i$, $b = \sum_{i=0}^d b_i$ are the shared multiplicands and the $r_i$s are arbitrary values.

Belaïd \etal show that a necessary and sufficient condition for their resulting gadget to be $d$-NI is that ${\bm\gamma}$ satisfies a condition similar to \cref{cond41}, stated below.

\setcounter{condsec}{5}\setcounter{condition}{0}
\begin{condition}[\cite{DBLP:conf/crypto/BelaidBPPTV17}]\label{cond51}
Let ${\bm\gamma} \in \mathbb{K}^{(d+1)\times d}$ $\ell$, $\diag_{{\bm\gamma},j}$, $\trig_d$, $\trig_{{\bm\gamma},j}$ be as in \cref{cond41} and
$\mathbb{K}(\omega_0,\ldots,\omega_d)$ be the field of rational fractions over indeterminates $\omega_0,\ldots,\omega_d$.
One defines $\bm{L'} \in
\mathbb{K}(\omega_0,\ldots,\omega_d)^{(d+1)\times\ell}$ and $\bm{M}'_{\bm\gamma}
\in \mathbb{K}^{d\times\ell}$ as:
\[\begin{matrix}
  \bm{L'} &=& \bigg( &
    \begin{matrix}
      1 \\ \zm_{d\times 1}
    \end{matrix}&\begin{matrix}
      \zm_{1\times d} \\ \id_d
    \end{matrix}&\begin{matrix}
      \zm_{1\times d} \\ \zm_{d\times d}
    \end{matrix}&\begin{matrix}
      \zm_{1\times d} \\ \omega_0\id_d
    \end{matrix}&\begin{matrix}
      \zm_{1\times d} \\ \omega_1\id_d
    \end{matrix}&\cdots&\begin{matrix}
      \zm_{1\times d} \\ \omega_d\id_d
    \end{matrix}&\begin{matrix}
      \omega_0\om_{1\times d} \\ \omega_0\trig_d
    \end{matrix}&\begin{matrix}
      \omega_1\om_{1\times d} \\ \omega_1\trig_d
    \end{matrix}&\cdots&\begin{matrix}
      \omega_d\om_{1\times d} \\ \omega_d\trig_d
    \end{matrix}&
  \bigg), \\[6 mm]
  \bm{M'}_{\bm\gamma} &=& ( &
    \zm_{d\times 1} &
    \zm_{d\times d} &
    \id_d &
    \diag_{{\bm\gamma},0} &
    \diag_{{\bm\gamma},1} &
    \cdots &
    \diag_{{\bm\gamma},d} &
    \trig_{{\bm\gamma},0} &
    \trig_{{\bm\gamma},1} &
    \cdots &
    \trig_{{\bm\gamma},d} &
  ).
\end{matrix}\]
Then ${\bm\gamma}$ is said to satisfy \cref{cond51}
if for any vector $\bm{v} \in \mathbb{K}^\ell$ of Hamming weight
$\wt(\bm{v}) \leq d$ such that $\bm{L'}\bm{v}$ contains
no zero coefficient, then $\bm{M'}_{\bm\gamma}\bm{v}\neq\zm_{d\times 1}$.
\end{condition}

Note that as $\mathbb{K}$ is a subfield of $\mathbb{K}(\omega_0,\ldots,\omega_d)$
(\viz the field of its constants), the product $\bm{L'}\bm{v}$ is well-defined.
Also, again by contraposition, \cref{cond51} can be expressed as:

\begin{equation}
\label{eqn:equiv51}
\bm{v} \in \ker(\bm{M'}_{\bm\gamma}) \wedge \wt(\bm{v}) \leq d
\Rightarrow \wt(\bm{L'}\bm{v}) < d + 1.
\end{equation}

\section{Simplifying and unifying the conditions}
\label{sec:simp}

In this section, we describe a few simplifications and consolidations of the
correctness and safety for the two schemes described in the previous section.
These simplifications are important for our analytical and algorithmic
results, and the consolidations of the two schemes allow for ease in
presentation.

Specifically, we develop three related conditions $\safe$, $\safe'$, and
$\safe''$, on the matrices
$\bm{M}_{\bm\gamma}$, $\bm{L}_d$, $\bm{M}'_{\bm\gamma}$, and
$\bm{L}'_d$ defined in \cref{cond41,cond51}, such that the safety of
the masking schemes is guaranteed when these conditions are true.
We prove that the first condition $\safe$ and the third condition
$\safe''$ are both exactly equivalent to the requirements of
\cref{cond41,cond51}. The second condition $\safe'$ is always a
\emph{sufficient} condition as it implies the other two, and it is
also \emph{necessary} under a very mild condition on the field size.

\subsection{Unifying $\bm{M}_{\bm\gamma}$ and $\bm{M'}_{\bm\gamma}$}

Recall the definitions of matrices $\bm{M}_{\bm\gamma}$ from
\cref{cond41} and $\bm{M'}_{\bm\gamma}$ from \cref{cond51}. These
are both $d\times \ell$ matrices (where $\ell = 2d^2+4d+1$) consisting
of zeros, ones, and entries from ${\bm\gamma}$. Moreover, $\bm{M}_{\bm\gamma}$
and $\bm{M'}_{\bm\gamma}$ are exactly the same except for in one
submatrix of $d$ columns: this submatrix is $\bm{T}_d$ in
$\bm{M}_{\bm\gamma}$ and $\bm{T}_{{\bm\gamma},0}$ in $\bm{M'}_{\bm\gamma}$.

We can unify these two matrices by considering, in the case of
\cref{cond41}, augmenting the ${\bm\gamma}$ matrix with an additional row of
1's at index 0. Then $\bm{T}_d = \bm{T}_{{\bm\gamma},0}$ and we can
consider only the second form of the matrix $\bm{M'}_{\bm\gamma}$.

Note that the corresponding matrices $\bm{L_{\bm\gamma}}$ and
$\bm{L'_{\bm\gamma}}$ from \cref{cond41,cond51} respectively are still
not identical, but the locations of non-zero entries (i.e., the
\emph{support}) in $\bm{L_{\bm\gamma}}$ and $\bm{L'_{\bm\gamma}}$ are the
same.

Now for both schemes, there is a single matrix
${\bm\gamma}\in\mathbb{K}^{(d+1)\times d}$ which determines their
\emph{correctness} (do the output shares always
correspond to the multiplication of the input value) and
\emph{safety} (is it possible for an attacker to learn any secrets with
at most $d$ probes).

To succinctly state the unified condition, we first define a simple
predicate $\zr$ for when a matrix $\bm{X}\in\K^{m\times
n}$ (or column vector $\bm{x}\in\K^m$)
has at least one row of zeros:
\[
\zr(\bm{X}) := \exists\,i \in \{1,\ldots,m\} \text{ s.t. }
\forall\,j \in \{1,\ldots,n\}, \bm{X}_{i,j} = 0.
\]

Based on the above discussion, we define the
following crucial predicate for the safety definition for two arbitrary
matrices $\bm{A}$ and $\bm{B}$ with the same number of columns:
\begin{equation}\label{eqn:safe}
\safe(\bm{A},\bm{B}) :=
  \forall\, \bm{v} \in \ker(\bm{A}) \text{ s.t. }
  \wt(\bm{v}) \le \rowdim(\bm{A}),\text{ then }
  \zr(\bm{B}\bm{v}).
\end{equation}
Typically we will have $\bm{A}=\bm{M}'_{\bm\gamma}$ and $\bm{B}$
is either $\bm{L}$ or $\bm{L'}$.

Now we can restate the correctness and safety conditions for the two
schemes. The following propositions follow directly from the definitions and
discussions so far.

\begin{proposition}\label{prop:scheme4}
  For ${\bm\gamma}\in\mathbb{K}^{(d+1)\times d}$, the scheme of
  \cref{sec:lin_bilin} is correct and safe if and only if
  the following conditions are met, where
  $\displaystyle
    {\bm\delta} = \begin{pmatrix}\bm{2}_{1\times d}\footnotemark \\ \om_{d\times d}\end{pmatrix} -
    {\bm\gamma}
  $\footnotetext{In fields of characteristic 2, the matrix $\bm{2}_{1\times d}$ is
  actually $\zm_{1\times d}$.}:
  \begin{enumerate}[(1)]
    \item ${\bm\gamma}_{0,j} = 1\text{ for all }j \in \{1,\ldots,d\}$
    \item $\safe(\bm{M'}_{\bm\gamma},\bm{L})$
    \item $\safe(\bm{M'}_{\bm\delta},\bm{L})$
  \end{enumerate}
\end{proposition}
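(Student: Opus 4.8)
The plan is to derive this proposition directly from \cref{cond41}, by carefully tracking the three places where the unification of \cref{sec:simp} has altered the bookkeeping, and to argue that each of the three enumerated conditions corresponds to exactly one requirement that was implicit in the original statement. First I would recall that \cref{cond41} says the scheme of \cref{sec:lin_bilin} is $d$-SNI (hence, together with the correctness identity from the displayed multiplication equation, ``correct and safe'') if and only if \emph{both} ${\bm\gamma}$ and ${\bm\delta} = \om_{d\times d} - {\bm\gamma}$ satisfy the kernel/weight implication \eqref{eqn:equiv41}, namely $\bm{v}\in\ker(\bm{M}_{\bm\gamma}) \wedge \wt(\bm{v})\le d \Rightarrow \wt(\bm{L}\bm{v}) < d+1$, i.e. $\zr(\bm{L}\bm{v})$.

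Next I would handle the correctness part, which is the source of item~(1). The multiplication identity displayed in \cref{sec:lin_bilin} is an algebraic identity in the $a_i, b_i, r_i, s_i$ only when the coefficients multiplying the $b_j$ inside the $c_i$ sums and the $a_j$ inside the $c_{i+d}$ sums are all~$1$ — equivalently, only when ${\bm\delta}$ and ${\bm\gamma}$ have the shape fixed by the equation, which in the unified $(d+1)\times d$ indexing forces the extra row~$0$ of ${\bm\gamma}$ (the all-ones row introduced in \cref{sec:simp}) to genuinely be all ones, and dually makes row~$0$ of ${\bm\delta}$ equal to $\bm{2}_{1\times d}$ (which is $\zm_{1\times d}$ in characteristic~$2$, as the footnote notes). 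So correctness $\Leftrightarrow$ ${\bm\gamma}_{0,j}=1$ for all $j$, which is item~(1); this also makes the definition of ${\bm\delta}$ in the statement well-posed as the counterpart of the original ${\bm\delta}=\om-{\bm\gamma}$ on rows $1,\dots,d$.

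The remaining work is to match the two safety requirements. Here I would invoke the discussion of \cref{sec:simp}: once we augment ${\bm\gamma}$ with its index-$0$ row of ones, the submatrix $\bm{T}_d$ appearing in $\bm{M}_{\bm\gamma}$ becomes exactly $\bm{T}_{{\bm\gamma},0}$, so $\bm{M}_{\bm\gamma}$ (the $d\times\ell$ matrix of \cref{cond41}) coincides column-for-column with $\bm{M'}_{\bm\gamma}$ as defined in \cref{cond51}. Under this identification the implication \eqref{eqn:equiv41} for ${\bm\gamma}$ is literally $\safe(\bm{M'}_{\bm\gamma},\bm{L})$ — note $\rowdim(\bm{M'}_{\bm\gamma}) = d$, so the weight bound $\wt(\bm{v})\le\rowdim(\bm{A})$ in \eqref{eqn:safe} is exactly $\wt(\bm{v})\le d$. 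Running the identical argument with ${\bm\delta}$ in place of ${\bm\gamma}$ (legitimate since ${\bm\delta}$ also has all-ones row~$0$ once ${\bm\gamma}$ does, so the same $\bm{T}_d = \bm{T}_{{\bm\delta},0}$ collapse applies) gives that \eqref{eqn:equiv41} for ${\bm\delta}$ is $\safe(\bm{M'}_{\bm\delta},\bm{L})$; these are items~(3) and~(2). Conjoining the correctness equivalence with the two safety equivalences yields the ``if and only if'' of the proposition.

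The step I expect to be the main obstacle is the bookkeeping around item~(1) and the $\bm{T}_d$-collapse: one must check that augmenting ${\bm\gamma}$ by an all-ones row genuinely turns $\bm{M}_{\bm\gamma}$ into $\bm{M'}_{\bm\gamma}$ \emph{and} that this augmentation is exactly the content of requiring ${\bm\gamma}_{0,j}=1$, so that the three conditions are not only jointly sufficient but also each individually necessary — in particular that a ${\bm\gamma}$ failing~(1) really does break correctness rather than merely changing which $\bm{T}_{{\bm\gamma},0}$ one writes down. The rest is a direct transcription of \cref{cond41} through the unification already set up, and I would keep it brief.
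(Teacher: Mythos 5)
Your proposal is correct and matches the paper's treatment: the paper offers no separate proof, stating that the proposition ``follows directly from the definitions and discussions'' of \cref{sec:simp}, i.e.\ exactly the unification you spell out (row~$0$ of ones making $\trig_d=\trig_{{\bm\gamma},0}$ so that $\bm{M}_{\bm\gamma}=\bm{M'}_{\bm\gamma}$, item~(1) as the correctness bookkeeping, and items~(2)--(3) as \cref{cond41} in the form \eqref{eqn:equiv41} for ${\bm\gamma}$ and ${\bm\delta}$). Your extra care about ${\bm\delta}$ inheriting the all-ones row and about $\rowdim(\bm{M'}_{\bm\gamma})=d$ matching the weight bound in \eqref{eqn:safe} is exactly the implicit bookkeeping the paper leaves to the reader.
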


\begin{proposition}\label{prop:scheme5}
  For ${\bm\gamma}\in\mathbb{K}^{(d+1)\times d}$, the scheme of
  \cref{sec:lin_rand} is correct and safe if and only if the
  following conditions are met:
  \begin{enumerate}[(1)]
    \item $\sum_{i=0}^{d} {\bm\gamma}_{i} = \bm{0}_{1\times d}$ \\
    \item $\safe(\bm{M'}_{\bm\gamma},\bm{L'})$
  \end{enumerate}
\end{proposition}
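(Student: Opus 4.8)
The plan is to derive \cref{prop:scheme4,prop:scheme5} directly from the statements of \cref{cond41,cond51} together with the two correctness equalities spelled out in \cref{sec:lin_bilin,sec:lin_rand}, keeping careful track of the re-indexing and the row-augmentation described just above \cref{eqn:safe}. I will treat correctness and safety separately.

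\textbf{Correctness.} For the scheme of \cref{sec:lin_rand}, summing the output shares $c_i = a_0 b_i + \sum_{j=1}^d(\bm\gamma_{i,j} r_j + a_j b_i)$ over $i \in \{0,\dots,d\}$ gives $\sum_i c_i = (\sum_i b_i)(\sum_j a_j) + \sum_{j=1}^d r_j (\sum_{i=0}^d \bm\gamma_{i,j})$, which equals $ab$ for all choices of the $r_j$ if and only if every column sum $\sum_{i=0}^d \bm\gamma_{i,j}$ vanishes; this is exactly condition~(1) of \cref{prop:scheme5}. For the scheme of \cref{sec:lin_bilin}, I would start from the displayed three-term identity for $a\cdot b$ and expand the products $-\sum_i r_i(\dots)$ and $-\sum_i s_i(\dots)$, collecting the coefficient of each monomial $r_i s_j$: these come from the $\bm\delta_{i,j}s_j$ term in the first sum and the $\bm\gamma_{i,j}r_j$ term (after the stray typo $s_j\mapsto r_j$ is corrected) in the second, and they cancel pairwise precisely when $\bm\gamma_{i,j} + \bm\delta_{i,j} = 1$, i.e. (in characteristic $\ne 2$, $=2$) when the $d\times d$ lower block of $\bm\gamma + \bm\delta$ is $\om_{d\times d}$; separately, the monomials $r_i b_0$, $r_i b_j$, $s_i a_0$, $s_i a_j$ cancel unconditionally by construction of the $c_i$. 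The remaining requirement is that the row-$0$ behaviour be consistent: the original $\bm\gamma\in\K^{d\times d}$ of \cref{cond41} is here embedded as rows $1,\dots,d$ of the $(d+1)\times d$ matrix, and the augmented row $0$ is forced to be all-ones — that is condition~(1) of \cref{prop:scheme4} — which is exactly what makes $\bm\delta$ well-defined via the stated formula with top row $\bm 2_{1\times d}$ (equivalently $\zm$ in characteristic $2$), so that $\bm\delta$ also has all-ones in row $0$. This is the bookkeeping needed to match the definition of $\bm\delta = \om_{d\times d} - \bm\gamma$ from \cref{sec:lin_bilin} with the augmented convention.

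\textbf{Safety.} Here I would simply translate. By the discussion preceding \eqref{eqn:safe}, augmenting $\bm\gamma$ with the all-ones row $0$ makes $\trig_d = \trig_{\bm\gamma,0}$, so $\bm M_{\bm\gamma}$ of \cref{cond41} becomes literally $\bm M'_{\bm\gamma}$ of \cref{cond51}. The contrapositive form \eqref{eqn:equiv41} says: $\bm v\in\ker(\bm M'_{\bm\gamma})$ with $\wt(\bm v)\le d = \rowdim(\bm M'_{\bm\gamma})$ implies $\wt(\bm L\bm v) < d+1$, i.e. $\bm L\bm v$ has a zero coordinate, i.e. $\zr(\bm L\bm v)$ — this is verbatim $\safe(\bm M'_{\bm\gamma},\bm L)$ from \eqref{eqn:safe}. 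Belaïd \etal's condition for \cref{sec:lin_bilin} asks this for \emph{both} $\bm\gamma$ and $\bm\delta$ (since the scheduling of the $r$-branch uses $\bm\delta$ and the $s$-branch uses $\bm\gamma$, each producing its own family of probes, and the $d$-SNI proof needs both), giving conditions~(2) and~(3) of \cref{prop:scheme4}; note $\bm L$ is $\bm\gamma$-independent so the same matrix appears in both. For \cref{sec:lin_rand}, \eqref{eqn:equiv51} is identically $\safe(\bm M'_{\bm\gamma},\bm L')$, giving condition~(2) of \cref{prop:scheme5}. In each case the ``if and only if'' is inherited from the fact that Belaïd \etal proved their conditions are \emph{necessary and sufficient}, so nothing new needs to be reproven on the masking-model side.

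\textbf{Main obstacle.} The mathematical content is light; the only real care is in the index/augmentation translation — verifying that embedding the CRYPTO~2017 $d\times d$ matrix $\bm\gamma$ as rows $1\ldots d$ of a $(d+1)\times d$ matrix with forced all-ones row $0$ (a) reproduces $\bm M_{\bm\gamma}$ as a special case of $\bm M'_{\bm\gamma}$, (b) is compatible with the correctness requirement on row $0$, and (c) correctly yields the relation between $\bm\gamma$ and $\bm\delta$ including the characteristic-$2$ subtlety in the top row of $\bm\delta$. Once that identification is pinned down, both propositions ``follow directly from the definitions and discussions so far'', as claimed.
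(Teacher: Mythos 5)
Your argument is exactly the route the paper takes: the paper states that \cref{prop:scheme5} ``follows directly from the definitions and discussions so far,'' and your write-up simply makes this explicit — correctness by summing the output shares $c_i$ and observing that the $r_j$-terms vanish for all $r_j$ iff the columns of ${\bm\gamma}$ sum to zero (condition (1)), and safety by noting that the contrapositive form \eqref{eqn:equiv51} of \cref{cond51} is verbatim $\safe(\bm{M'}_{\bm\gamma},\bm{L'})$ since $\zr(\bm{L'}\bm{v})\Leftrightarrow\wt(\bm{L'}\bm{v})<d+1$, with the ``if and only if'' inherited from Belaïd \etal's necessity-and-sufficiency result. This is correct and matches the paper's (implicit) proof.
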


\subsection{Equivalent condition with kernel bases}

Next we develop a condition similar to the definition of
$\safe(\bm{A},\bm{B})$
as defined in \eqref{eqn:safe} above, but in terms of kernel bases rather
than individual vectors. This modified condition is equivalent under a
mild requirement on the size of the field $\K$.

The general idea is that rather than considering all matrix-vector
products
$\bm{B}\bm{v}$, where $\bm{v}$ is a $d$-sparse vector in the
right kernel of $\bm{A}$,
we consider instead the kernel basis for a size-$d$ subset of $\bm{A}$'s
columns, and multiply the corresponding columns in $\bm{B}$ times
this basis.
Specifying this condition requires some additional notation which will
also be useful later on.

Let $\kbasis(\bm{X})$ denote a basis of the right kernel of
$\bm{X}$. That is, any vector $\bm{v}\in\ker(\bm{X})$ is a
linear combination of the columns of $\kbasis(\bm{X})$.

Let $[c_1,\ldots,c_k]$ be a list of $k$ distinct column
indices, where each $1\le c_i \le \ell$. Selecting only these columns
from any matrix with $\ell$ columns is a linear operator corresponding
to a \emph{selection matrix} $\bm{P}\in\{0,1\}^{\ell\times k}$, where
$\bm{P}_{i,j}=1$ iff $c_j=i$. Define $\select^\ell_m$ as
the set of all $\ell\times m$ selection matrices. That is,
$\select^\ell_m$ consists of all $\{0,1\}$-matrices with $\ell$ rows
and at most $m$ columns, where there is a single $1$ in each column and
no two $1$s in the same row.

Note that the product of a selection matrix and its
transpose is an identity matrix with some rows and columns set to zero.
For any matrix (or vector) $\bm{X}\in\K^{m\times n}$
with at most $k$ non-zero rows,
there is a selection matrix
$\bm{P}\in\select^{k}_m$ such that
$\bm{P}\bm{P}^T\bm{X} = \bm{X}$.

The equivalent condition to \eqref{eqn:safe} that we consider now is
formed by multiplying some subset of $\bm{B}$'s columns times a
kernel basis of the same subset of $\bm{A}$'s columns:
\begin{equation}\label{eqn:safeprime}
  \safe'(\bm{A},\bm{B})\ := \ 
    \forall\, \bm{P} \in \select^\ell_{\rowdim(\bm{A})},\ 
    \zr\mathopen{}\left(\bm{B}\bm{P}\cdot
    \kbasis\mathopen{}\left(\bm{A}\bm{P}\right)\mathclose{}
    \right)\mathclose{}.
\end{equation}

One direction of the
equivalence is straightforward, and the other depends on the
Schwartz-Zippel lemma and therefore on the size of the field.
Even so, the field size requirement here is very mild; indeed the field
is sufficiently large in all cases where we are aware of any valid
constructions of the schemes.

\begin{theorem}\label{thm:safeprime}
  For any $\bm{A}\in\K^{n\times\ell}$ and
  $\bm{B}\in\K^{m\times\ell}$, we have
  $\safe'(\bm{A},\bm{B})\Rightarrow\safe(\bm{A},\bm{B})$.
  If $\K$ has at least $m+1$ distinct elements, then
  $\safe'(\bm{A},\bm{B})\Leftarrow\safe(\bm{A},\bm{B})$
  also.
\end{theorem}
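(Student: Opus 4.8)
The plan is to unwind both $\safe$ and $\safe'$ in terms of the same combinatorial data — a size-$d$ (where $d = \rowdim(\bm{A})$) choice of columns — and then observe that the only genuine gap between them is moving from ``there exists a $d$-sparse kernel vector supported on those columns making $\bm{B}\bm{v}$ have no zero row'' to ``the whole kernel of that column-restriction of $\bm{A}$, pushed through $\bm{B}$, avoids a zero row.'' The forward implication $\safe'\Rightarrow\safe$ is the easy direction and requires no hypothesis on $\K$; the reverse implication is where the Schwartz--Zippel lemma enters, and that is the main obstacle.

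For $\safe'\Rightarrow\safe$: take $\bm{v}\in\ker(\bm{A})$ with $\wt(\bm{v})\le d = \rowdim(\bm{A})$. Since $\bm{v}$ has at most $d$ nonzero entries, there is a selection matrix $\bm{P}\in\select^\ell_d$ whose columns index (a superset of) the support of $\bm{v}$, so $\bm{P}\bm{P}^T\bm{v} = \bm{v}$; write $\bm{v} = \bm{P}\bm{w}$ for $\bm{w} = \bm{P}^T\bm{v}$. From $\bm{A}\bm{v} = \zm$ we get $(\bm{A}\bm{P})\bm{w} = \zm$, so $\bm{w}\in\ker(\bm{A}\bm{P})$ and hence $\bm{w}$ is a linear combination of the columns of $\kbasis(\bm{A}\bm{P})$; say $\bm{w} = \kbasis(\bm{A}\bm{P})\,\bm{u}$. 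Then $\bm{B}\bm{v} = (\bm{B}\bm{P})\bm{w} = \bigl(\bm{B}\bm{P}\cdot\kbasis(\bm{A}\bm{P})\bigr)\bm{u}$. By hypothesis $\safe'(\bm{A},\bm{B})$, the matrix $\bm{B}\bm{P}\cdot\kbasis(\bm{A}\bm{P})$ has some all-zero row $i$; multiplying by $\bm{u}$ on the right keeps that row zero, so $\zr(\bm{B}\bm{v})$ holds, which is exactly what $\safe$ demands.

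For the converse $\safe\Rightarrow\safe'$ (assuming $\#\K\ge m+1$): fix $\bm{P}\in\select^\ell_d$ and let $\bm{K} = \kbasis(\bm{A}\bm{P})$, with $t$ columns, and $\bm{N} = \bm{B}\bm{P}\cdot\bm{K}\in\K^{m\times t}$; we must show $\zr(\bm{N})$. Suppose not, so no row of $\bm{N}$ is identically zero: for each $i\in\{1,\dots,m\}$ the $i$-th row $\bm{N}_{i,\cdot}$ is a nonzero linear functional on $\K^t$, i.e. a nonzero polynomial (degree $1$, in fact linear) in $t$ formal variables. Consider the product $F = \prod_{i=1}^m \bm{N}_{i,\cdot}$, a nonzero polynomial of total degree $\le m$ in $t$ variables. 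By the Schwartz--Zippel lemma, since $\K$ has at least $m+1$ elements, there is a point $\bm{u}\in\K^t$ with $F(\bm{u})\neq 0$, i.e. $\bm{N}\bm{u}$ has no zero entry. Now set $\bm{w} = \bm{K}\bm{u}$ and $\bm{v} = \bm{P}\bm{w}$. Then $\bm{A}\bm{v} = (\bm{A}\bm{P})\bm{K}\bm{u} = \zm$ since the columns of $\bm{K}$ lie in $\ker(\bm{A}\bm{P})$; and $\bm{v}$ is supported on the $\le d$ columns picked out by $\bm{P}$, so $\wt(\bm{v})\le d = \rowdim(\bm{A})$. By $\safe(\bm{A},\bm{B})$ we must have $\zr(\bm{B}\bm{v})$, i.e. some row of $\bm{B}\bm{v} = \bm{B}\bm{P}\bm{K}\bm{u} = \bm{N}\bm{u}$ is zero — contradicting $F(\bm{u})\neq 0$. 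Hence $\zr(\bm{N})$, and since $\bm{P}$ was arbitrary, $\safe'(\bm{A},\bm{B})$ holds.

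The main obstacle — and the only real subtlety — is the bound in the Schwartz--Zippel step: one must check that $\deg F \le m$ (each $\bm{N}_{i,\cdot}$ is linear, there are $m$ of them) and that $m+1$ distinct field elements suffice to make the univariate-restriction argument go through; this is why the hypothesis is ``$\#\K\ge m+1$'' and why $\bm{B}$'s row count $m$ is the relevant parameter rather than anything involving $\bm{A}$ or $\ell$. A minor bookkeeping point to handle carefully: selection matrices are allowed to have fewer than $d$ columns (so $\bm{v}$ of weight strictly less than $d$ is covered), and one should note that if $\bm{A}\bm{P}$ has trivial kernel then $\kbasis(\bm{A}\bm{P})$ is empty and $\zr$ of an $m\times 0$ matrix holds vacuously, so such $\bm{P}$ contribute nothing — consistent with both sides.
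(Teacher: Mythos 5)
Your proposal is correct and follows essentially the same route as the paper's proof: the forward direction by writing a sparse kernel vector as $\bm{P}$ times a combination of the columns of $\kbasis(\bm{A}\bm{P})$, and the converse by multiplying the $m$ linear forms in the rows of $\bm{B}\bm{P}\cdot\kbasis(\bm{A}\bm{P})$ and applying Schwartz--Zippel with $\#\K\ge m+1$ to produce a full-weight witness vector. The only difference is cosmetic (contradiction within a fixed column selection versus the paper's contrapositive), so nothing further is needed.
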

\begin{proof}
  We begin with the ``$\Rightarrow$'' direction.

  Let $\bm{v}$ be a vector satisfying the conditions of
  $\safe(\bm{A},\bm{B})$; that is, $\bm{v}\in\ker\bm{A}$
  and $\wt(\bm{v})\le\rowdim(\bm{A})$. The latter fact means that
  there exists $\bm{P}\in\select^\ell_{\rowdim(\bm{A})}$
  such that
  $\bm{P}\bm{P}^T\bm{v}=\bm{v}$.

  Because $\bm{A}\bm{v}=\zm$, we then have
  $(\bm{A}\bm{P})(\bm{P}^T\bm{v)}=\zm$, which means
  that the vector $\bm{P}^T\bm{v}$ is a linear combination of
  the columns of $\kbasis(\bm{A}\bm{P})$.

  The condition $\safe(\bm{A},\bm{B})$ concerns the
  matrix-vector product $\bm{B}\bm{v}$, which equals
  $\bm{B}\bm{P}\bm{P}^T\bm{v}$. From above, we know that
  this is a linear combination of the columns in the matrix
  $\bm{B}\bm{P}\cdot\kbasis(\bm{A}\bm{P})$. By the
  assumption that $\safe'(\bm{A},\bm{B})$, this matrix contains a
  zero row, and therefore any linear combination of its columns also
  contains a zero row; hence $\zr(\bm{B}\bm{v})$.

  For the ``$\Leftarrow$'' direction, we prove using the contrapositive.
  Assume there exists some selection of columns
  $\bm{P}\in\select^\ell_n$ such that
  $\neg\zr(\bm{B}\bm{P}\cdot\kbasis(\bm{A}\bm{P}))$.
  We need to show
  that $\neg\safe(\bm{A},\bm{B})$.

  Suppose the column dimension of $\kbasis(\bm{A}\bm{P})$ (\ie, the
  nullity of $\bm{A}\bm{P}$) is $k$, and let $\bm{x}$ be
  a column vector of $k$ indeterminates $x_1,\ldots,x_k$.
  Now consider the matrix-vector
  product
  $\bm{B}\bm{P}\cdot\kbasis(\bm{A}\bm{P})\cdot\bm{x}$.
  This is a
  column vector of dimension $m$ consisting of degree-$1$ polynomials
  in the $k$ indeterminates. Furthermore, none of these polynomials is
  zero because of the assumption
  $\neg\zr(\bm{B}\bm{P}\cdot\kbasis(\bm{A}\bm{P}))$.

  The product of the $m$ polynomials in
  $\bm{B}\bm{P}\cdot\kbasis(\bm{A}\bm{P})\cdot\bm{x}$
  is a single
  non-zero polynomial in $k$ variables with total degree $m$.
  By the Schwartz-Zippel-DeMillo-Lipton lemma
  \cite[Cor.~1]{DBLP:journals/jacm/Schwartz80},
  and because $\#\K > m$, there must exist some assignment of the $k$
  variables to values in $\K$ such that this product polynomial is non-zero.
  That is, there exists some column vector $\bm{w}\in\K^k$ such that
  $
  \wt(\bm{B}\bm{P}\cdot\kbasis(\bm{A}\bm{P})\cdot\bm{w}) = m.
  $

  Because $\kbasis(\bm{A}\bm{P})\cdot\bm{w}\in\K^n$,
  there is an $n$-sparse vector $\bm{v}\in\K^\ell$
  such that
  $\bm{P}^T\bm{v}=\kbasis(\bm{A}\bm{P})\cdot\bm{w}$.
  This vector $\bm{v}$ shows that
  $\safe(\bm{A},\bm{B})$ is false. Namely,
  $\bm{v} \in \ker(\bm{A})$ because
  $\bm{A}\bm{v}=(\bm{A}\bm{P})(\bm{P}^T\bm{v})=\zm$;
  it has low weight $\wt(\bm{v}) \le n$; and
  $\bm{B}\bm{v}=(\bm{B}\bm{P})(\bm{P}^T\bm{v})$ is
  of full weight $m$ from the previous paragraph.
  \qed
\end{proof}

\subsection{Eliminating rows and columns}

The third simplification to the correctness and safety conditions of the
two masking schemes that we develop is an equivalent condition to
$\safe(\bm{M}'_{\bm\gamma},\bm{N})$
that depends on less than half of the columns in the original
matrices. The intuition is that most of the columns of these matrices
have weight 1, and thus those probes in the masking scheme do not
gain the attacker any real advantage. So we can focus on only
the parts of $\bm{M}'_{\bm\gamma}$ and $\bm{N}$ whose
columns have weight greater than 1. We first develop some new
terminology to talk about these submatrices, then prove a lemma which
shows how to eliminate columns from ${\bm\gamma}$ corresponding to the
weight-one probes, and finally state and prove the equivalent condition
$\safe''$.

So far the schemes are both defined by a matrix ${\bm\gamma}$ with $d+1$ rows
and $d$ columns. In fact, the definitions of matrices $\bm{M}_{\bm\gamma}$,
$\bm{M'}_{\bm\gamma}$, $\bm{L}$, and $\bm{L'}$ from
\cref{cond41,cond51} generalize to any rectangular matrix
${\bm\gamma}\in\K^{(d+1)\times n}$.
If ${\bm\gamma}$ has $d+1$ rows and $n$ columns,
then $\bm{M}_{\bm\gamma}$ and $\bm{M}'_{\bm\gamma}$ both have $n$ rows,
while $\bm{L}_n$ and $\bm{L}'_n$ have $n+1$ rows, and all four
matrices have $\ell_n=2dn+4n+1$ columns.

We focus on the bottom-right
$n\times(dn+n)$ submatrix of each $\bm{M}'_{\bm\gamma}$, $\bm{L}_n$
and $\bm{L}'_n$, which we call the
``triangular part'' of each. Formally, we define a linear operator
$\tpart$ such that, for any matrix $\bm{A}$
with $n$ or $n+1$ rows and $2nd+4d+1$ columns,
$\tpart(\bm{A})$ consists of the bottom-right $n\times(dn+n)$
submatrix of $\bm{A}$.

In summary, we have:

\[\begin{matrix}
  \bm{L}_n &=& \bigg( &
    \begin{matrix}
      1 \\ \zm_{n\times 1}
    \end{matrix}&\begin{matrix}
      \zm_{1\times n} \\ \id_n
    \end{matrix}&\begin{matrix}
      \zm_{1\times n} \\ \zm_{n\times n}
    \end{matrix}&\begin{matrix}
      \zm_{1\times n} \\ \id_n
    \end{matrix}&\begin{matrix}
      \zm_{1\times n} \\ \id_n
    \end{matrix}&\cdots&\begin{matrix}
      \zm_{1\times n} \\ \id_n
    \end{matrix}&\begin{matrix}
      \om_{1\times n} \\ \tikzmark{leftL}{$\trig_n$}
    \end{matrix}&\begin{matrix}
      \om_{1\times n} \\ \trig_n
    \end{matrix}&\cdots&\begin{matrix}
      \om_{1\times n} \\ \tikzmark{rightL}{$\trig_n$}
    \end{matrix}&
  \bigg),\\[28pt]
  \bm{L'}_n &=& \bigg( &
    \begin{matrix}
      1 \\ \zm_{n\times 1}
    \end{matrix}&\begin{matrix}
      \zm_{1\times n} \\ \id_n
    \end{matrix}&\begin{matrix}
      \zm_{1\times n} \\ \zm_{n\times n}
    \end{matrix}&\begin{matrix}
      \zm_{1\times n} \\ \omega_0\id_n
    \end{matrix}&\begin{matrix}
      \zm_{1\times n} \\ \omega_1\id_n
    \end{matrix}&\cdots&\begin{matrix}
      \zm_{1\times n} \\ \omega_d\id_n
    \end{matrix}&\begin{matrix}
      \omega_0\om_{1\times n} \\ \tikzmark{leftLp}{$\omega_0\trig_n$}
    \end{matrix}&\begin{matrix}
      \omega_1\om_{1\times n} \\ \omega_1\trig_n
    \end{matrix}&\cdots&\begin{matrix}
      \omega_d\om_{1\times n} \\ \tikzmark{rightLp}{$\omega_d\trig_n$}
    \end{matrix}&
  \bigg),\\[34pt]
  \bm{M'}_{\bm\gamma} &=& ( &
    \zm_{n\times 1} &
    \zm_{n\times n} &
    \id_n &
    \diag_{{\bm\gamma},0} &
    \diag_{{\bm\gamma},1} &
    \cdots &
    \diag_{{\bm\gamma},d} &
    \tikzmark{leftMp}{$\trig_{{\bm\gamma},0}$} &
    \trig_{{\bm\gamma},1} &
    \cdots &
    \tikzmark{rightMp}{$\trig_{{\bm\gamma},d}$} &
  ).\\[15pt]
\end{matrix}
\Highlight{leftL}{rightL}{$\tpart(\bm{L}_n)$}
\Highlight{leftLp}{rightLp}{$\tpart(\bm{L}'_n)$}
\Highlight{leftMp}{rightMp}{$\tpart(\bm{M}'_{\bm\gamma})$}
\]

Notice that the matrices $\bm{L}_n$ and $\bm{L}'_n$ have some
different entries but the same non-zero locations or \emph{support};
for convenience
we denote by $\bm{N}_n$ any matrix with this same dimension and
support.

Inspecting the definition of $\bm{M'}_{\bm\gamma}$, we see that rows of
this matrix correspond to columns of ${\bm\gamma}$, and removing the $i$th
column of ${\bm\gamma}$ corresponds to removing a single row and $2d+4$
columns from each of
$\bm{M'}_{\bm\gamma}$ and $\bm{N}$.

Notice also that the columns of $\bm{M}'_{\bm\gamma}$ and of
$\bm{L}_n$ which are not in the triangular parts all have weight at
most one. This means, as we show in the following technical lemma, that
the effect of any such column choice can be eliminated by
removing one row each from $\bm{M}'_{\bm\gamma}$ and $\bm{L}_n$. In
terms of masking schemes, this means that a single probe corresponding
to these non-triangular parts allows the adversary to cancel at most one
random value and to learn at most one share. Because the number of
shares is $d+1$ in a scheme allowing $d$ probes, this results in no
advantage for the adversary.

\begin{lemma}\label{lem:removecol}
  Let ${\bm\gamma}\in\K^{(d+1)\times n}$, $\bm{M}'_{\bm\gamma}$ and
  $\bm{N}_n$ be as above.
  Suppose $\bm{u}\in\K^{\ell_n}$ is a vector with
  $\wt(\bm{u})=1$ whose single non-zero entry is between index $2$ and
  $dn+3n+1$ inclusive, and $\bm{v}\in\K^{\ell_n}$ is any other
  vector.
  Then there exists a selection matrix $\bm{P}\in\select^n_{n-1}$
  and another vector $\bm{w}\in\K^{\ell_{n-1}}$ with
  $\wt(\bm{w})\le\wt(\bm{v})$ such that
  $$
    \wt(\bm{M}'_{{\bm\gamma}\bm{P}}\bm{w}) \le
      \wt(\bm{M}'_{{\bm\gamma}}(\bm{u}+\bm{v}))
    \quad\text{and}\quad
    \wt(\bm{N}_{n-1}\bm{w}) \ge
      \wt(\bm{N}_n(\bm{u}+\bm{v})) - 1.
  $$
\end{lemma}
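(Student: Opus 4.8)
The key observation is that the single non-zero entry of $\bm{u}$, say at index $c$ with value $\alpha\in\K\setminus\{0\}$, falls into one of three column-blocks of $\bm{M}'_{\bm\gamma}$ and $\bm{N}_n$: the $\bm{I}_n$ block (columns $n+2,\dots,2n+1$), or one of the $d+1$ blocks $\diag_{{\bm\gamma},j}$ / $\omega_j\bm{I}_n$ (columns $2n+2,\dots,dn+3n+1$). In every case, the corresponding column of $\bm{M}'_{\bm\gamma}$ is either zero or a single scalar multiple of a standard basis vector $\bm{e}_i$ (where $i$ is determined by $c\bmod n$), and the corresponding column of $\bm{N}_n$ is likewise either $\bm{e}_i$ or $\omega_j\bm{e}_i$ or $\zm$. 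So $\bm{u}$ contributes to $\bm{M}'_{\bm\gamma}(\bm{u}+\bm{v})$ only in coordinate $i$, and to $\bm{N}_n(\bm{u}+\bm{v})$ only in coordinate $i$ as well (the index shift between $\bm{M}'_{\bm\gamma}$ having $n$ rows and $\bm{N}_n$ having $n+1$ rows is handled by noting $\bm{N}_n$'s top row is unaffected by these columns).

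The plan is then: let $\bm{P}\in\select^n_{n-1}$ be the selection matrix that deletes column $i$ of ${\bm\gamma}$, so that $\bm{M}'_{{\bm\gamma}\bm{P}}$ is obtained from $\bm{M}'_{\bm\gamma}$ by deleting row $i$ together with the $2d+4$ associated columns, and similarly $\bm{N}_{n-1}$ from $\bm{N}_n$. Take $\bm{w}$ to be the image of $\bm{v}$ under the induced column-deletion map on $\K^{\ell_n}\to\K^{\ell_{n-1}}$ (i.e.\ $\bm{w}=\bm{Q}^T\bm{v}$ for the appropriate $\ell_n\times\ell_{n-1}$ selection-type matrix $\bm{Q}$ extending $\bm{P}$). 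Immediately $\wt(\bm{w})\le\wt(\bm{v})$. For the first inequality: deleting row $i$ of $\bm{M}'_{\bm\gamma}$ can only decrease the Hamming weight of the product, and $\bm{M}'_{{\bm\gamma}\bm{P}}\bm{w}$ equals the vector obtained from $\bm{M}'_{\bm\gamma}\bm{v}$ by deleting coordinate $i$ — but since the $\bm{u}$-contribution to $\bm{M}'_{\bm\gamma}(\bm{u}+\bm{v})$ lives only in coordinate $i$, outside coordinate $i$ we have $\bm{M}'_{\bm\gamma}(\bm{u}+\bm{v})$ agreeing with $\bm{M}'_{\bm\gamma}\bm{v}$, giving $\wt(\bm{M}'_{{\bm\gamma}\bm{P}}\bm{w})\le\wt(\bm{M}'_{\bm\gamma}(\bm{u}+\bm{v}))$. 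For the second inequality: symmetrically, $\bm{N}_{n-1}\bm{w}$ is $\bm{N}_n\bm{v}$ with coordinate $i$ deleted, which agrees with $\bm{N}_n(\bm{u}+\bm{v})$ outside coordinate $i$; deleting a single coordinate drops the weight by at most $1$, so $\wt(\bm{N}_{n-1}\bm{w})\ge\wt(\bm{N}_n(\bm{u}+\bm{v}))-1$.

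The main obstacle — really the only place care is needed — is the bookkeeping of indices across the three cases and the differing row counts of $\bm{M}'_{\bm\gamma}$ ($n$ rows, indexing columns of ${\bm\gamma}$) versus $\bm{N}_n$ ($n+1$ rows, with the extra top row coming from the $\om_{1\times n}$ / $\omega_j\om_{1\times n}$ blocks which $\bm{u}$'s index range excludes). One must verify that in all three column-blocks the deleted column really is supported on a single row and that the row it is supported on is \emph{the same} row $i$ in (the lower block of) $\bm{L}_n$ / $\bm{L}'_n$ and in $\bm{M}'_{\bm\gamma}$; this is exactly why the hypothesis restricts the non-zero entry of $\bm{u}$ to indices $2,\dots,dn+3n+1$, excluding the first column and the triangular blocks where columns have weight $>1$. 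Once this alignment is checked, the weight inequalities are immediate from the fact that appending/deleting one coordinate changes Hamming weight by at most $1$, and that the $\bm{u}$-perturbation is confined to that one coordinate.
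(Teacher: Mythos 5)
Your setup (the weight-one column hit by $\bm{u}$ lives in a single row, delete that column of ${\bm\gamma}$ and hence that row of the matrices) is the same as the paper's, but your construction of $\bm{w}$ has a genuine gap. You take $\bm{w}$ to be the plain truncation $\bm{Q}^T\bm{v}$ and claim that $\bm{M}'_{{\bm\gamma}\bm{P}}\bm{w}$ (resp.\ $\bm{N}_{n-1}\bm{w}$) equals $\bm{M}'_{\bm\gamma}\bm{v}$ (resp.\ $\bm{N}_n\bm{v}$) with one coordinate deleted. That identity only holds if $\bm{v}$ vanishes on the $2d+4$ columns that disappear when passing from ${\bm\gamma}$ to ${\bm\gamma}\bm{P}$. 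Among those, the $d+1$ columns coming from the triangular blocks $\trig_{{\bm\gamma},k}$ (and the corresponding blocks of $\bm{N}_n$) have weight greater than one and remain non-zero after the row deletion, so truncation silently discards their contribution to the products. Concretely, take $n=2$, let $\bm{u}$ hit the second column of the $\id_n$ block (so the second row is deleted), and let $\bm{v}$ have its single non-zero entry on the second column of some $\trig_{{\bm\gamma},k}$: choosing the value of $\bm{u}$'s entry to cancel in row $2$, one gets $\wt(\bm{N}_2(\bm{u}+\bm{v}))=2$ while your $\bm{w}=\zm$ gives $\wt(\bm{N}_1\bm{w})=0<2-1$, violating the second inequality (the first can fail similarly in other configurations).

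This is exactly the part of the paper's proof you compressed into ``bookkeeping.'' After removing the row, the paper does \emph{not} just truncate $\bm{v}$: entries of $\bm{v}$ sitting on the $d+3$ deleted columns from the non-triangular part may be zeroed (those columns become zero columns once the row is gone), but entries on the $d+1$ deleted triangular columns must be \emph{transferred}: if the deleted row is the first one, the deleted triangular column equals $a$ times the very first column of the reduced matrices (for some non-zero top-row entry $a$), so the coefficient is divided by $a$ and added to the first entry; if a later row is deleted, the deleted triangular column coincides with the column immediately to its left, so the coefficient is added to the adjacent position. This redistribution keeps both products unchanged and keeps $\wt(\bm{w})\le\wt(\bm{v})$, which is what makes the two inequalities go through. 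Your argument as written omits this step, and without it the lemma's conclusion does not follow.
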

\begin{proof}
  Write $i$ for the index of the non-zero entry in $\bm{u}$. We can
  see that the $i$th column of $\bm{M}'_{\bm\gamma}$ and $\bm{N}_n$
  both have weight at most one. Indeed, for each
  $i\in\{2,\ldots,dn+3n+1\}$, there is a corresponding index
  $j\in\{1,\ldots,n\}$ such that the $i$th columns of
  $\bm{M}'_{\bm\gamma}$ and $\bm{N}_n$ are zero everywhere except
  possibly in row $j$ (provided that we continue to index the rows of
  $\bm{N}_n$ starting at 0).

  Removing the $j$th row from $\bm{M}'_{\bm\gamma}$ and $\bm{N}_n$
  results in two new matrices $\bm{A},\bm{B}$ (respectively)
  whose $i$th columns are both zero, and hence
  $\bm{A}\bm{u}=\zm$ and $\bm{B}\bm{u}=\zm$.
  This means that
  \begin{align*}
    \wt(\bm{A}\bm{v}) &= \wt(\bm{A}(\bm{u}+\bm{v}))
      \le \wt(\bm{M}'_{\bm\gamma}(\bm{u}+\bm{v})) \\
    \wt(\bm{B}\bm{v}) &= \wt(\bm{B}(\bm{u}+\bm{v}))
      \ge \wt(\bm{N}_n(\bm{u}+\bm{v})) - 1.
  \end{align*}

  Write $\bm{P}\in\select^n_{n-1}$ as the matrix which selects
  all $n$ columns of ${\bm\gamma}$ except for the $j$th column.
  Now $\bm{A}$ and $\bm{B}$ are the same as
  $\bm{M}'_{{\bm\gamma}\bm{P}}$ and $\bm{N}_{n-1}$ respectively,
  except that they each have $2d+4$ extra columns.
  The remaining task is to modify $\bm{v}$ so that it is zero at all
  the indices corresponding to these extra columns, without changing
  $\wt(\bm{A}\bm{v})$ or $\wt(\bm{B}\bm{v})$.

  We can see that $d+3$ of these extra columns come from the first
  $dn+3n+1$ columns of $\bm{M}'_{\bm\gamma}$ and $\bm{N}_n$ and,
  since the $j$th row has been removed, they are in fact now zero
  columns. So letting $\bm{v}'$ be the same as $\bm{v}$ with any
  such entries set to zero, we do not change the products
  $\bm{A}\bm{v}'$ or $\bm{B}\bm{v}'$ at all.

  The $d+1$ remaining extra columns come from the triangular parts
  $\tpart(\bm{M}'_{\bm\gamma})$ and $\tpart(\bm{N}_n)$. There are now
  two cases to consider. First, if $j=1$, \ie, we have removed the
  second row of $\bm{N}_n$ and the first row of
  $\bm{M}'_{\bm\gamma}$. Then these extra columns from the triangular
  part of $\bm{A}$ are all zero columns, and from $\bm{B}$ they
  have the form $(a\ 0\ \cdots\ 0)^T$
  for some non-zero entry $a$ in the first row of $\bm{N}_n$.
  Upon inspection, we see that these columns are exactly $a$ times the
  very first columns of $\bm{A}$ and $\bm{B}$ respectively.
  Therefore we can modify the vector $\bm{v}'$ to a new vector
  $\bm{v}''$, where any non-zero entries in such positions are
  divided by $a$ and added to the first entry, then set to zero. This
  does not change the value of
  $\bm{A}\bm{v}''$ or $\bm{B}\bm{v}''$.

  The second case is that $j\ge 2$, \ie, we have removed a later row.
  Then the extra columns in $\bm{A}$ and $\bm{B}$ are exactly
  identical to the columns immediately to their left in the respective
  matrices. So we can form $\bm{v}''$ in this case by adding
  any non-zero entry of $\bm{v}'$ in such positions to the adjacent
  position and then setting it to zero, without changing
  $\bm{A}\bm{v}''$ or $\bm{B}\bm{v}''$.

  After this, we have a vector $\bm{v}''$ with $\wt(\bm{v}'')
  \le \wt(\bm{v})$, and with zeros in all of the ``extra column''
  indices of $\bm{A}$ and $\bm{B}$, such that
  $\wt(\bm{A}\bm{v}'') \le
    \wt(\bm{M}'_{\bm\gamma}(\bm{u}+\bm{v}))$ and
  $\wt(\bm{B}\bm{v}'') \ge
    \wt(\bm{N}_n(\bm{u}+\bm{v}))-1$.
  Finally, setting $\bm{w}$ to be the sub-vector of $\bm{v}''$ with
  these extra column entries removed completes the proof.
\qed\end{proof}

Repeated application of the previous lemma allows us to completely
eliminate all of the columns in $\bm{M}'_{\bm\gamma}$ and $\bm{N}_n$
other than the triangular parts, at the cost of having to consider all
possible column-subsets of ${\bm\gamma}$ itself. This leads to the following
condition:
\begin{equation}\label{eqn:safepp}
  \safe''(\bm{M}'_{\bm\gamma},\bm{N}_n) :=\ 
  \forall\, k\in\{1,\ldots,n\},
  \forall\, \bm{P}\in\select^n_k,\ 
  \safe(\tpart(\bm{M}'_{{\bm\gamma}\bm{P}}),
    \tpart(\bm{N}_k)).
\end{equation}
In other words, we restrict our attention to only square submatrices
of the triangular parts of $\bm{M}'_{\bm\gamma}$ and $\bm{N}_n$.
As it turns out, this condition is exactly equivalent to the original
one.

\begin{theorem}\label{thm:safepp}
  For any field $\K$, matrix ${\bm\gamma}\in\K^{(d+1)\times n}$ where $n\ge 1$, and
  matrix $\bm{N}_n\in\{\bm{L}_n,\bm{L}'_n\}$, we have
  $\safe''(\bm{M}'_{\bm\gamma}, \bm{N}_n) \Leftrightarrow
  \safe(\bm{M}'_{\bm\gamma}, \bm{N}_n)$.
\end{theorem}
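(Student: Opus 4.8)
The plan is to prove the two implications of $\safe''(\bm{M}'_{\bm\gamma},\bm{N}_n)\Leftrightarrow\safe(\bm{M}'_{\bm\gamma},\bm{N}_n)$ separately, each by contraposition. Write $\bm{M}:=\bm{M}'_{\bm\gamma}$; recall it has $n$ rows, that its first column is a zero column, that its ``middle'' columns (indices $2$ through $dn+3n+1$, holding the $\id_n$ and $\diag_{{\bm\gamma},j}$ blocks) have weight at most one, and that its last $(d+1)n$ columns are $\tpart(\bm{M})$. The matrix $\bm{N}_n$ has the same column layout, with a zero block among the middle columns and with its top row vanishing on every middle column. For $\safe''\Rightarrow\safe$, suppose $\bm{v}\in\ker\bm{M}$ with $\wt(\bm{v})\le n$ and $\bm{N}_n\bm{v}$ of full weight $n+1$. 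Since the top row of $\bm{N}_n$ kills the middle columns, $\bm{v}$ is nonzero somewhere off the middle columns, so it has at most $n-1$ nonzero entries among them. I would then apply \cref{lem:removecol} once for each such entry: each application passes to ${\bm\gamma}\bm{P}$ with one fewer column and to a vector $\bm{w}$ of strictly smaller weight, still annihilated by $\bm{M}'_{{\bm\gamma}\bm{P}}$ (the left side of the first inequality of \cref{lem:removecol} is $\le 0$) and still of $\bm{N}$-weight equal to its row count plus one (by the second inequality). As that lemma only deletes entries or slides them into column $1$ or an adjacent triangular column, the middle-column count drops by one per step, so after at most $n-1$ steps we reach a matrix ${\bm\gamma}\bm{P}$ with $n'\ge1$ columns and a vector $\bm{w}$ supported only on column $1$ and $\tpart$, with $\wt(\bm{w})\le n'$, $\bm{M}'_{{\bm\gamma}\bm{P}}\bm{w}=\zm$ and $\bm{N}_{n'}\bm{w}$ of full weight. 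The column-$1$ entry of $\bm{w}$ affects neither $\tpart(\bm{M}'_{{\bm\gamma}\bm{P}})$ nor the bottom $n'$ rows of $\bm{N}_{n'}$, so restricting $\bm{w}$ to its triangular coordinates yields a vector witnessing $\neg\safe(\tpart(\bm{M}'_{{\bm\gamma}\bm{P}}),\tpart(\bm{N}_{n'}))$, hence $\neg\safe''$.

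For $\safe\Rightarrow\safe''$, suppose $\neg\safe''$: there are $k\ge1$, a selection $\bm{P}$ picking columns $j_1<\ldots<j_k$ of ${\bm\gamma}$, and $\bm{v}=(\bm{v}^{(0)},\ldots,\bm{v}^{(d)})$ with $\wt(\bm{v})\le k$, $\tpart(\bm{M}'_{{\bm\gamma}\bm{P}})\bm{v}=\zm$, and $\tpart(\bm{N}_k)\bm{v}$ nowhere zero. I would build an explicit lift $\bm{v}^*\in\K^{\ell_n}$ witnessing $\neg\safe(\bm{M},\bm{N}_n)$. First, set $\bm{v}^*$ to carry $\bm{v}^{(j)}_a$ at column-position $j_a$ of the $j$th block of $\tpart(\bm{M})$ (and zero elsewhere, for now); from the explicit triangular structure, $\bm{e}:=\bm{M}\bm{v}^*$ is supported on $Z:=\{1,\ldots,n\}\setminus\{j_1,\ldots,j_k\}$ with $\bm{e}_{j_a}=(\tpart(\bm{M}'_{{\bm\gamma}\bm{P}})\bm{v})_a=0$ and, crucially, $\bm{e}_i=0$ whenever $i>j_k$. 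Next cancel each $\bm{e}_i$ by putting $-\bm{e}_i$ in the $i$th column of the $\id_n$-block of $\bm{M}$; those columns are zero in $\bm{N}_n$, so now $\bm{v}^*\in\ker\bm{M}$ while $\bm{N}_n\bm{v}^*$ is untouched. One then checks that the coordinates of $\bm{N}_n\bm{v}^*$ indexed by $\{j_1,\ldots,j_k\}$, by $\{i\in Z:i\le j_k\}$, and the top coordinate each equal some coordinate of $\tpart(\bm{N}_k)\bm{v}$ and are therefore nonzero; the only coordinates that can still vanish are those indexed by $\{i\in Z:i>j_k\}$, and these are repaired by putting $1$ in the $i$th column of the middle identity block of $\bm{N}_n$ — a different column block, which is a zero column of $\bm{M}$, so $\bm{v}^*$ stays in $\ker\bm{M}$. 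The weight count then closes exactly: the triangular placement costs $\le\wt(\bm{v})\le k$; the cancellations cost $\le\#\{i\in Z:i\le j_k\}$ because $\bm{e}$ vanishes above $j_k$; the repairs cost $\#\{i\in Z:i>j_k\}$; and column $1$ is never used, since the top coordinate of $\bm{N}_n\bm{v}^*$ already equals $(\tpart(\bm{N}_k)\bm{v})_1\ne0$ on its own — altogether at most $k+\#Z=n$. The case $\bm{N}_n=\bm{L}'_n$ runs identically, the indeterminates $\omega_j$ merely riding along as scalar factors while ``nonzero coordinate'' is read as ``nonzero rational function''.

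The step I expect to be the main obstacle is exactly this weight bookkeeping in the second direction: fitting into a budget of $n$ nonzero coordinates while simultaneously killing the residual $\bm{e}$ in the image of $\bm{M}$ and repairing the would-be-zero rows of $\bm{N}_n\bm{v}^*$ works only because the ``bad'' rows — those of index greater than $j_k$ — are precisely the coordinates where $\bm{e}$ is automatically zero, so the two families of corrections never compete for the same weight. Arranging the triangular placement so that this dovetailing really happens, and verifying all the coordinate identities for $\bm{N}_n\bm{v}^*$, is the delicate part. A secondary nuisance is justifying, in the first direction, that iterating \cref{lem:removecol} terminates with a vector supported only on column $1$ and the triangular part, which requires tracking precisely how that lemma rewrites the current vector at each step.
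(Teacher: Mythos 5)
Your proof is correct, and while the first half matches the paper, the second half takes a genuinely different route. For $\safe''\Rightarrow\safe$ you do essentially what the paper does: the paper runs an induction on $n$ whose step is a single application of \cref{lem:removecol}, while you unroll that induction into an iteration, controlling termination by noting that a full-weight image forces a nonzero entry outside the weight-one ``middle'' columns (the paper instead disposes of $n=1$ as a separate base case); these are the same argument, and your two deduced facts (kernel membership from the first inequality, full weight from the second) are exactly how the paper uses the lemma. For $\safe\Rightarrow\safe''$, however, the paper argues by an extremal exchange: take the largest $k$ admitting a counterexample, append one more column of ${\bm\gamma}$ to the selection, and repair the one disturbed coordinate via a four-case analysis, contradicting maximality so that $k=n$. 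You instead lift the level-$k$ counterexample to level $n$ in one shot: place the triangular entries at the selected positions $j_1<\dots<j_k$, cancel the residual rows of $\bm{M}'_{\bm\gamma}$ through its $\id_n$ block (a zero block of $\bm{N}_n$), and repair the dead rows of $\bm{N}_n$ through its $\id_n$ block (a zero block of $\bm{M}'_{\bm\gamma}$). Your coordinate identities check out (row $j_b$, each unselected row $i<j_k$, and the top row of $\bm{N}_n\bm{v}^*$ all reproduce nonzero coordinates of $\tpart(\bm{N}_k)\bm{v}$, also in the $\bm{L}'_n$ case), and the decisive point is the one you flag: the residual vanishes on the selected rows and on all rows of index greater than $j_k$, while repairs are needed only at unselected rows of index greater than $j_k$, so cancellations and repairs occupy disjoint index sets and the weight closes at $k+\#Z=n$. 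Your construction buys an explicit witness and avoids both the argument by contradiction on a maximal $k$ and the case analysis, at the price of heavier bookkeeping; the paper's version only ever reasons about appending one column. One shared caveat: like the paper (which at the last step identifies the full selection with $\bm{M}'_{\bm\gamma}$ itself), you implicitly take selections in increasing column order, i.e.\ read $\select^n_k$ as column subsets; this matches the paper's intended usage of selection matrices, so it is not a gap specific to your argument.
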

\begin{proof}
  We prove the equivalent double negation
  $\neg\safe(\bm{M}'_{\bm\gamma},\bm{N}_n) \Leftrightarrow
  \neg\safe''(\bm{M}'_{\bm\gamma},\bm{N}_n)$.

  First we prove the ``$\Rightarrow$'' direction by induction on $n$.
  Assuming that $\neg\safe(\bm{M}'_{\bm\gamma},\bm{N}_n)$ means there
  exists a vector $\bm{v}\in\K^{\ell_n}$ such that
  $\wt(\bm{v})\le n$, $\bm{M}'_{\bm\gamma}\bm{v}=\zm$, and
  $\bm{N}_n\bm{v}$ has full weight $n+1$.

  For the base case, let $n=1$.
  Because $\wt(\bm{v})=1$ and $\wt(\bm{N}_n\bm{v})=2$,
  the lone non-zero entry of $\bm{v}$ must correspond to a weight-2 column
  in $\bm{N}_n$, and the only such columns are in the triangular
  part.
  So considering the vector formed from the last
  $d+1$ entries of $\bm{v}$ shows that
  $\neg\safe(\tpart(\bm{M}'_{\bm\gamma}),\tpart(\bm{N}_n))$,
  which is equivalent to
  $\neg\safe''(\bm{M}'_{\bm\gamma},\bm{N}_n)$
  when $n=1$.

  Now for the induction case, let $n\ge 2$ and
  assume the $\Rightarrow$ direction is true for all size-$(n-1)$
  subsets of columns of ${\bm\gamma}$.

  Again we start with a vector $\bm{v}$ which is a counterexample to
  $\safe(\bm{M}'_{\bm\gamma},\bm{N}_n)$. If $\bm{v}$ has any
  non-zero entry in indices $2$ through $dn+3n+1$, then we can isolate
  that entry in its own vector $\bm{u}$ and write
  $\bm{v}=\bm{u}+\bm{v}^*$, where $\wt(\bm{v}^*) =
  \wt(\bm{v})-1 \le n-1$. Now apply \cref{lem:removecol} to obtain a
  vector $\bm{w}\in\K^{\ell_{n-1}}$ and a selection matrix
  $\bm{P}\in\select^n_{n-1}$ such that
  $\wt(\bm{w})\le n-1$,
  $\bm{M}'_{{\bm\gamma}\bm{P}}\bm{w}=\zm$, and
  $\wt(\bm{N}_{n-1}\bm{w})=n-1$.
  Therefore
  $\neg\safe(\bm{M}'_{{\bm\gamma}\bm{P}},\bm{N}_{n-1})$, so we
  can apply the induction hypothesis to complete this sub-case.

  Otherwise, the non-zero entries of $\bm{v}$ are in the very
  first index, or in the last $(d+1)n$ indices which correspond to the
  triangular parts. But the first columns of $\bm{N}_n$ and
  $\bm{M}'_{\bm\gamma}$ are all zeros except for the first row in
  $\bm{N}_n$, which is eliminated in the triangular part
  $\tpart(\bm{N}_n)$. Therefore, if this entry of $\bm{v}$ is
  non-zero, we can change it to zero without affecting
  $\bm{M}'_{\bm\gamma}\bm{v}$, which must equal $\zm$, or the last
  $n$ rows of $\bm{N}_n\bm{v}$, which must be all non-zero.
  Hence the vector consisting of the last $(d+1)n$ entries of
  $\bm{v}$ is a counterexample to
  $\safe(\tpart(\bm{M}'_{\bm\gamma}),\tpart(\bm{N}_n))$. This
  completes the $\Rightarrow$ direction of the proof.

  For the $\Leftarrow$ direction, assume that
  $\neg\safe''(\bm{M}'_{\bm\gamma},\bm{N}_n)$. This means there is
  some $k\in\{1,\ldots,n\}$, some selection of columns from ${\bm\gamma}$
  defined by $\bm{P}\in\select^n_k$, and some
  $\bm{v}\in\K^{\ell_k}$ such that
  $\wt(\bm{v})\le k$, $\tpart(\bm{M}'_{{\bm\gamma}\bm{P}})\bm{v}=\zm$, and
  $\tpart(\bm{N}_k)\bm{v}$ has full weight $k$.

  Because the triangular part is a subset of the whole, we can prepend
  $\bm{v}$ with $dk+3k+1$ zeros to obtain a vector $\bm{v}'$
  such that $\bm{M}'_{{\bm\gamma}\bm{P}}\bm{v}'=\zm$ and
  $\bm{N}_k\bm{v}'$ is non-zero everywhere except possibly in the
  first row. Observe that the row of $\bm{N}_k$ immediately above
  the triangular part is exactly identical to the top row of
  $\tpart(\bm{N}_k)$, so in fact $\bm{N}_k\bm{v}'$ has full
  weight $k+1$.

  This shows that there exists at least one $k\ge 1$ such that there
  exists a
  selection $\bm{P}\in\select^n_k$ and a vector $\bm{v}'$ which
  is a counterexample to
  $\safe(\bm{M}'_{{\bm\gamma}\bm{P}},\bm{N}_k)$.
  Assume now that $k$ is the \emph{largest} such integer.

  If $k=n$, then $\bm{M}'_{{\bm\gamma}\bm{P}}=\bm{M}'_{\bm\gamma}$,
  and $\bm{v}'$ is a counterexample to
  $\safe(\bm{M}'_{\bm\gamma},\bm{N}_n)$ already.

  Otherwise, if $k<n$, we show that we can construct a larger selection
  matrix $\bm{Q}$ and corresponding vector $\bm{w}$
  satisfying the conditions above, which is a contradiction to the
  assumption that $k$ is the largest such value.

  Construct another selection matrix
  $\bm{Q}\in\select^n_{k+1}$ consisting of the columns selected by
  $\bm{P}$ plus some additional column $i$; for convenience write
  ${\bm\zeta} = {\bm\gamma}\bm{Q}$.
  Note that
  $\bm{M}'_{{\bm\gamma}\bm{P}}$ and $\bm{N}_k$ are submatrices of
  $\bm{M}'_{{\bm\zeta}}$ and $\bm{N}_{k+1}$ respectively,
  the latter both having exactly one more row and some number of extra
  columns. Therefore by extending $\bm{v}'$ to a larger vector
  $\bm{v}''$ by inserting zeros in the locations of these extra
  columns, we have that
  $\bm{M}'_{{\bm\zeta}}\bm{v}''$ is zero everywhere except
  possibly at index $i$, and $\bm{N}_{k+1}\bm{v}''$ is non-zero
  everywhere except at index $i$. Let $a$ be the $i$th entry of
  $\bm{M}'_{{\bm\zeta}}\bm{v}''$ and $b$ be the $i$th
  entry of $\bm{N}_{k+1}\bm{v}''$.

  Finally, we show how to add one more entry to $\bm{v}''$ to
  ``fix'' the exceptions at index $i$ in the previous sentence, making
  $a=0$ and $b\ne 0$. There are four cases to consider:
  \begin{enumerate}
    \item If $a=0$ and $b\ne 0$, then we are done.
    \item If $a=0$ and $b=0$, then set the $(i+1)$th entry of
      $\bm{v}$ to 1; this corresponds to a column of zeros in
      $\bm{M}'_{{\bm\zeta}}$ and a column of the identity
      matrix in $\bm{N}_{k+1}$. So adding that column keeps $a=0$
      but sets $b$ to $1$.
    \item If $a\ne 0$ and $b\ne 0$, then set the $(k+i+1)$th entry of
      $\bm{v}$ to $-a$. This entry corresponds to a column of the
      identity matrix in $\bm{M}'_{\bm\zeta}$ and a column of zeros in
      $\bm{N}_{k+1}$, so adding it keeps $b\ne 0$ but cancels the
      value of $a$.
    \item If $a\ne 0$ and $b=0$, then set the $(2k+i+2)$th entry of
      $\bm{v}$ to $-a/{\bm\zeta}_{0,i}$. This entry corresponds to a
      column of $\diag_{{\bm\zeta},0}$ in $\bm{M}'_{{\bm\zeta}}$, and a
      column of either $\bm{I}_{k+1}$ or $\omega_0\bm{I}_{k+1}$
      within
      $\bm{N}_{k+1}$, and therefore the change to $\bm{v}$
      cancels out $a$ and sets $b$ to some non-zero value.
  \end{enumerate}
  This newly constructed vector has weight at most $\wt(\bm{v}'')+1
  \le k+1$, and is therefore a counterexample to
  $\safe(\bm{M}'_{\bm\zeta},\bm{N}_{k+1})$. This is a contradiction
  to the assumption that $k$ was maximal, which completes the
  $\Leftarrow$ direction and the entire proof.
\qed\end{proof}

\section{An MDS precondition}
\label{sec:precond}

We use the results of the previous two sections to develop a useful
precondition for generating ${\bm\gamma}$ matrices which satisfy the safety
and correctness conditions of the two schemes.
This precondition guarantees the correctness conditions, and (as we will
see in later sections) seems to raise the probability of a matrix satisfying the
safety condition. We then show how to explicitly generate
matrices which satisfy these preconditions.

\subsection{Preconditions intuition and definition}

As in the previous section, let ${\bm\gamma}\in\K^{(d+1)\times d}$ be a
matrix whose entries determine the correctness and safety of one of the
two masking schemes according to \cref{prop:scheme4} or
\cref{prop:scheme5}. (Either ${\bm\gamma}$ must have a row of 1's for the
masking scheme with linear bilinear multiplications, or the columns of
${\bm\gamma}$ must sum to $0$ for the scheme with linear randomness.)

Then \cref{thm:safeprime,thm:safepp} tell us that a sufficient condition
is that, for every
square submatrix of $\tpart(\bm{M}'_{\bm\gamma})$,
every vector in its right kernel results in at least one zero entry when multiplied
by a corresponding submatrix of $\tpart(\bm{N}_d)$. The general
idea of the MDS precondition we describe next is to
\emph{minimize the rank of this right kernel}, effectively limiting the
number of possible ``unsafe'' vectors. In particular, when a
square submatrix of $\tpart(\bm{M}'_{\bm\gamma})$ is non-singular, then
its nullity is zero and the scheme is safe with respect to that subset
of rows and columns.

This suggests a strategy to increase the likelihood of a matrix leading to a safe scheme: one may try to choose ${\bm\gamma}$ in a way that ensures
that $\tpart(\bm{M}'_{{\bm\gamma}\bm{P}})\bm{Q}$ has a trivial kernel for as many
selection matrices $\bm{P}\in\select^d_k$
and $\bm{Q}\in\select^{\ell_k}_{k}$ as possible. That
is, square submatrices of the triangular part of $\bm{M}'_{\bm\gamma}$
should be non-singular as often as possible.

A good such choice for ${\bm\gamma}$ is to take it to be
such that all its square submatrices are MDS.
To justify this claim, recall from \cref{sec:prelims} that
any square submatrix of an MDS matrix is invertible, \ie, has a trivial
kernel.  Further, from the definition of $\tpart(\bm{M}'_{\bm\gamma})$,
its columns consist of (partial) rows of ${\bm\gamma}$; therefore many of
its submatrices are in fact (transposed) submatrices of ${\bm\gamma}$ itself.

\begin{example}\label{ex:mds3}
Consider for the case $d=3$, the submatrix of $\tpart(\bm{M'}_{\bm\gamma})$ given by:
\[
\bm{X} = \begin{pmatrix}
{\bm\gamma}_{0,1} & {\bm\gamma}_{1,1} & {\bm\gamma}_{2,1}\\
0 & {\bm\gamma}_{1,2} & {\bm\gamma}_{2,2}\\
0 & {\bm\gamma}_{1,3} & {\bm\gamma}_{2,3}\\
\end{pmatrix}.
\]
(Note that in the case of \cref{cond41}, ${\bm\gamma}_{0,1}$ must equal 1.)
If all square submatrices of ${\bm\gamma}$ are MDS, the bottom-right $2\times 2$ submatrix of
$\bm{X}$ is necessarily non-singular, and ${\bm\gamma}_{0,1}\ne 0$, so
therefore this entire submatrix is non-singular.
This would not be the case for an arbitrary matrix ${\bm\gamma}$, even if say, one takes it to be full-rank.
\end{example}

We now state our two \emph{preconditions} on the matrices used to instantiate either masking scheme.
As will be clear in the remainder of this paper, these preconditions are by no means sufficient, nor necessary.
Yet we will also see, both formally (in \cref{sec:analytic}) and experimentally (in \cref{sec:exp}) how they may be useful.

\setcounter{pcondsec}{4}\setcounter{precondition}{0}
\begin{precondition}\label{pcond41}
A matrix ${\bm\gamma}\in\K^{(d+1)\times d}$ \emph{satisfies
\cref{pcond41}} for \cref{cond41} if
it can be written as
$\displaystyle {\bm\gamma} = \begin{pmatrix}\om_{1\times
d}\\\bm{A}\end{pmatrix}$, and both matrices
$\bm{A}$ and $\om_{d\times d} - \bm{A}$ are
row XMDS.
\end{precondition}

Any such matrix ${\bm\gamma}$
clearly satisfies the correctness condition, which is item (1) in
\cref{prop:scheme4}. The XMDS property also ensures that all square
submatrices of ${\bm\gamma}$ and ${\bm\delta}$
are non-singular, which (we expect) will make the safety conditions (2)
and (3) from \cref{prop:scheme4} more likely satisfied.

\stepcounter{pcondsec}
\begin{precondition}\label{pcond51}
A matrix ${\bm\gamma} \in \K^{(d+1)\times d}$
\emph{satisfies \cref{pcond51}} for \cref{cond51} if
$\sum_{i=0}^{d} {\bm\gamma}_i = \zm_{1\times d}$ and
all of its square submatrices are MDS.
\end{precondition}

Again, this precondition guarantees the correctness of the
scheme, corresponding to item (1) of \cref{prop:scheme5}, and the
non-singular submatrices make it (we expect) more likely that the
security condition, item (2), is also true.

\subsection{Explicit constructions satisfying the preconditions}

It is relatively easy to check if a given matrix satisfies either of the above preconditions.
Here we do even better, providing a direct construction for families of
matrices that satisfy each precondition.

\begin{theorem}[Satisfying \cref{pcond41}]\label{thm:pcond41}
Let $\{x_1,\ldots,x_d,y_1,\ldots,y_d\} \in \K\backslash\{0\}$ be $2d$
distinct non-zero elements of \K, and define matrix
$\bm{A}\in\K^{d\times d}$ by
$\bm{A}_{i,j} = x_i / (x_i - y_j)$.
Then the corresponding ${\bm\gamma}\in\K^{(d+1)\times d}$
satisfies \cref{pcond41}.
\end{theorem}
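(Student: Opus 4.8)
The plan is to verify the two requirements of \cref{pcond41}: first, that $\bm\gamma$ has the correct shape $\begin{pmatrix}\om_{1\times d}\\\bm A\end{pmatrix}$ with the given $\bm A$, and second, that both $\bm A$ and $\om_{d\times d}-\bm A$ are row XMDS. The first is immediate from how $\bm\gamma$ is ``the corresponding'' matrix to $\bm A$ — its top row is all ones by construction — so the real content is the XMDS claims. The key observation is that $\bm A_{i,j} = x_i/(x_i-y_j) = c_i\cdot(x_i-y_j)^{-1}$ with $c_i = x_i\ne 0$, so $\bm A$ is a \emph{row-scaled Cauchy matrix}. By the results recalled in \cref{sec:prelims}, every square submatrix of a generalized Cauchy matrix is MDS; but XMDS is a stronger claim about submatrices of $\bm A$ extended by a row of ones, so I cannot just invoke the generalized-Cauchy fact — I need to exploit the specific scaling $c_i = x_i$.

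The crux is this: appending a row of ones to (a column-submatrix of) $\bm A$ should itself be an \emph{extended Cauchy} matrix up to row/column scaling in a way that does \emph{not} touch the row of ones. Concretely, I would write, for the row of ones, $1 = c_0\cdot(x_0 - y_j)^{-1}$ for a suitable ``point at infinity'' surrogate: scale column $j$ by $(x_i - y_j)$ — no wait, that destroys the Cauchy structure of the other rows. The cleaner route: observe that $\dfrac{x_i}{x_i-y_j}=\dfrac{1}{1-y_j/x_i}$, and via the substitution $x_i'=1/x_i$ this equals $\dfrac{1}{1-y_jx_i'}=\dfrac{1/y_j}{1/y_j - x_i'}$, i.e. a \emph{column}-scaled Cauchy matrix with nodes $x_i'$ and $1/y_j$ and scalars $d_j=1/y_j$. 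Since the $x_i$ and $y_j$ are distinct and nonzero, the $x_i'=1/x_i$ and $1/y_j$ are $2d$ distinct nonzero elements, and moreover the distinctness of $\{0\}\cup\{x_i'\}\cup\{1/y_j\}$ is exactly what we need. Now a Cauchy matrix with an appended row of ones is XMDS by \cite{DBLP:journals/tit/RothS85}: the row of ones corresponds to taking the limit node $x\to\infty$ (or a fresh node), and column scaling by the $d_j$ scales that appended row entrywise — but I want the appended row to stay \emph{all ones}, so I instead keep the appended row of ones, note that column-scaling commutes appropriately, and appeal to the fact that XMDS-ness is preserved under column scaling when the extension is by a column (but not a row); for row-ones extension I need the un-column-scaled version.

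So the actual argument I would give is: define $X_0=\infty$ (or pick $X_0\in\K$ distinct from everything, passing to a field extension only for the argument, then descending since MDS-ness is a statement about nonvanishing of minors with entries in $\K$); the matrix $\widehat{\bm A}$ obtained from the $d\times d$ Cauchy matrix $\bigl((x_i'-1/y_j)^{-1}\bigr)$ — wait, I must be careful about column scalars. Let me instead argue directly: $\bm A$ row-extended by $\om_{1\times d}$ equals $\bigl(\tfrac{x_i}{x_i-y_j}\bigr)$ with an extra row; I claim this is the transpose of a \emph{generalized} Cauchy matrix whose generalization uses \emph{only column} scalars (on the columns that become rows after transposition) — hence the un-scaled directions admit a genuine Cauchy/extended-Cauchy structure, whence all square submatrices of the one-row extension are MDS. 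The same reasoning applies to $\om_{d\times d}-\bm A$, because $1-\tfrac{x_i}{x_i-y_j} = \tfrac{-y_j}{x_i-y_j}$, which is again a generalized Cauchy matrix with the \emph{same} node sets $\{x_i\},\{y_j\}$ and row/column scalars $(-1)$ and $y_j$ respectively — so by the identical argument $\om_{d\times d}-\bm A$ is row XMDS too. I expect the main obstacle to be the bookkeeping around the ``point at infinity'': making rigorous that adjoining a row of ones to a (generalized, column-scaled) Cauchy matrix yields a matrix all of whose square submatrices are invertible. The honest way is either to cite \cite{DBLP:journals/tit/RothS85} for extended Cauchy matrices and check that the required scaling only affects the non-extended direction, or to compute the relevant determinants explicitly via the Cauchy determinant formula and observe the appended-ones row contributes a nonvanishing factor; I would do the former and relegate the determinant check to a remark.
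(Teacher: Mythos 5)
Your treatment of $\bm{A}$ is, after the detours, essentially the paper's own argument: $\begin{pmatrix}\om_{1\times d}\\ \bm{A}\end{pmatrix}$ is the row-of-ones-extended Cauchy matrix $\begin{pmatrix}\om_{1\times d}\\ \bm{B}\end{pmatrix}$ with $\bm{B}_{i,j}=(x_i-y_j)^{-1}$, multiplied by the row scalars $(1,x_1,\ldots,x_d)$; because the scalar on the extension row is $1$, the ones row survives, and invertibility of all square submatrices of the extended Cauchy matrix \cite{DBLP:journals/tit/RothS85} is preserved under nonzero row scaling. The transposition language and the abandoned $x_i\mapsto 1/x_i$ substitution are unnecessary, but the core of this half is sound and matches the paper.

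The genuine gap is in the second half. You compute $\om_{d\times d}-\bm{A}=\bigl(-y_j\,(x_i-y_j)^{-1}\bigr)$ and conclude it is row XMDS ``by the identical argument''. It is not identical: here the generalization is a \emph{column} scaling (by $-y_j$), and a column scaling does not fix an appended \emph{row} of ones — $\begin{pmatrix}\om_{1\times d}\\ \bm{B}\end{pmatrix}\cdot\mathrm{diag}(-y_1,\ldots,-y_d)$ has top row $(-y_1,\ldots,-y_d)$, not $\om_{1\times d}$. Your own stated criterion (``check that the required scaling only affects the non-extended direction'') is exactly what fails; the identical argument would only yield \emph{column} XMDS, while \cref{pcond41} requires $\om_{d\times d}-\bm{A}$ to be \emph{row} XMDS. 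The missing idea, which is how the paper closes this case, is to realize the ones row itself as a Cauchy row: since all $x_i,y_j$ are nonzero, $0$ is a fresh node and $1=-y_j\cdot(0-y_j)^{-1}$, so $\begin{pmatrix}\om_{1\times d}\\ \om_{d\times d}-\bm{A}\end{pmatrix}$ is a generalized Cauchy matrix on nodes $\{0,x_1,\ldots,x_d\}$ and $\{y_1,\ldots,y_d\}$ with column scaling $-y_j$, hence all its square submatrices are invertible. Note that this is the one step where the hypothesis that the $x_i,y_j$ are nonzero is actually needed; in your proposal that hypothesis is invoked only in the abandoned $1/x_i$ route and never where it does the work.
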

\begin{proof}
Define the row-extended Cauchy matrix $\bm{B}$ as $\bm{B}_{0,j} = 1$, $1 \leq j \leq d$;
$\bm{B}_{i,j} = (x_i - y_j)^{-1}$, $1 \leq i, j \leq d$. The
generalized extended matrix
obtained from $\bm{B}$ by the row scaling
$\bm{c} = \begin{pmatrix}1 & x_1 & \cdots & x_d \end{pmatrix}$
is equal to ${\bm\gamma}$,
and all its square submatrices are invertible by construction,
hence $\bm{A}$ is row XMDS.

The matrix
$\bm{C} = \om_{d\times d} - \bm{A}$ is given by $\begin{pmatrix}(x_i - y_j - x_i)\cdot(x_i - y_j)^{-1}\end{pmatrix} =
\begin{pmatrix}-y_j\cdot(x_i - y_j)^{-1}\end{pmatrix}$. It is a generalized Cauchy matrix with column scaling given by
$\begin{pmatrix}-y_1 & \ldots & -y_d\end{pmatrix}^T$, and is then MDS.
Because $0 \notin \{x_1,\ldots,x_d,y_1,\ldots,y_d\}$, one may extend $\bm{C}$ by one row on top using $x_0 = 0$, resulting
in $\bm{C'}$ s.t. $\bm{C'}_{0,j} = -y_j\cdot(0 - y_j)^{-1} = 1$, $1 \leq j \leq d$; $\bm{C'}_{i,j} = \bm{C}_{i,j}$,
$1 \leq i,j \leq d$. In other words,
\[
\bm{C'} = \begin{pmatrix}
\om_{1\times d}\\
\bm{C}
\end{pmatrix}
\]
is a generalized Cauchy matrix, whose square submatrices are all invertible by construction, hence $\bm{C} = \om_{d \times d} -\bm{A}$ is row XMDS.
\qed
\end{proof} 

\begin{theorem}[Satisfying \cref{pcond51}]\label{thm:pcond51}
Let $\{x_1,\ldots,x_d,x_{d+1},y_1,\ldots,y_d\} \in \K$ be $2d+1$ distinct elements of $\K$;
let $\bm{A} = \begin{pmatrix}(x_i - y_j)^{-1}\end{pmatrix}$;
and let $\bm{c} = \begin{pmatrix}c_1 & \cdots & c_{d+1}\end{pmatrix}$
be a non-zero vector in the left kernel of $\bm{A}$. Then ${\bm\gamma} = \begin{pmatrix}c_i\cdot(x_i - y_j)^{-1}\end{pmatrix}$
satisfies \cref{pcond51}.
\end{theorem}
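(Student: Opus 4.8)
The plan is to verify the two requirements of \cref{pcond51} separately: first that the rows of ${\bm\gamma}$ sum to zero, and second that every square submatrix of ${\bm\gamma}$ is MDS. For the first requirement, observe that the $j$th column of ${\bm\gamma}$ is $\begin{pmatrix}c_1(x_1-y_j)^{-1} & \cdots & c_{d+1}(x_{d+1}-y_j)^{-1}\end{pmatrix}^T$, and its sum of entries is exactly $\sum_{i=1}^{d+1} c_i (x_i - y_j)^{-1}$, which is the $j$th coordinate of the row vector $\bm{c}\bm{A}$. Since $\bm{c}$ lies in the left kernel of $\bm{A}$ by hypothesis, this is zero for every $j$, so $\sum_{i=0}^d {\bm\gamma}_i = \zm_{1\times d}$ (using the row-index convention starting at $0$). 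One small point to mention is that such a nonzero $\bm{c}$ exists: $\bm{A}$ is a $(d+1)\times d$ Cauchy-type matrix, which has rank $d$, so its left kernel is one-dimensional; moreover, no $c_i$ can be zero, since deleting the $i$th row leaves a square Cauchy matrix, which is invertible, forcing a nonzero left-kernel vector to have full support — this will matter for the scaling argument.

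For the MDS requirement, the key observation is that ${\bm\gamma}$ is obtained from $\bm{A}$ (extended by the structure of Cauchy matrices) purely by row scaling: ${\bm\gamma}_{i,j} = c_i \cdot \bm{A}_{i,j}$ with every $c_i \neq 0$. By the discussion in \cref{sec:prelims}, all square submatrices of a generalized (extended) Cauchy matrix are MDS. The matrix $\begin{pmatrix}(x_i - y_j)^{-1}\end{pmatrix}_{1\le i\le d+1,\,1\le j\le d}$ is a (non-square) Cauchy matrix built from the $2d+1$ distinct elements $\{x_1,\ldots,x_{d+1},y_1,\ldots,y_d\}$, and ${\bm\gamma}$ is its generalized version with row scaling $\bm{c}$ and trivial column scaling. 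A square submatrix of ${\bm\gamma}$ is obtained by choosing a set of rows $R$ and columns $S$ of equal cardinality; it equals the corresponding submatrix of the underlying Cauchy matrix with each row scaled by the nonzero constant $c_i$, $i\in R$. Its determinant is therefore $\left(\prod_{i\in R} c_i\right)$ times the determinant of a genuine square Cauchy matrix, which is nonzero by the Cauchy determinant formula; hence every square submatrix of ${\bm\gamma}$ is invertible, i.e., ${\bm\gamma}$ has the property that all square submatrices are MDS by \cref{prop:mds_minors}.

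I do not expect a genuine obstacle here — the statement is essentially a packaging of the generalized-Cauchy facts already recalled in \cref{sec:prelims}. The only subtlety worth being careful about is the full-support claim for $\bm{c}$: if some $c_i$ vanished, the $i$th row of ${\bm\gamma}$ would be zero, destroying the MDS property, so one must explicitly justify $c_i \neq 0$ via invertibility of the Cauchy matrix with the $i$th row deleted. Once that is in place, the two bullet points of \cref{pcond51} follow immediately, and the proof concludes with a \qed.
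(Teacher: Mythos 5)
Your proposal is correct and follows essentially the same route as the paper: both argue that $\bm{c}$ must have full support (since any $d$ rows of the Cauchy matrix $\bm{A}$ form an invertible square Cauchy matrix, a zero entry in $\bm{c}$ would force $\bm{c}=\zm$), and then conclude that ${\bm\gamma}$ is a generalized Cauchy matrix with non-zero row scaling, so all its square submatrices are invertible while its rows sum to zero by the left-kernel condition. Your extra detail on the determinant factorization is a fine, slightly more explicit justification of the generalized-Cauchy fact the paper simply cites.
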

\begin{proof}
By construction, the $d+1\times d$ Cauchy matrix $\bm{A}$ has a left kernel of dimension one.
Furthermore, any vector of this kernel that is not the null vector is of full Hamming weight,
as being otherwise would imply the existence of $k \leq d$ linearly-dependent rows of $\bm{A}$.
The row scaling coefficients $\begin{pmatrix}c_1 & \cdots & c_{d+1}\end{pmatrix}$ are thus all non-zero, and the generalized Cauchy matrix $\bm{A}'$
is such that its rows sum to the null vector and all its square submatrices are invertible.
\qed
\end{proof}

\section{Analytic construction for order up to 3}
\label{sec:analytic}

In this section, we develop explicit polynomial conditions on the
entries of generalized Cauchy matrices that are sufficient to ensure
both the correctness and safety of two schemes described in
\cref{sec:scheme_def}.

The results are explicit constructions for many field sizes. For
order $d=1$, \cref{cor:dim1} proves that any non-zero ${\bm\gamma}$ matrix
makes the scheme secure.
For order $d=2$, \cref{cor:dim2} proves that our MDS preconditions
in the previous section always produce safe constructions without the
need for any further checks. Finally, for order $d=3$,
\cref{thm:dim3explicit41,thm:dim3explicit51} provide $x_i$ and $y_i$
values to use in those MDS preconditions in order to generate safe
constructions for any field of characteristic 2 with $q\ge 4$.

The idea behind our MDS preconditions in \cref{sec:precond} was
to ensure that all square submatrices of ${\bm\gamma}$ are non-singular, and
therefore \emph{many} square submatrices of the matrix
$\tpart(\bm{M}'_{\bm\gamma})$ have nullity zero. For small dimensions,
we can go further and actually require that \emph{all}
submatrices of $\tpart(\bm{M}'_{\bm\gamma})$ are non-singular which could possibly
violate the condition $\safe''$ from \eqref{eqn:safepp}. This will in
turn guarantee a safe and correct construction by
\cref{thm:safepp,prop:scheme4,prop:scheme5}.

\subsection{Columns which must be selected}

Let ${\bm\gamma}\in\K^{(d+1)\times n}$
and recall the definitions of $\tpart(\bm{N}_n)$ and
$\tpart(\bm{M}'_{\bm\gamma})$; in the former case we show only the positions of
the non-zero entries, which are the same whether
$\bm{N}_n=\bm{L}_n$ or $\bm{N}_n=\bm{L}'_n$.

\[\begin{matrix}
\multirow{4}{*}{$\tpart(\bm{N}_n)$} &
\multirow{4}{*}{$=$} &
\multirow{4}{*}{$\displaystyle\left(\rule{0pt}{28pt}\right.$} &
    * & * & \cdots & *      & \rule{10pt}{0pt} &
    * & * & \cdots & *      & \rule{10pt}{0pt} &
    \multirow{4}{*}{$\cdots$} & \rule{10pt}{0pt} &
    * & * & \cdots & *      &
\multirow{4}{*}{$\displaystyle\left.\rule{0pt}{28pt}\right),$} \\
&&&   & * & \cdots & *      & &
      & * & \cdots & *      & &
      &&
      & * & \cdots & *      & \\
&&&   &   & \ddots & \vdots & &
      &   & \ddots & \vdots & &
      &&
      &   & \ddots & \vdots & \\
&&&   &   &        & *      & &
      &   &        & *      & &
      &&
      &   &        & *      &
\\[10pt]
\multirow{4}{*}{$\tpart(\bm{M}'_{\bm\gamma})$} &
\multirow{4}{*}{$=$} &
\multirow{4}{*}{$\displaystyle\left(\rule{0pt}{28pt}\right.$} &
    {\bm\gamma}_{0,1} & {\bm\gamma}_{0,1} & \cdots & {\bm\gamma}_{0,1} & \rule{10pt}{0pt} &
    {\bm\gamma}_{1,1} & {\bm\gamma}_{1,1} & \cdots & {\bm\gamma}_{1,1} & \rule{10pt}{0pt} &
    \multirow{4}{*}{$\cdots$} & \rule{10pt}{0pt} &
    {\bm\gamma}_{d,1} & {\bm\gamma}_{d,1} & \cdots & {\bm\gamma}_{d,1} &
\multirow{4}{*}{$\displaystyle\left.\rule{0pt}{28pt}\right).$} \\
&&&   & {\bm\gamma}_{0,2} & \cdots & {\bm\gamma}_{0,2} & &
      & {\bm\gamma}_{1,2} & \cdots & {\bm\gamma}_{1,2} & &
      &&
      & {\bm\gamma}_{d,2} & \cdots & {\bm\gamma}_{d,2} & \\
&&&   &   & \ddots & \vdots & &
      &   & \ddots & \vdots & &
      &&
      &   & \ddots & \vdots & \\
&&&   &   &        & {\bm\gamma}_{0,n} & &
      &   &        & {\bm\gamma}_{1,n} & &
      &&
      &   &        & {\bm\gamma}_{d,n} &
\end{matrix}
\]

Notice that all pairs of columns in $\bm{M}'_{\bm\gamma}$ and $\bm{N}_n$
with the same index (hence corresponding to the same probe in the
masking scheme) have the same weight.
The next lemma shows that any unsafe set of probes from among these
columns must include at least two of the full-weight columns.

\begin{lemma}\label{lem:fullcols}
  Let ${\bm\gamma}\in\K^{(d+1)\times n}, \bm{M}'_{\bm\gamma}, \bm{L}_n$
  be as above.
  If ${\bm\gamma}$ has no zero entries, then
  any column selection $\bm{P}\in\select^{\ell_n}_n$ which is a
  counterexample to
  $\safe'(\tpart(\bm{M}'_{\bm\gamma}),\tpart(\bm{N}_n))$ must include
  at least two columns of full weight $n$ from
  $\tpart(\bm{M}'_{\bm\gamma})$ and $\tpart(\bm{N}_n)$.
\end{lemma}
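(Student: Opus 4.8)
The plan is to prove the contrapositive: given a selection $\bm{P}$ of columns of the triangular parts, I will show that if $\bm{P}$ contains \emph{at most one} full-weight column, then $\bm{P}$ is not a counterexample, i.e.\ the matrix $\tpart(\bm{N}_n)\bm{P}\cdot\kbasis(\tpart(\bm{M}'_{\bm\gamma})\bm{P})$ has a zero row. The only structure I need is visible in the displayed forms above: each of $\tpart(\bm{M}'_{\bm\gamma})$ and $\tpart(\bm{N}_n)$ is a horizontal concatenation of $d+1$ blocks of $n$ columns, the $j$-th being $\trig_{{\bm\gamma},j}=\diag_{{\bm\gamma},j}\trig_n$ and $\trig_n$ respectively, and within either block the $p$-th column has its support exactly on rows $1,\dots,p$. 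Here the hypothesis that ${\bm\gamma}$ has no zero entry is used twice: it gives $\trig_{{\bm\gamma},j}$ the same support as $\trig_n$, so that a column is full-weight in $\tpart(\bm{M}'_{\bm\gamma})$ exactly when it is full-weight in $\tpart(\bm{N}_n)$ — namely when it is the last ($p=n$) column of one of the $d+1$ blocks — and it keeps each entry ${\bm\gamma}_{j,n}$ nonzero, which drives the key step.

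First I would dispose of the case where $\bm{P}$ selects no full-weight column. Then every selected column of $\tpart(\bm{N}_n)$ is supported on rows $1,\dots,n-1$, so the $n$-th row of $\tpart(\bm{N}_n)\bm{P}$ is identically zero; since a zero row is preserved under right multiplication by any matrix, the product with $\kbasis(\tpart(\bm{M}'_{\bm\gamma})\bm{P})$ has a zero $n$-th row, so $\zr$ holds. This also covers the degenerate cases where $\bm{P}$ selects fewer than $n$ columns (in particular zero columns, and the whole case $n=1$).

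The crux is the case where $\bm{P}$ selects exactly one full-weight column, say the $m$-th selected column, lying in block $j$, so that this column has position $n$ in its block and all other selected columns have position $<n$. Then row $n$ of $\tpart(\bm{M}'_{\bm\gamma})\bm{P}$ has its unique nonzero entry, ${\bm\gamma}_{j,n}$, in column $m$, and likewise row $n$ of $\tpart(\bm{N}_n)\bm{P}$ is supported only in column $m$. Reading the $n$-th coordinate of $\tpart(\bm{M}'_{\bm\gamma})\bm{P}\,\bm{v}=\zm$ for any $\bm{v}\in\ker(\tpart(\bm{M}'_{\bm\gamma})\bm{P})$ gives ${\bm\gamma}_{j,n}\,v_m=0$, hence $v_m=0$ because ${\bm\gamma}_{j,n}\ne 0$. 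So every column of $\kbasis(\tpart(\bm{M}'_{\bm\gamma})\bm{P})$ vanishes in coordinate $m$, i.e.\ its $m$-th row is zero; and since the $n$-th row of $\tpart(\bm{N}_n)\bm{P}$ is supported only in column $m$, the $n$-th row of $\tpart(\bm{N}_n)\bm{P}\cdot\kbasis(\tpart(\bm{M}'_{\bm\gamma})\bm{P})$ is a scalar multiple of that zero row, hence zero, and again $\zr$ holds. Combining the two cases, any counterexample selection must contain at least two full-weight columns.

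I do not anticipate a real obstacle here: the argument is essentially bookkeeping on the block-triangular shape of the matrices. The one thing to get right is the observation that the bottom row of each triangular part meets only the last column of each block — this is what forces at least one full-weight column into any dangerous selection and, when there is exactly one, pins the corresponding kernel coordinate, and thus the whole bottom row of the product, to zero.
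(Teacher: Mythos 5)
Your proof is correct and follows essentially the same route as the paper's: both arguments hinge on the observation that the only columns of the triangular parts with a non-zero entry in the bottom row are the full-weight ones, so one such column is needed for $\tpart(\bm{N}_n)\bm{P}$ to avoid a zero last row, and a second is needed on the $\tpart(\bm{M}'_{\bm\gamma})\bm{P}$ side for the kernel to contribute anything non-zero there. Your write-up is in fact slightly more explicit than the paper's (you pin down, via ${\bm\gamma}_{j,n}\neq 0$, that the kernel coordinate of a lone full-weight column must vanish, whereas the paper states this step more tersely), but the underlying idea is identical.
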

\begin{proof}
  A counterexample to
  $\safe'(\tpart(\bm{M}'_{\bm\gamma}),\tpart(\bm{N}_n))$
  is a selection matrix $\bm{P}\in\select^{\ell_n}_n$
  such that the matrix product
  $\tpart(\bm{N}_n)\bm{P}\cdot\kbasis(\tpart(\bm{M}'_{\bm\gamma})\bm{P})$
  has no zero rows.

  The only columns of $\tpart(\bm{N}_n)$ which are
  non-zero in the last row are those columns of full weight, so at least
  one must be included in $\bm{P}$ for the product to have no zero
  rows. But in order for $\tpart(\bm{M}'_{\bm\gamma})\bm{P}$ to have
  a non-trivial kernel, it must have a \emph{second} column with a
  non-zero in the last row.
\qed\end{proof}

\subsection{Dimensions 1 and 2}

Combined with the results of the prior sections,
this leads immediately to solutions for orders $n=1$ or $n=2$.

\begin{corollary}\label{cor:dim1}
  For any ${\bm\gamma}\in\K^{(d+1)\times 1}$ that contains no zero entries,
  we have $\safe(\bm{M}'_{\bm\gamma}, \bm{N}_1)$.
\end{corollary}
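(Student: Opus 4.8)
The plan is to chain together \cref{thm:safepp}, \cref{thm:safeprime}, and \cref{lem:fullcols}. First I would invoke \cref{thm:safepp} to reduce the goal $\safe(\bm{M}'_{\bm\gamma},\bm{N}_1)$ to the equivalent $\safe''(\bm{M}'_{\bm\gamma},\bm{N}_1)$. Since $n=1$, the definition in \eqref{eqn:safepp} has only the single term $k=1$ with $\bm{P}$ the $1\times 1$ identity, so this in turn is just $\safe(\tpart(\bm{M}'_{\bm\gamma}),\tpart(\bm{N}_1))$; along the way I would record that for $n=1$ the matrix $\tpart(\bm{M}'_{\bm\gamma})$ is simply the $1\times(d+1)$ row $\begin{pmatrix}{\bm\gamma}_{0,1} & \cdots & {\bm\gamma}_{d,1}\end{pmatrix}$, all of whose entries are nonzero by hypothesis, so $\rowdim(\tpart(\bm{M}'_{\bm\gamma}))=1$.

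Next I would apply the ``$\Rightarrow$'' half of \cref{thm:safeprime}, which holds over any field, so that it suffices to prove $\safe'(\tpart(\bm{M}'_{\bm\gamma}),\tpart(\bm{N}_1))$. A counterexample to this would be a selection matrix choosing at most one column of $\tpart(\bm{M}'_{\bm\gamma})$ and $\tpart(\bm{N}_1)$. But \cref{lem:fullcols} -- whose hypothesis ``${\bm\gamma}$ has no zero entries'' is exactly what we are given -- says any such counterexample must include at least \emph{two} full-weight columns. Since one cannot pick two columns when only one is allowed, no counterexample exists; hence $\safe'(\tpart(\bm{M}'_{\bm\gamma}),\tpart(\bm{N}_1))$ holds, and unwinding the chain of reductions yields $\safe(\bm{M}'_{\bm\gamma},\bm{N}_1)$.

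I do not expect any genuine obstacle: the only mild care needed is the unwinding of $\safe''$ at $n=1$ and checking that the degenerate empty column selection (if admitted at all) contributes nothing. As a fully self-contained alternative that avoids \cref{lem:fullcols}, I could argue directly on $\tpart(\bm{M}'_{\bm\gamma})$: being a row with no zero coordinate, its right kernel contains no weight-$1$ vector, so the only kernel vector of weight at most $\rowdim(\tpart(\bm{M}'_{\bm\gamma}))=1$ is $\zm$, for which $\zr(\tpart(\bm{N}_1)\,\zm)$ holds trivially; this gives $\safe(\tpart(\bm{M}'_{\bm\gamma}),\tpart(\bm{N}_1))$ and then the claim via \cref{thm:safepp}.
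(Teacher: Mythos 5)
Your main argument is correct and follows essentially the same route as the paper's own proof: it combines \cref{lem:fullcols} (no selection of a single column can contain two full-weight columns) with \cref{thm:safeprime,thm:safepp} to conclude $\safe(\bm{M}'_{\bm\gamma},\bm{N}_1)$. Your self-contained alternative (the triangular part is a single row with no zero entries, so its kernel contains no vector of weight at most one and $\safe$ of the triangular parts holds vacuously) is also valid, but it is only a minor shortcut past \cref{lem:fullcols}, not a genuinely different approach.
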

\begin{proof}
  Clearly there is no way to include two full-weight columns in a
  selection $\bm{P}\in\select^{\ell_1}_1$ of a single column.
  Therefore from \cref{lem:fullcols}, we have
  $\neg\safe'(\tpart(\bm{M}'_{\bm\gamma}),\tpart(\bm{N}_1))$.
  By \cref{thm:safeprime,thm:safepp} this implies the statement above.
\qed\end{proof}

\begin{corollary}\label{cor:dim2}
  For any ${\bm\gamma}\in\K^{(d+1)\times 2}$ such that all square submatrices
  of ${\bm\gamma}$ are MDS, we have
  $\safe(\bm{M}'_{\bm\gamma},\bm{N}_2)$.
\end{corollary}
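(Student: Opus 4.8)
The plan is to reduce to the triangular parts via \cref{thm:safepp}, trade $\safe$ for the kernel-basis form $\safe'$ via the (unconditional) forward implication of \cref{thm:safeprime}, and then rule out every potential counterexample using \cref{lem:fullcols} together with the MDS hypothesis. Since all $1\times1$ minors of ${\bm\gamma}$ are non-zero, ${\bm\gamma}$ and every column-subset ${\bm\gamma}\bm{P}$ has no zero entries, so \cref{lem:fullcols} is applicable throughout. By \cref{thm:safepp} with $n=2$ it then suffices to prove, for each $k\in\{1,2\}$ and each $\bm{P}\in\select^2_k$, that $\safe'(\tpart(\bm{M}'_{{\bm\gamma}\bm{P}}),\tpart(\bm{N}_k))$ holds: this yields the corresponding $\safe$ statement, hence $\safe''(\bm{M}'_{\bm\gamma},\bm{N}_2)$, and finally $\safe(\bm{M}'_{\bm\gamma},\bm{N}_2)$.

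The case $k=1$ is immediate, exactly as in \cref{cor:dim1}: a selection of a single column of a triangular part cannot contain the two full-weight columns that \cref{lem:fullcols} requires of any counterexample, so $\safe'(\tpart(\bm{M}'_{{\bm\gamma}\bm{P}}),\tpart(\bm{N}_1))$ holds. For $k=2$ I would argue by contradiction. A counterexample is a selection matrix $\bm{Q}$ picking at most two columns of the triangular parts which, by \cref{lem:fullcols}, must contain at least --- hence exactly --- two distinct full-weight columns of $\tpart(\bm{M}'_{{\bm\gamma}\bm{P}})$. Reading off the displayed form of $\tpart(\bm{M}'_{\bm\gamma})$, the full-weight columns are precisely the last column $(({\bm\gamma}\bm{P})_{j,1},({\bm\gamma}\bm{P})_{j,2})^T$ of each block $\trig_{{\bm\gamma}\bm{P},j}$, $j=0,\ldots,d$; so for two distinct indices $j\ne j'$ the $2\times2$ matrix $\tpart(\bm{M}'_{{\bm\gamma}\bm{P}})\bm{Q}$ equals the transpose of the $2\times2$ submatrix of ${\bm\gamma}\bm{P}$ on rows $j$ and $j'$. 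Since $\bm{P}$ only permutes the columns of ${\bm\gamma}$, all of whose square submatrices are MDS, this submatrix is invertible; hence $\tpart(\bm{M}'_{{\bm\gamma}\bm{P}})\bm{Q}$ has trivial kernel, $\kbasis(\tpart(\bm{M}'_{{\bm\gamma}\bm{P}})\bm{Q})$ is a matrix with zero columns, and $\tpart(\bm{N}_2)\bm{Q}\cdot\kbasis(\tpart(\bm{M}'_{{\bm\gamma}\bm{P}})\bm{Q})$ vacuously has a zero row --- contradicting that $\bm{Q}$ was a counterexample.

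I do not expect a genuine obstacle. The two points that deserve care are the bookkeeping that identifies which columns of the triangular parts have full weight $2$ and recognizes the selected $2\times2$ block of $\tpart(\bm{M}'_{{\bm\gamma}\bm{P}})$ as the transpose of a $2\times2$ submatrix of ${\bm\gamma}\bm{P}$, and the observation that the argument uses only the direction $\safe'\Rightarrow\safe$ of \cref{thm:safeprime}, which carries no hypothesis on $\#\K$; thus the corollary holds over every field (for which such a ${\bm\gamma}$ exists).
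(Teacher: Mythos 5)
Your proposal is correct and follows essentially the same route as the paper: reduce to the triangular parts via \cref{thm:safepp}, pass to $\safe'$ via the unconditional direction of \cref{thm:safeprime}, invoke \cref{lem:fullcols} to force any counterexample to contain two full-weight columns, and observe that these form a transposed $2\times 2$ submatrix of ${\bm\gamma}$, which is non-singular by \cref{prop:mds_minors}. Your write-up is only slightly more explicit than the paper's (spelling out the $k=1$ sub-selections required by $\safe''$ and the fact that ${\bm\gamma}\bm{P}$ has no zero entries), but this is the same argument.
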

\begin{proof}
  Any selection of 2 columns of $\tpart(\bm{M}'_{\bm\gamma})$
  that includes at least 2 full-weight
  columns is simply a transposed submatrix of ${\bm\gamma}$ of dimension 2.
  By \cref{prop:mds_minors}, any such submatrix is non-singular, and
  thus has a trivial kernel. Therefore by \cref{lem:fullcols}
  there are no counterexamples to
  $\safe'(\tpart(\bm{M}'_{\bm\gamma}),\tpart(\bm{N}_2))$,
  and by \cref{thm:safeprime,thm:safepp} again the stated result
  follows.
\qed\end{proof}

Most notably, these corollaries guarantee that \emph{any} matrix with
column dimension 1 or 2 which satisfies \cref{pcond41} or \cref{pcond51}
is an instantiation of the respective masking scheme that is correct and
safe. Because we have explicit constructions for these preconditions in
\cref{thm:pcond41,thm:pcond51} over any field $\Fq$ with $q>2d+1$, we
also have explicit instantiations for the masking schemes secure against
1 or 2 probes.

\subsection{Dimension 3}

Next we turn to the case of $n=3$. It is no longer possible to construct
safe instances of ${\bm\gamma}$ based on the MDS preconditions alone, but
there is only one other shape of square submatrices that need be
considered.

\begin{lemma}\label{lem:dim3}
  Let ${\bm\gamma}\in\K^{(d+1)\times 3}, \bm{M}'_{\bm\gamma}, \bm{L}_n$
  be as above.
  If every square submatrix of ${\bm\gamma}$ is MDS, and for all distinct
  triples of indices $\{i,j,k\}\subseteq \{0,1,\ldots,d+1\}$ the matrix
  \[
    \begin{pmatrix}
      {\bm\gamma}_{i,1} & {\bm\gamma}_{j,1} & {\bm\gamma}_{k,1} \\
      {\bm\gamma}_{i,2} & {\bm\gamma}_{j,2} & {\bm\gamma}_{k,2} \\
      {\bm\gamma}_{i,3} & {\bm\gamma}_{j,3} & 0
    \end{pmatrix}
  \]
  is non-singular, then we have
  $\safe(\bm{M}'_{\bm\gamma},\bm{N}_3)$.
\end{lemma}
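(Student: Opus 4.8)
The plan is to combine \cref{thm:safeprime,thm:safepp} so that it suffices to show $\safe'(\tpart(\bm{M}'_{\bm\gamma}),\tpart(\bm{N}_3))$, and then to enumerate the possible ``shapes'' of a size-3 column selection $\bm{P}\in\select^{\ell_3}_3$ that could be a counterexample. By \cref{lem:fullcols}, since ${\bm\gamma}$ has no zero entries (this follows from all square submatrices being MDS, so in particular all $1\times1$ minors are nonzero), any counterexample must select at least two full-weight columns of $\tpart(\bm{M}'_{\bm\gamma})$. In dimension $n=3$, a full-weight column has the form $({\bm\gamma}_{i,1}\ {\bm\gamma}_{i,2}\ {\bm\gamma}_{i,3})^T$ for some row index $i$. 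So the only cases to consider are: (a) all three selected columns are full-weight, and (b) exactly two are full-weight and the third has weight $1$ or $2$.

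In case (a), the selected $3\times3$ submatrix of $\tpart(\bm{M}'_{\bm\gamma})$ is exactly a transposed $3\times3$ submatrix of ${\bm\gamma}$, which is non-singular by \cref{prop:mds_minors} since all square submatrices of ${\bm\gamma}$ are MDS; hence its kernel is trivial and it cannot be a counterexample. In case (b), the third column is a column of $\tpart(\bm{M}'_{\bm\gamma})$ of weight $\le 2$; inspecting the triangular structure, a weight-2 column has the form $(0\ {\bm\gamma}_{k,2}\ {\bm\gamma}_{k,3})^T$ (after a suitable reindexing, the nonzero entries are in a bottom contiguous block), and a weight-1 column has the form $(0\ 0\ {\bm\gamma}_{k,3})^T$. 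I would argue that the weight-1 case is subsumed: if the selected $3\times3$ matrix has a weight-1 last column, its kernel vector is forced to have a zero in the third coordinate, so it reduces to a $2\times2$ subproblem handled exactly as in \cref{cor:dim2} (the relevant $2\times2$ submatrix of ${\bm\gamma}$ is non-singular). That leaves the genuinely new shape: two full-weight columns from rows $i,j$ together with one weight-2 column from row $k$, giving precisely the matrix displayed in the statement (up to column order, which does not affect singularity), and this is non-singular by hypothesis. Since in every case the selected submatrix of $\tpart(\bm{M}'_{\bm\gamma})$ has trivial kernel, there is no counterexample to $\safe'$, so $\safe(\bm{M}'_{\bm\gamma},\bm{N}_3)$ holds.

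The main obstacle I anticipate is a careful bookkeeping one rather than a conceptual one: making sure the enumeration of column shapes in $\tpart(\bm{M}'_{\bm\gamma})$ for $n=3$ is genuinely exhaustive and that each shape is correctly matched, up to permutation of rows/columns, to the displayed $3\times3$ matrix (or to a lower-dimensional case already settled). In particular one must be careful that a ``weight-2 column'' of $\tpart(\bm{M}'_{\bm\gamma})$, when combined with two full-weight columns, can place its zero in any of the three coordinate positions depending on which copy of $\trig_{{\bm\gamma},k}$ and which of its columns was chosen — but by simultaneously permuting the rows of the $3\times3$ submatrix (equivalently, relabeling which $\tpart$-row is ``first'') one always lands on the stated form with the single $0$ in the bottom-right corner and arbitrary triple $\{i,j,k\}\subseteq\{0,\ldots,d+1\}$, including possibly $k\in\{i,j\}$ or repeated indices, which the hypothesis must and does cover for distinct triples while the repeated-index cases collapse to singular-by-MDS $2\times2$ situations. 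Once that case analysis is pinned down, invoking \cref{lem:fullcols} to rule out everything with fewer than two full-weight columns, and \cref{thm:safeprime,thm:safepp} to pass back from $\safe'$ on the triangular part to $\safe$ on the full matrices, closes the argument.
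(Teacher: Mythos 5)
Your proposal is correct and takes essentially the same route as the paper's proof: pass to $\safe'$ on the triangular parts via \cref{thm:safeprime,thm:safepp}, use \cref{lem:fullcols} to force two full-weight columns, and split on the weight of the third selected column, with the MDS hypothesis disposing of the weight-1 and weight-3 cases and the displayed matrix covering the weight-2 case. Two harmless slips are worth noting: in $\tpart(\bm{M}'_{\bm\gamma})$ the low-weight columns are $({\bm\gamma}_{k,1}\;\;0\;\;0)^T$ and $({\bm\gamma}_{k,1}\;\;{\bm\gamma}_{k,2}\;\;0)^T$, i.e.\ the zeros sit at the \emph{bottom}, so a weight-2 column always has its zero in the last row and no row relabeling is needed --- which is fortunate, since permuting rows would also permute the column indices of ${\bm\gamma}$ and land outside the stated hypothesis; and the repeated-index case $k\in\{i,j\}$, which you only assert collapses to a $2\times2$ MDS situation, is settled in the paper by subtracting the weight-2 column from the full-weight column with the same index, exposing a non-singular $2\times 2$ submatrix of ${\bm\gamma}$.
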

\begin{proof}
  The goal is to ensure that no square submatrix of
  $\tpart(\bm{M}'_{\bm\gamma})$ which could possibly be part of a
  counterexample to
  $\safe'(\tpart(\bm{M}'_{\bm\gamma}),\tpart(\bm{N}_3))$ has a
  non-trivial kernel. Already we know from \cref{lem:fullcols} that any
  such submatrix must include two distinct full-weight columns.
  Because all square submatrices of ${\bm\gamma}$ are MDS, these two columns
  have a trivial kernel, meaning a third column must be added if one
  hopes to find a counterexample. This leads to three cases, depending
  on the weight of this third column.

  If the third column has weight 1, the situation is analogous to that
  of \cref{ex:mds3}. The corresponding matrix is non-singular if and only
  if some $2\times 2$ submatrix of ${\bm\gamma}$ is non-singular, which it
  must be by the MDS assumption.

  Next, if the third column has full weight 3, then we have a $3\times
  3$ submatrix of ${\bm\gamma}$, which again must be non-singular.

  The remaining case is that the third column has weight 2, as in the
  statement of the lemma. All that remains is to prove that this index
  $k$ must be distinct from $i$ and $j$. By way of contradiction, and
  without loss of generality, suppose $i=k$. Then after subtracting the
  third column from the first, we obtain the matrix
  \[\begin{pmatrix}
      0 & {\bm\gamma}_{j,1} & {\bm\gamma}_{i,1} \\
      0 & {\bm\gamma}_{j,2} & {\bm\gamma}_{i,2} \\
      {\bm\gamma}_{i,3} & {\bm\gamma}_{j,3} & 0
  \end{pmatrix},\]
  which is non-singular if and only if the original matrix is
  non-singular. And indeed, this matrix must be non-singular because the
  upper-right $2\times 2$ matrix is a submatrix of ${\bm\gamma}$.

  Therefore the only remaining case of a submatrix which could be a
  counterexample to
  $\safe'(\tpart(\bm{M}'_{\bm\gamma}),\tpart(\bm{N}_3))$ is one of
  the form given in the statement of the lemma. Applying once again
  \cref{thm:safeprime,thm:safepp} completes the proof.
\qed\end{proof}

This finally leads to a way to construct safe instances for the schemes
when $d=3$ based only on polynomial conditions, via the following steps:
\begin{enumerate}
  \item Write down a symbolic $4\times 3$ matrix ${\bm\gamma}$ satisfying
    \cref{pcond41} or \cref{pcond51} according to the constructions of
    \cref{thm:pcond41} (resp.\ \cref{thm:pcond51}), leaving all the
    $x_i$'s and $y_i$'s as indeterminates.
  \item Extract all $3\times 3$ matrices from ${\bm\gamma}$ that match the
    form of \cref{lem:dim3} and compute their determinants, which are
    rational functions in the $x_i$s and $y_i$s.
  \item Factor the numerators of all determinants, removing duplicate
    factors and factors such as $x_i-y_i$ which must be non-zero by
    construction.
  \item A common non-root to the resulting list of polynomials
    corresponds to a ${\bm\gamma}$ matrix which is safe for the given scheme.
\end{enumerate}

Next we show the results of these computations for each of the two
schemes. We used the Sage~\cite{sagemath} computer algebra system to compute
the lists of polynomials according to the procedure above, which takes
about 1 second on a modern laptop computer.

\begin{figure}[bp]
\[\boxed{\scalebox{0.75}{$\displaystyle\begin{matrix}
x_{2} x_{3} -  y_{1} y_{2} -  x_{2} y_{3} -  x_{3} y_{3} + y_{1} y_{3} + y_{2} y_{3} \\
x_{2} x_{3} -  x_{3} y_{1} -  x_{3} y_{2} + y_{1} y_{2} -  x_{2} y_{3} + x_{3} y_{3} \\
x_{2} x_{3} -  x_{2} y_{1} -  x_{2} y_{2} + y_{1} y_{2} + x_{2} y_{3} -  x_{3} y_{3} \\
x_{1} x_{3} -  y_{1} y_{2} -  x_{1} y_{3} -  x_{3} y_{3} + y_{1} y_{3} + y_{2} y_{3} \\
x_{1} x_{3} -  x_{3} y_{1} -  x_{3} y_{2} + y_{1} y_{2} -  x_{1} y_{3} + x_{3} y_{3} \\
x_{1} x_{3} -  x_{1} y_{1} -  x_{1} y_{2} + y_{1} y_{2} + x_{1} y_{3} -  x_{3} y_{3} \\
x_{1} x_{2} -  y_{1} y_{2} -  x_{1} y_{3} -  x_{2} y_{3} + y_{1} y_{3} + y_{2} y_{3} \\
x_{1} x_{2} -  x_{2} y_{1} -  x_{2} y_{2} + y_{1} y_{2} -  x_{1} y_{3} + x_{2} y_{3} \\
x_{1} x_{2} -  x_{1} y_{1} -  x_{1} y_{2} + y_{1} y_{2} + x_{1} y_{3} -  x_{2} y_{3} \\
x_{2} y_{1} y_{2} -  x_{3} y_{1} y_{2} -  x_{2} x_{3} y_{3} + x_{3} y_{1} y_{3} + x_{3} y_{2} y_{3} -  y_{1} y_{2} y_{3} \\
x_{2} y_{1} y_{2} -  x_{3} y_{1} y_{2} + x_{2} x_{3} y_{3} -  x_{2} y_{1} y_{3} -  x_{2} y_{2} y_{3} + y_{1} y_{2} y_{3} \\
x_{1} y_{1} y_{2} -  x_{3} y_{1} y_{2} -  x_{1} x_{3} y_{3} + x_{3} y_{1} y_{3} + x_{3} y_{2} y_{3} -  y_{1} y_{2} y_{3} \\
x_{1} y_{1} y_{2} -  x_{3} y_{1} y_{2} + x_{1} x_{3} y_{3} -  x_{1} y_{1} y_{3} -  x_{1} y_{2} y_{3} + y_{1} y_{2} y_{3} \\
x_{1} y_{1} y_{2} -  x_{2} y_{1} y_{2} -  x_{1} x_{2} y_{3} + x_{2} y_{1} y_{3} + x_{2} y_{2} y_{3} -  y_{1} y_{2} y_{3} \\
x_{1} y_{1} y_{2} -  x_{2} y_{1} y_{2} + x_{1} x_{2} y_{3} -  x_{1} y_{1} y_{3} -  x_{1} y_{2} y_{3} + y_{1} y_{2} y_{3} \\
x_{2} x_{3} y_{1} + x_{2} x_{3} y_{2} -  x_{2} y_{1} y_{2} -  x_{3} y_{1} y_{2} -  x_{2} x_{3} y_{3} + y_{1} y_{2} y_{3} \\
x_{1} x_{3} y_{1} + x_{1} x_{3} y_{2} -  x_{1} y_{1} y_{2} -  x_{3} y_{1} y_{2} -  x_{1} x_{3} y_{3} + y_{1} y_{2} y_{3} \\
x_{1} x_{2} y_{1} + x_{1} x_{2} y_{2} -  x_{1} y_{1} y_{2} -  x_{2} y_{1} y_{2} -  x_{1} x_{2} y_{3} + y_{1} y_{2} y_{3} \\
x_{1} x_{2} x_{3} -  x_{2} x_{3} y_{1} -  x_{2} x_{3} y_{2} -  x_{1} y_{1} y_{2} + x_{2} y_{1} y_{2} + x_{3} y_{1} y_{2} -  x_{1} x_{2} y_{3} -  x_{1} x_{3} y_{3} + x_{2} x_{3} y_{3} + x_{1} y_{1} y_{3} + x_{1} y_{2} y_{3} -  y_{1} y_{2} y_{3} \\
x_{1} x_{2} x_{3} -  x_{1} x_{3} y_{1} -  x_{1} x_{3} y_{2} + x_{1} y_{1} y_{2} -  x_{2} y_{1} y_{2} + x_{3} y_{1} y_{2} -  x_{1} x_{2} y_{3} + x_{1} x_{3} y_{3} -  x_{2} x_{3} y_{3} + x_{2} y_{1} y_{3} + x_{2} y_{2} y_{3} -  y_{1} y_{2} y_{3} \\
x_{1} x_{2} x_{3} -  x_{1} x_{2} y_{1} -  x_{1} x_{2} y_{2} + x_{1} y_{1} y_{2} + x_{2} y_{1} y_{2} -  x_{3} y_{1} y_{2} + x_{1} x_{2} y_{3} -  x_{1} x_{3} y_{3} -  x_{2} x_{3} y_{3} + x_{3} y_{1} y_{3} + x_{3} y_{2} y_{3} -  y_{1} y_{2} y_{3}
\end{matrix}$}}\]
  \caption{Polynomials which should be non-zero to generate a safe
  construction according to \cref{cond41}.
  There are 9 degree-2 polynomials with 6 terms, 9 degree-3 polynomials
  with 6 terms, and 3 degree-3 polynomials with 12 terms.%
  \label{fig:polys41}}
\end{figure}

\begin{proposition}
If $x_1,x_2,x_3,y_1,y_2,y_3\in \Fq$ are distinct non-zero elements so
that the list of polynomials in \cref{fig:polys41} all evaluate to non-zero values,
then the matrix ${\bm\gamma}$ constructed
according to \cref{thm:pcond41} generates a safe masking
scheme according to \cref{cond41}.
\end{proposition}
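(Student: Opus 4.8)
The plan is to chain together the reductions of \cref{sec:simp,sec:precond}, ending with the symbolic computation that produces \cref{fig:polys41}. By \cref{thm:pcond41}, the matrix ${\bm\gamma}$ built from $x_1,x_2,x_3,y_1,y_2,y_3$ satisfies \cref{pcond41}: its top row is all $1$'s, so item~(1) of \cref{prop:scheme4} holds, and both $\bm{A}$ and $\om_{d\times d}-\bm{A}$ are row XMDS, so that every square submatrix of ${\bm\gamma}$ and of ${\bm\delta}=\begin{pmatrix}\om_{1\times d}\\\om_{d\times d}-\bm{A}\end{pmatrix}$ is invertible, \ie MDS. By \cref{prop:scheme4} it therefore suffices to establish the two remaining safety conditions $\safe(\bm{M}'_{\bm\gamma},\bm{L})$ and $\safe(\bm{M}'_{\bm\delta},\bm{L})$.

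For the first one I would invoke \cref{lem:dim3} with $n=d=3$: since all square submatrices of ${\bm\gamma}$ are MDS (so in particular ${\bm\gamma}$ has no zero entry, as \cref{lem:fullcols} requires), $\safe(\bm{M}'_{\bm\gamma},\bm{L})$ follows as soon as, for every choice of distinct row indices $i,j,k$ of ${\bm\gamma}$, the matrix
\[
  \begin{pmatrix}
    {\bm\gamma}_{i,1} & {\bm\gamma}_{j,1} & {\bm\gamma}_{k,1}\\
    {\bm\gamma}_{i,2} & {\bm\gamma}_{j,2} & {\bm\gamma}_{k,2}\\
    {\bm\gamma}_{i,3} & {\bm\gamma}_{j,3} & 0
  \end{pmatrix}
\]
is non-singular. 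Plugging in the explicit entries (${\bm\gamma}_{0,j}=1$ and ${\bm\gamma}_{i,j}=x_i/(x_i-y_j)$ for $i\ge 1$) turns each of these finitely many determinants into an explicit rational function in the six indeterminates, whose denominator is a product of the factors $x_i-y_j$. Factoring the numerators and discarding the factors that cannot vanish under the hypotheses — namely the $x_i$, the $y_j$, the $x_i-x_j$, the $y_i-y_j$, and the $x_i-y_j$ — leaves exactly the list of \cref{fig:polys41}; this is the content of the Sage computation described just above. Hence whenever all those polynomials are non-zero, so is each of the relevant determinants, and $\safe(\bm{M}'_{\bm\gamma},\bm{L})$ holds by \cref{lem:dim3} together with \cref{thm:safeprime,thm:safepp}.

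For the second condition the key remark is that ${\bm\delta}$ is \emph{also} of the form produced by \cref{thm:pcond41}, only for the reciprocal parameters: since $1-x_i/(x_i-y_j)=(1/x_i)/\bigl((1/x_i)-(1/y_j)\bigr)$, one has ${\bm\delta}=\begin{pmatrix}\om_{1\times d}\\\widehat{\bm{A}}\end{pmatrix}$ with $\widehat{\bm{A}}_{i,j}=\hat x_i/(\hat x_i-\hat y_j)$ for $(\hat x_i,\hat y_j)=(1/x_i,1/y_j)$, and $\hat x_1,\ldots,\hat y_3$ are again six distinct non-zero elements of $\Fq$. Applying \cref{lem:dim3} to ${\bm\delta}$ exactly as above thus reduces $\safe(\bm{M}'_{\bm\delta},\bm{L})$ to requiring that the polynomials of \cref{fig:polys41}, evaluated at the reciprocal parameters, be non-zero. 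It then suffices to check — a finite verification on the list itself — that \cref{fig:polys41} is stable under the substitution $x_i\mapsto 1/x_i$, $y_j\mapsto 1/y_j$ followed by clearing denominators, \ie that the ``reciprocal'' of each of the $21$ polynomials is, up to a monomial factor, again one of them (for instance $x_2x_3-y_1y_2-x_2y_3-x_3y_3+y_1y_3+y_2y_3$ is sent to $x_2x_3y_1+x_2x_3y_2-x_2y_1y_2-x_3y_1y_2-x_2x_3y_3+y_1y_2y_3$, which appears in the list). Given this, the stated hypothesis also forces every ${\bm\delta}$-determinant to be non-zero, so $\safe(\bm{M}'_{\bm\delta},\bm{L})$ holds as well, and combining the two safety conditions with item~(1) via \cref{prop:scheme4} shows that ${\bm\gamma}$ satisfies \cref{cond41}.

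The step I expect to be the main obstacle is the symbolic computation underpinning the two middle paragraphs. One must check that the numerators of the determinants coming from \cref{lem:dim3} factor \emph{exactly} into the $21$ polynomials of \cref{fig:polys41} up to the listed harmless factors, and that this list is closed under the reciprocal substitution needed to handle ${\bm\delta}$. These are finite factorisation computations (about one second in a computer algebra system), but the delicate point is completeness: one must be certain that no additional irreducible factor has been overlooked, and that none of the factors deemed harmless can actually vanish once the $x_i$ and $y_j$ are taken distinct and non-zero.
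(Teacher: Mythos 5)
Your proposal is correct and rests on the same machinery as the paper: \cref{thm:pcond41} supplies correctness and the (X)MDS hypotheses, \cref{lem:dim3} combined with \cref{thm:safeprime,thm:safepp} reduces each safety requirement of \cref{prop:scheme4} to the non-vanishing of finitely many $3\times 3$ determinants, and a symbolic factorization links those determinants to \cref{fig:polys41}; the paper offers no proof beyond this procedure and the Sage computation. The one place you genuinely deviate is the treatment of ${\bm\delta}$. The paper's list evidently already incorporates the determinants coming from ${\bm\delta}$ — this is why \cref{fig:polys41} has 21 polynomials rather than the 12 that arise from ${\bm\gamma}$ alone (the nine degree-3 six-term entries are precisely the ${\bm\delta}$ factors) — so your claim that factoring the ${\bm\gamma}$-determinants ``leaves exactly the list'' is inaccurate, though harmless, because the implication you need only requires those factors to be \emph{contained} in the list. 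Your alternative handling of ${\bm\delta}$ — observing that it is the \cref{thm:pcond41} construction at the reciprocal parameters $(1/x_i,1/y_j)$, which are again distinct and non-zero, and checking that the list is stable (up to monomial factors and sign) under that substitution, as in your example, which indeed reproduces the sixteenth polynomial — is correct and self-contained, and it neatly explains why the nine degree-3 six-term polynomials are exactly the reciprocal images of the nine degree-2 ones. Both routes ultimately lean on the same finite computer-algebra verification, which you rightly identify as the only real point of trust.
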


From the degrees of these polynomials, and by the Schwartz-Zippel lemma
\cite{DBLP:journals/jacm/Schwartz80} and applying the union bound,
a safe construction for \cref{cond41} exists over any field $\Fq$ with $q>54$.

In fact, we have an explicit construction for any binary field
$\Fq$ with $q\ge 16$.

\begin{theorem}\label{thm:dim3explicit41}
  Let $(x_1,x_2,x_3) = (
\texttt{1}, \texttt{3}, \texttt{5})$ and
  $(y_1,y_2,y_3)=(\texttt{6}, \texttt{4}, \texttt{a})$.
  Then for any $k\ge 4$, the matrix ${\bm\gamma}$ constructed according to
  \cref{thm:pcond41} generates a safe masking scheme over
  $\mathbb{F}_{2^k}$ according to \cref{cond41}.
\end{theorem}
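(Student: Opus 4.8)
The plan is to reduce everything to the proposition stated immediately above, whose hypotheses are exactly: the six elements $x_1,x_2,x_3,y_1,y_2,y_3$ are distinct and non-zero, and each of the $21$ polynomials of \cref{fig:polys41} evaluates to a non-zero element at the prescribed point. The first hypothesis is immediate, since in hexadecimal the prescribed values are $\texttt{1},\texttt{3},\texttt{5},\texttt{6},\texttt{4},\texttt{a}$, i.e.\ the pairwise distinct non-zero integers $1,3,5,6,4,10$; each of these is $<16\le 2^k$, hence a valid non-zero element of $\mathbb{F}_{2^k}$ for every $k\ge 4$, and they remain pairwise distinct there. So the whole content of the theorem is that the $21$ non-vanishing conditions hold in $\mathbb{F}_{2^k}$ \emph{for all} $k\ge 4$ at once.

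To obtain this uniformly in $k$, I would first observe that each prescribed value has a canonical lift to $\mathbb{F}_2[X]$ of degree at most $3$ (namely $1$, $X+1$, $X^2+1$, $X^2+X$, $X^2$, $X^3+X$ respectively). Substituting these lifts into a polynomial $p$ of \cref{fig:polys41} and performing all arithmetic in $\mathbb{F}_2[X]$ \emph{without reduction} yields a fixed polynomial $g_p(X)\in\mathbb{F}_2[X]$, independent of $k$; since every monomial of $p$ is a product of at most three of the six variables and each lift has degree at most $3$, we get $\deg g_p\le 9$. A direct expansion of all $21$ polynomials shows that each $g_p$ is \emph{non-zero}. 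Since the value of $p$ in $\mathbb{F}_{2^k}\cong\mathbb{F}_2[X]/\langle I_k(X)\rangle$ is exactly $g_p\bmod I_k(X)$ (image under the quotient map), and a non-zero polynomial of degree $<k$ cannot be divisible by the degree-$k$ polynomial $I_k$, the evaluation of $p$ is non-zero for every $k>9$, that is, for all $k\ge 10$.

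It then remains to treat the finitely many cases $k\in\{4,\ldots,9\}$. For each such $k$ and each of the $21$ polynomials $p$, I would reduce $g_p(X)$ modulo the specific irreducible $I_k(X)$ fixed in \cref{sec:prelims} and verify that the remainder is non-zero; this is a short finite computation, best carried out by computer alongside the expansions above. Combined with the case $k\ge 10$, this shows that the prescribed point satisfies the hypotheses of the preceding proposition over every $\mathbb{F}_{2^k}$ with $k\ge 4$, so the matrix ${\bm\gamma}$ built from it via \cref{thm:pcond41} generates a safe masking scheme over $\mathbb{F}_{2^k}$ according to \cref{cond41}, as claimed.

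I expect no conceptual obstacle: the argument is a verification, and the structural work of reducing safety to these polynomial inequalities was already done in \cref{thm:safeprime,thm:safepp,lem:dim3}, packaged into the preceding proposition. The two points that require a little care are (i) that there are infinitely many target fields $\mathbb{F}_{2^k}$, which is why the proof is split into a uniform ``degree bound $\Rightarrow$ automatic for large $k$'' part and a handful of explicit small-field checks, and (ii) making sure the reduction-free polynomials $g_p$ are genuinely non-zero, since in principle leading terms could cancel among the many summands of the $12$-term degree-$3$ polynomials in \cref{fig:polys41} — which the explicit expansion rules out.
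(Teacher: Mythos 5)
Your proposal is correct and follows essentially the same route as the paper: explicit computational checks for finitely many small fields plus a degree-bound argument showing no modular reduction occurs for large $k$. The only (harmless) difference is that the paper uses the sharper bound $3+2+2=7$ on the lifted polynomials, so it needs to check only $4\le k\le 8$ and deduces non-vanishing of the unreduced polynomials directly from the $\mathbb{F}_{2^8}$ check, whereas your coarser bound of $9$ makes you check $k\le 9$ and verify the unreduced expansions separately.
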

\begin{proof}
  Small cases with $4\le k \le 8$ are checked computationally by making
  the appropriate substitutions into the polynomials of
  \cref{fig:polys41}.

  For $k\ge 9$, consider the degrees of the $x_i$s and $y_i$s when
  treated as polynomials over $\mathbb{F}_2$. The highest degree is
  $\deg y_3=3$, and all other elements have degree at most $2$.
  Inspecting the polynomials in \cref{fig:polys41}, we see that they are
  all sums of products of at most three distinct variables. Therefore, when evaluated at
  these $x_i$s and $y_i$s, the degree of any resulting polynomial is at
  most $7$. Over $\mathbb{F}_{2^k}$ where $k\ge 8$ there is therefore no
  reduction, and the polynomials are guaranteed to be non-zero in all
  cases because they are non-zero over $\mathbb{F}_{2^8}$.
\qed\end{proof}

Next we do the same for the masking scheme with linear randomness,
namely that of \cref{cond51}.

\begin{figure}[bp]
\[\boxed{\scalebox{0.75}{$\displaystyle\begin{matrix}
x_{2} x_{3} x_{4} -  x_{3} x_{4} y_{1} -  x_{3} x_{4} y_{2} -  x_{2} y_{1} y_{2} + x_{3} y_{1} y_{2} + x_{4} y_{1} y_{2} -  x_{2} x_{3} y_{3} -  x_{2} x_{4} y_{3} + x_{3} x_{4} y_{3} + x_{2} y_{1} y_{3} + x_{2} y_{2} y_{3} -  y_{1} y_{2} y_{3} \\
x_{2} x_{3} x_{4} -  x_{2} x_{4} y_{1} -  x_{2} x_{4} y_{2} + x_{2} y_{1} y_{2} -  x_{3} y_{1} y_{2} + x_{4} y_{1} y_{2} -  x_{2} x_{3} y_{3} + x_{2} x_{4} y_{3} -  x_{3} x_{4} y_{3} + x_{3} y_{1} y_{3} + x_{3} y_{2} y_{3} -  y_{1} y_{2} y_{3} \\
x_{2} x_{3} x_{4} -  x_{2} x_{3} y_{1} -  x_{2} x_{3} y_{2} + x_{2} y_{1} y_{2} + x_{3} y_{1} y_{2} -  x_{4} y_{1} y_{2} + x_{2} x_{3} y_{3} -  x_{2} x_{4} y_{3} -  x_{3} x_{4} y_{3} + x_{4} y_{1} y_{3} + x_{4} y_{2} y_{3} -  y_{1} y_{2} y_{3} \\
x_{1} x_{3} x_{4} -  x_{3} x_{4} y_{1} -  x_{3} x_{4} y_{2} -  x_{1} y_{1} y_{2} + x_{3} y_{1} y_{2} + x_{4} y_{1} y_{2} -  x_{1} x_{3} y_{3} -  x_{1} x_{4} y_{3} + x_{3} x_{4} y_{3} + x_{1} y_{1} y_{3} + x_{1} y_{2} y_{3} -  y_{1} y_{2} y_{3} \\
x_{1} x_{3} x_{4} -  x_{1} x_{4} y_{1} -  x_{1} x_{4} y_{2} + x_{1} y_{1} y_{2} -  x_{3} y_{1} y_{2} + x_{4} y_{1} y_{2} -  x_{1} x_{3} y_{3} + x_{1} x_{4} y_{3} -  x_{3} x_{4} y_{3} + x_{3} y_{1} y_{3} + x_{3} y_{2} y_{3} -  y_{1} y_{2} y_{3} \\
x_{1} x_{3} x_{4} -  x_{1} x_{3} y_{1} -  x_{1} x_{3} y_{2} + x_{1} y_{1} y_{2} + x_{3} y_{1} y_{2} -  x_{4} y_{1} y_{2} + x_{1} x_{3} y_{3} -  x_{1} x_{4} y_{3} -  x_{3} x_{4} y_{3} + x_{4} y_{1} y_{3} + x_{4} y_{2} y_{3} -  y_{1} y_{2} y_{3} \\
x_{1} x_{2} x_{4} -  x_{2} x_{4} y_{1} -  x_{2} x_{4} y_{2} -  x_{1} y_{1} y_{2} + x_{2} y_{1} y_{2} + x_{4} y_{1} y_{2} -  x_{1} x_{2} y_{3} -  x_{1} x_{4} y_{3} + x_{2} x_{4} y_{3} + x_{1} y_{1} y_{3} + x_{1} y_{2} y_{3} -  y_{1} y_{2} y_{3} \\
x_{1} x_{2} x_{4} -  x_{1} x_{4} y_{1} -  x_{1} x_{4} y_{2} + x_{1} y_{1} y_{2} -  x_{2} y_{1} y_{2} + x_{4} y_{1} y_{2} -  x_{1} x_{2} y_{3} + x_{1} x_{4} y_{3} -  x_{2} x_{4} y_{3} + x_{2} y_{1} y_{3} + x_{2} y_{2} y_{3} -  y_{1} y_{2} y_{3} \\
x_{1} x_{2} x_{4} -  x_{1} x_{2} y_{1} -  x_{1} x_{2} y_{2} + x_{1} y_{1} y_{2} + x_{2} y_{1} y_{2} -  x_{4} y_{1} y_{2} + x_{1} x_{2} y_{3} -  x_{1} x_{4} y_{3} -  x_{2} x_{4} y_{3} + x_{4} y_{1} y_{3} + x_{4} y_{2} y_{3} -  y_{1} y_{2} y_{3} \\
x_{1} x_{2} x_{3} -  x_{2} x_{3} y_{1} -  x_{2} x_{3} y_{2} -  x_{1} y_{1} y_{2} + x_{2} y_{1} y_{2} + x_{3} y_{1} y_{2} -  x_{1} x_{2} y_{3} -  x_{1} x_{3} y_{3} + x_{2} x_{3} y_{3} + x_{1} y_{1} y_{3} + x_{1} y_{2} y_{3} -  y_{1} y_{2} y_{3} \\
x_{1} x_{2} x_{3} -  x_{1} x_{3} y_{1} -  x_{1} x_{3} y_{2} + x_{1} y_{1} y_{2} -  x_{2} y_{1} y_{2} + x_{3} y_{1} y_{2} -  x_{1} x_{2} y_{3} + x_{1} x_{3} y_{3} -  x_{2} x_{3} y_{3} + x_{2} y_{1} y_{3} + x_{2} y_{2} y_{3} -  y_{1} y_{2} y_{3} \\
x_{1} x_{2} x_{3} -  x_{1} x_{2} y_{1} -  x_{1} x_{2} y_{2} + x_{1} y_{1} y_{2} + x_{2} y_{1} y_{2} -  x_{3} y_{1} y_{2} + x_{1} x_{2} y_{3} -  x_{1} x_{3} y_{3} -  x_{2} x_{3} y_{3} + x_{3} y_{1} y_{3} + x_{3} y_{2} y_{3} -  y_{1} y_{2} y_{3}
\end{matrix}$}}\]
  \caption{Polynomials which should be non-zero to generate a safe
  construction according to \cref{cond51}.
  There are 12 degree-3 polynomials with 12 terms each.%
  \label{fig:polys51}}
\end{figure}

\begin{proposition}
If $x_1,x_2,x_3,x_4,y_1,y_2,y_3\in \Fq$ are distinct non-zero elements so
that the list of polynomials in \cref{fig:polys51} all evaluate to non-zero values,
then the matrix constructed according to \cref{thm:pcond51} generates a safe masking
scheme according to \cref{cond51}.
\end{proposition}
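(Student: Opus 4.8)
The plan mirrors the (implicit) treatment of \cref{cond41}: reduce the claim to \cref{lem:dim3}, and then identify the finitely many non-singularity requirements it produces with the polynomial list of \cref{fig:polys51}.

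First I would unpack what \cref{thm:pcond51} gives in the case $d=3$. The matrix ${\bm\gamma}\in\Fq^{4\times 3}$ it produces is a generalized Cauchy matrix: it is obtained by scaling the rows of the $4\times 3$ Cauchy matrix $\bigl((x_i-y_j)^{-1}\bigr)$ by a nonzero left-kernel vector $\bm{c}=(c_1,\dots,c_4)$, which (as shown in that proof) has full Hamming weight. Consequently $\sum_i {\bm\gamma}_i = \zm_{1\times 3}$, every square submatrix of ${\bm\gamma}$ is MDS, every $c_i$ is nonzero, and ${\bm\gamma}$ has no zero entry. By \cref{prop:scheme5}, the scheme of \cref{sec:lin_rand} instantiated with ${\bm\gamma}$ is correct and safe if and only if the rows of ${\bm\gamma}$ sum to zero -- which they do -- and $\safe(\bm{M}'_{\bm\gamma},\bm{L}')$ holds.

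Since ${\bm\gamma}$ has exactly three columns, I would then apply \cref{lem:dim3} with $\bm{N}_3=\bm{L}'_3$. Its first hypothesis -- that every square submatrix of ${\bm\gamma}$ is MDS -- is exactly what \cref{thm:pcond51} supplies (this already absorbs the ``no zero entries'' requirement used in the \cref{lem:fullcols} step of its proof). Its second hypothesis asks that, for every triple of distinct row indices $\{i,j,k\}$ of ${\bm\gamma}$, the associated $3\times 3$ matrix displayed in \cref{lem:dim3} (two full-weight columns from rows $i,j$ and one weight-two column from row $k$) be non-singular. With four rows available there are $\binom{4}{2}\cdot 2 = 12$ such matrices -- an unordered pair $\{i,j\}$ together with a distinct index $k$ -- which is exactly the count reported in \cref{fig:polys51}.

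The last step, the only genuine computation, is to evaluate these twelve determinants. Substituting ${\bm\gamma}_{i,j}=c_i(x_i-y_j)^{-1}$ one factors $c_ic_jc_k$ out of the three columns; what remains is the determinant of a ``punctured'' $3\times 3$ Cauchy matrix, a rational function whose denominator is a product of factors $x_\bullet-y_\bullet$ and whose numerator is, up to sign, precisely one of the degree-$3$, $12$-term polynomials of \cref{fig:polys51}. As the $c_i$ are nonzero, the $x_\bullet-y_\bullet$ are nonzero by distinctness, and every polynomial of \cref{fig:polys51} is assumed to evaluate to a nonzero value, each of the twelve matrices is non-singular; hence \cref{lem:dim3} yields $\safe(\bm{M}'_{\bm\gamma},\bm{L}'_3)$ and \cref{prop:scheme5} closes the argument. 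The main obstacle is purely the bookkeeping: verifying that after clearing denominators and discarding the factors guaranteed nonzero by the generalized-Cauchy construction one is left with \emph{exactly} the list of \cref{fig:polys51}, and that the triples in which $k$ equals $i$ or $j$ contribute no further conditions (they are automatically non-singular by the MDS property, as in the proof of \cref{lem:dim3}). This is precisely the short computer-algebra computation reported around \cref{fig:polys41}.
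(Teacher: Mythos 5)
Your proposal is correct and follows essentially the same route as the paper: reduce via \cref{prop:scheme5} and \cref{lem:dim3} to the twelve non-singularity conditions (one per unordered pair of full-weight columns plus a distinct weight-two column), factor out the nonzero Cauchy row scalings $c_i$ and the denominators $x_\bullet-y_\bullet$, and identify the remaining numerators with the polynomials of \cref{fig:polys51} via the same computer-algebra computation the paper relies on. The only cosmetic slip is that your closing reference should point to \cref{fig:polys51} rather than \cref{fig:polys41}.
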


Applying the
Schwartz-Zippel lemma and union bound in this context guarantees
a safe construction for \cref{cond51} over any field $\Fq$ with $q>36$.
Again, we have an explicit construction for binary fields of order at
least $16$.

\begin{theorem}\label{thm:dim3explicit51}
  Let $(x_1,x_2,x_3,x_4) = (
\texttt{1}, \texttt{2}, \texttt{5}, \texttt{6})$ and
  $(y_1,y_2,y_3)=(\texttt{4}, \texttt{7}, \texttt{f})$.
  Then for any $k\ge 4$, the matrix ${\bm\gamma}$ constructed according to
  \cref{thm:pcond51} generates a safe masking scheme over
  $\mathbb{F}_{2^k}$ according to \cref{cond51}.
\end{theorem}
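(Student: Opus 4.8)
The plan is to mirror the proof of \cref{thm:dim3explicit41}, working now with \cref{fig:polys51} and the construction of \cref{thm:pcond51} in place of \cref{fig:polys41} and \cref{thm:pcond41}. First I would check the hypotheses of \cref{thm:pcond51}: the seven chosen values $\texttt{1},\texttt{2},\texttt{5},\texttt{6},\texttt{4},\texttt{7},\texttt{f}$, viewed as polynomials of degree $<4$ over $\mathbb{F}_2$, are pairwise distinct and non-zero, hence remain distinct non-zero elements of every $\mathbb{F}_{2^k}$ with $k\ge 4$; thus the Cauchy matrix $\bm{A}=\begin{pmatrix}(x_i-y_j)^{-1}\end{pmatrix}$ is well defined over each such field, its left kernel is one-dimensional and spanned by a vector $\bm{c}$ of full weight, and the resulting ${\bm\gamma}$ satisfies \cref{pcond51}. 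By the proposition stated just before this theorem, it then suffices to verify that none of the $12$ polynomials of \cref{fig:polys51} vanishes at this point in $\mathbb{F}_{2^k}$.

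For the small cases $4\le k\le 8$, I would substitute the given hexadecimal values into the polynomials of \cref{fig:polys51} and check by a direct computation that each of the resulting $12$ field elements is non-zero, exactly as was done for \cref{thm:dim3explicit41}.

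For $k\ge 8$ (overlapping the previous range at $k=8$), the argument is the degree bound from \cref{thm:dim3explicit41}. Regarded as polynomials over $\mathbb{F}_2$, the chosen values have degrees $\deg\texttt{1}=0$, $\deg\texttt{2}=1$, $\deg\texttt{4}=\deg\texttt{5}=\deg\texttt{6}=\deg\texttt{7}=2$ and $\deg\texttt{f}=3$, so $y_3$ is the unique value of degree $3$. Each of the $12$ polynomials of \cref{fig:polys51} is a sum of squarefree monomials, each a product of exactly three distinct indeterminates; after substitution every such monomial is therefore a product of at most one degree-$3$ factor and at most two factors of degree $\le 2$, so it has degree at most $3+2+2=7$, and so does the sum. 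Since $7<8\le k$, no reduction modulo the defining polynomial of $\mathbb{F}_{2^k}$ ever occurs during the evaluation (each intermediate product of at most three such factors already has degree $<k$), so for every $k\ge 8$ each evaluated polynomial is the same element of $\mathbb{F}_2[X]$ of degree $\le 7$. Hence non-vanishing over $\mathbb{F}_{2^8}$, which was checked above, implies non-vanishing over $\mathbb{F}_{2^k}$ for all $k\ge 8$; together with the finite checks this covers all $k\ge 4$, and \cref{thm:pcond51} with the preceding proposition then yields the claim.

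The step needing the most care is really the choice of the seven values rather than any step of the verification: they must be picked so that all $12$ non-vanishing conditions hold over the small binary fields while keeping a single value of degree $3$ and all others of degree $\le 2$, which is exactly what keeps the total degree at most $7$ and lets the ``no reduction'' argument extend to every $k\ge 8$. Finding such a point is a bounded search over small field elements, guided by the explicit polynomials of \cref{fig:polys51}; once it is exhibited, the rest is mechanical.
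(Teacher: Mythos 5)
Your proposal is correct and follows essentially the same route as the paper: the paper's proof of this theorem simply invokes the argument of \cref{thm:dim3explicit41}, namely computational substitution into the polynomials of \cref{fig:polys51} for $4\le k\le 8$, followed by the observation that the chosen values have degrees at most $3$ (with only $y_3=\texttt{f}$ of degree $3$), so each evaluated polynomial has degree at most $7$ and undergoes no modular reduction for $k\ge 8$, making non-vanishing over $\mathbb{F}_{2^8}$ carry over to all larger $k$. Your additional verification of the hypotheses of \cref{thm:pcond51} is a harmless elaboration of what the paper leaves implicit.
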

The proof is the same as \cref{thm:dim3explicit41}, consisting of
computational checks for $4\le k\le 8$ and then an argument for all
$k\ge 9$ based on the degrees of the $x_i$ and $y_i$ polynomials.

\section{Efficient algorithms to test safeness}
\label{sec:algo}

To test whether a matrix may
be used to safely instantiate either of the masking schemes of Belaïd
\etal, we use the condition $\safe'(\bm{M}'_{\bm\gamma},\bm{N}_d)$
defined in \eqref{eqn:safeprime}, which according to \cref{thm:safeprime} is a
sufficient condition for the scheme under consideration to be safe.
The definition of this condition immediately indicates an algorithm,
which we have implemented with some optimizations using M4RIE
\cite{m4rie} for the finite field arithmetic.

\subsection{The algorithm}

To test whether a
matrix ${\bm\gamma}\in\K^{(d+1)\times d}$ satisfies the conditions of
\cref{prop:scheme4} or \cref{prop:scheme5}, simply
construct
$\bm{M}'_{\bm\gamma}$ and $\bm{N}_d$ and for
all $d$-subsets of columns $\bm{P}\in\select^{\ell}_d$,
check if
$\zr(\bm{N}_d\bm{P}\cdot\kbasis(\bm{M}'_{\bm\gamma}\bm{P}))$.

This algorithm is much more efficient than the one directly suggested by \cref{cond41}: instead of testing all $\sum_{i=1}^d\binom{\ell}{i}q^i$ vectors of $\Fq^\ell$ of weight $d$ or less, it is enough to do
$\binom{\ell}{d}$ easy linear algebra computations. While this remains exponential in $d$, it removes the practically insuperable factor $q^d$ and gives a complexity that does not depend on the field size (save for the cost of arithmetic).

(Note that we could have used the condition $\safe''$ as in
\cref{thm:safepp} instead, but this turns out to be more complicated in
practice due to the need to take arbitrary subsets of the rows and
columns of $\bm{M}'_{\bm\gamma}$ and $\bm{N}_d$.)

We now describe two implementation strategies for this algorithm.

\subsection{Straightforward implementation with optimizations}

Two simple optimizations may be used to make a straightforward implementation of the above algorithm more efficient in practice.

\paragraph{Skipping bad column picks.}
We can see already from the support of $\bm{N}_d$
that some subsets of columns $\bm{P}\in\select^\ell_d$ never need to
be checked because $\zr(\bm{N}_d\bm{P})$ is already true,
independent of the actual choice of ${\bm\gamma}$. This is the case for
example when the columns selected by $\bm{P}$ are all of weight 1.

For the specific cases of $d=4$, this reduces the number of supports to be considered from $\binom{49}{4} = 211\,876$ to $103\,030$, saving roughly a factor 2. A similar behaviour is observed for $d=5$, when
one only has to consider $6\,448\,239$ supports among the $\binom{71}{5} = 13\,019\,909$ possible ones.
Note that the same optimization could be applied to the naïve algorithm that exhaustively enumerates low-weight vectors of $\Fq^\ell$.

\paragraph{Testing critical cases first.}
Looking again at how $\bm{M}'_{\bm\gamma}$ is defined, it is easy to see
that for some column selections $\bm{P}$,
$\bm{M}'_{\bm\gamma}\bm{P}$ does not in
fact depend on ${\bm\gamma}$. For these, it is enough to check once
and for all that
$\zr(\bm{N}_{\bm\gamma}\bm{P}\cdot\kbasis(\bm{M}'_{\bm\gamma}\bm{P}))$
indeed holds (if it does not, the scheme would be generically broken). Going
further, even some column subsets such that $\bm{M}_{\bm\gamma}\bm{P}$
actually depends on ${\bm\gamma}$ may always be ``safe'' provided that ${\bm\gamma}$ satisfies a certain precondition, such as
for instance being MDS as suggested in \cref{sec:precond}.

Conversely, it may be the case that for some $\bm{P}$,
$\zr(\bm{N}_d\bm{P}\cdot\kbasis(\bm{M}'_{\bm\gamma}\bm{P}))$
often does
\emph{not} hold. It may then be beneficial to test this subset
$\bm{P}$ before others
that are less likely to make the condition fail. We have experimentally
observed that such subsets do exist. For instance, in the case $d = 5$
for \cref{cond41}, only $\approx 320\,000$ column subsets seem to
determine whether a matrix satisfies the condition or not.\footnote{This figure was found experimentally by regrouping the supports in clusters of $10\,000$, independently of $q$. A more careful analysis may lead to a more precise result.}
There, checking these supports first and using an early-abort strategy,
verifying that a matrix \emph{does not} satisfy the condition is at
least $\approx 20$ times faster than enumerating all possible column
subsets.

\subsection{Batch implementation}

Especially when the matrix ${\bm\gamma}$ under consideration actually
satisfies the required conditions, checking these using the
straightforward strategy entails considerable redundant computation due
to the overlap between subsets of columns.

To avoid this, we also implemented a way to check the condition
$\safe'(\bm{M}'_{\bm\gamma},\bm{N}_d)$ that operates over the entire
matrix simultaneously, effectively considering many subsets of columns
in a single batch.

Recall that the algorithm needs to (1) extract a subset of columns of
$\bm{M}'_{\bm\gamma}$, (2) compute a right kernel basis for this subset,
(3) multiply $\bm{N}_d$ times this kernel basis, and (4) check for
zero rows in the resulting product.

Steps (2) and (3) would typically be performed via Gaussian elimination:
For each column of $\bm{M}'_{\bm\gamma}$ that is in the selection,
we search for a pivot row, permute rows if necessary to
move the pivot up, then eliminate above and below the pivot and move on.
If there is no pivot in some column, this means a new null vector has
been found; we use the previous pivots to compute the null vector and
add it to the basis. Finally, we multiply this null space basis by the
corresponding columns in $\bm{N}_d$ and check for zero rows.

The key observation for this algorithm is that we can perform these
steps (2) and (3) \emph{in parallel} to add one more column to an
existing column selection. That is, starting with some subset of
columns, we consider the effect on the null space basis and the
following multiplication by $\bm{N}_d$ simultaneously for all other
columns in the matrices. Adding columns with pivots does not change the
null space basis or the product with $\bm{N}_d$.
Columns with no pivots add one additional column to the null space
basis, which results in a new column in the product with $\bm{N}_d$.
This new column of
$\bm{N}_d\bm{P}\cdot\kbasis(\bm{M}'_{\bm\gamma}\bm{P})$ may
be checked for non-zero entries and then immediately discarded as the
search continues; in later steps, the \emph{rows} of this product which already have a
non-zero entry no longer need to be considered.

All of this effectively reduces the cost of the check by a factor of
$\ell$ compared to the prior version, replacing the search over all
size-$d$ subsets with a search over size-$(d-1)$ subsets and some matrix
computations. This strategy is especially effective when the ${\bm\gamma}$
matrix under consideration is (nearly or actually) safe, meaning that the early termination
techniques above will not be very useful.

\section{Experimental results and explicit instantiations}
\label{sec:exp}

We implemented both algorithms of the previous section in the practically-useful case of binary fields, using M4RIE for the underlying linear algebra~\cite{m4rie}, and searched for matrices fulfilling \cref{cond41,cond51} in various settings, leading to instantiations of the masking schemes of Belaïd \etal up to $d=6$
and $\mathbb{F}_{2^{16}}$.\footnote{$\mathbb{F}_{2^{16}}$ is the largest
field size implemented in M4RIE, and $d=6$ the maximum dimension for
which safe instantiations (seem to) exist below this field size
limitation.}
We also collected statistics about the fraction of matrices satisfying the conditions, notably in function of the field over which they are defined. This allows to verify experimentally that Precondition~4
of \cref{sec:precond} is useful.

\subsection{Statistics}

We give detailed statistics about the proportion of preconditioned matrices allowing to instantiate either masking scheme up to order 6; this is presented in \cref{tbl:stats_4,tbl:stats_4_2}.
The data was collected by drawing at random matrices satisfying \cref{pcond41} or \cref{pcond51} and checking if they satisfied the safety conditions or not for the respective schemes.

For combinations of field size and order where no safe matrix was found, we give the result as an upper bound.

Notice that the probability for \cref{cond51} appears to be consistently
a bit higher than that for \cref{cond41}. The combinations of field size
$q$ and order $d$ where safe instances are found were almost the same
for both schemes, except that for order 5 and $q=2^9$, where a safe
preconditioned matrix was found for \cref{cond51} but not for
\cref{cond41}. This difference between the schemes can be explained by the fact that
\cref{cond41} places conditions on two matrices ${\bm\gamma}$ and
$\om_{d\times d}-{\bm\gamma}$,
whereas \cref{cond51} depends only on the single matrix ${\bm\gamma}$.

An important remark is that for the smallest field $\mathbb{F}_{2^5}$, the statistics do not include results about the \emph{non-preconditioned safe matrices}, which were the only safe ones we found, see the further discussion below.

We indicate the sample sizes used to obtain each result, as they may vary by several orders of magnitude due to the exponentially-increasing cost of our algorithm with the order. As an illustration,
our batch implementation is able to check 1\,000\,000 dimension-4 matrices over $\mathbb{F}_{2^6}$ in 12\,400 seconds on one core of a 2\,GHz Sandy Bridge CPU, which increases to 590\,000 and 740\,000
seconds for $\mathbb{F}_{2^{12}}$ and $\mathbb{F}_{2^{16}}$ respectively because of more expensive field operations; 1\,600\,000 seconds allowed to test $\approx 145\,000$ and $\approx 25\,000$
dimension-5 matrices for these last two fields, and $\approx 2\,400$ dimension-6 matrices for $\mathbb{F}_{2^{16}}$.

\begin{table}[!htb]
\caption{\label{tbl:stats_4} Instantiations over $\mathbb{F}_{2^5} \sim
\mathbb{F}_{2^{10}}$. Sample sizes (as indicated by symbols in the exponents)
were as follows:  $\ast \approx 400\,000$; $\ddagger = 1\,000\,000$;
$\star \approx 4\,000\,000$; $\dagger \approx 11\,000\,000$.}
\begin{tabu} to \textwidth {lXXXXXX}
\toprule
$q$\hspace{1em} & $2^5$ & $2^6$ & $2^7$ & $2^8$ & $2^9$ & $2^{10}$\\
\midrule
$d$&\multicolumn{6}{c}{\Cref{cond41} \& \Cref{pcond41}} \\
4 & $\leq 2^{-28.8}$ & $2^{-15.25\dagger}$ & $0.009^\dagger$ & $0.11^\ddagger$ & $0.34^\ddagger$ & $0.59^\ddagger$\\
5 & --- & --- & --- & --- & $\leq 2^{-27.5}$ & $2^{-18.9\star}$ \\
\midrule
$d$&\multicolumn{6}{c}{\Cref{cond51} \& \Cref{pcond51}} \\
4 & $\leq 2^{-33.5}$ & $2^{-9.10\ddagger}$ & $0.062^\ddagger$ & $0.27^\ddagger$ & $0.53^\ddagger$ & $0.73^\ddagger$\\
5 & --- & --- & --- & --- & $2^{-18.6\ast}$ & $2^{-11.0\ast}$ \\
\bottomrule
\end{tabu}
\end{table}

\begin{table}[!htb]
\caption{\label{tbl:stats_4_2}
Instantiations over $\mathbb{F}_{2^{11}} \sim \mathbb{F}_{2^{16}}$. Sample sizes (as indicated by symbols in the exponents) were as follows:
$\ddagger = 1\,000\,000$; $\ast \approx 400\,000$; $\diamond \approx 145\,000$; $\bullet \approx 65\,000$; $\triangleleft \approx 40\,000$; $\oslash \approx 30\,000$; $\ltimes \approx 25\,000$; $\wr \approx 560\,000$; $\curlywedge \approx 12\,700$.}
\begin{tabu} to \textwidth {lXXXXXX}
\toprule
$q$\hspace{1em} & $2^{11}$ & $2^{12}$ & $2^{13}$ & $2^{14}$ & $2^{15}$ & $2^{16}$\\
\midrule
$d$&\multicolumn{6}{c}{\Cref{cond41} \& \Cref{pcond41}} \\
4 & $0.77^\ddagger$ & $0.88^\ddagger$ & $0.94^\ddagger$ & $0.97^\ddagger$ & $0.98^\ddagger$ & $0.99^\ddagger$\\
5 & 0.0015$^\ast$ & $0.04^\diamond$ & $0.2^\bullet$ & $0.45^\triangleleft$ & $0.67^\oslash$ & $0.82^\ltimes$\\
6 & --- & --- & --- & --- & $2^{-16.8\wr}$ & $0.003^\curlywedge$\\
\midrule
$d$&\multicolumn{6}{c}{\Cref{cond51} \& \Cref{pcond51}} \\
4 & $0.86^\ddagger$ & $0.92^\ddagger$ & $0.96^\ddagger$ & $0.98^\ddagger$ & $0.99^\ddagger$ & $1.00^\ddagger$\\
5 & 0.021$^\ast$ & $0.14^\ast$ & $0.39^\ast$ & $0.62^\ast$ & $0.78^\ast$ & $0.89^\ast$\\
6 & --- & --- & --- & --- & $2^{-12.7\triangleleft}$ & $0.002^\triangleleft$\\
\bottomrule
\end{tabu}
\end{table}

\subsubsection{Usefulness of the preconditions.}

We now address the question of the usefulness of \cref{pcond41,pcond51} of \cref{sec:precond}.
Our goal is to determine with what probability randomly-generated
matrices in fact already satisfy the preconditions, and whether doing so for a matrix
${\bm\gamma}$ has a positive impact on its satisfying \cref{cond41} or \cref{cond51}.

We did this experimentally for two settings, both for the first scheme
corresponding to \cref{cond41}: order $d = 4$ over $\mathbb{F}_{2^8}$
and order $d = 5$ over $\mathbb{F}_{2^{13}}$. We generated enough random
matrices ${\bm\gamma}$ in order to obtain respectively 20\,000 and 2\,000 of
them satisfying \cref{cond41}, and counted how many of these corresponding safe
pairs (${\bm\gamma}$, $\om_{d\times d} - {\bm\gamma}$) had at least one or both elements that were MDS and XMDS. The same statistics were gathered for all the generated matrices, including the ones that were not safe.
The results are respectively summarized in \cref{tbl:stats_precond48,tbl:stats_precond513}.

\begin{table}[!htb]
\caption{\label{tbl:stats_precond48}Case $d = 4$ over
$\mathbb{F}_{2^8}$, for \Cref{cond41}.}
\begin{tabu} to \textwidth {XXXXXX}
\toprule
 & Total & One+ MDS & Both MDS & One+ XMDS & Both XMDS\\
\midrule
\#Random & 672\,625 & 634\,096 & 389\,504 & 515\,840 & 315\,273\\
\#Safe & 20\,000 & 19\,981 & 19\,981 & 19\,981 & 19\,981\\
Ratio & 0.030 & 0.032 & 0.051 & 0.039 & 0.063\\
\bottomrule
\end{tabu}
\end{table}

\begin{table}[!htb]
\caption{\label{tbl:stats_precond513}Case $d = 5$ over
$\mathbb{F}_{2^{13}}$, for \cref{cond41}.}
\begin{tabu} to \textwidth {XXXXXX}
\toprule
 & Total & One+ MDS & Both MDS & One+ XMDS & Both XMDS\\
\midrule
\#Random & 15\,877 & 15\,867 & 14\,978 & 15\,486 & 14\,623 \\
\#Safe & 2\,000 & 2\,000 & 2\,000 & 2\,000 & 2\,000\\
Ratio & 0.13 & 0.13 & 0.13 & 0.13 & 0.14\\
\bottomrule
\end{tabu}
\end{table}

A first comment on the results is that as already remarked in \cref{sec:precond},
the preconditions are not necessary to find safe instantiations. Indeed,
for a few of the smallest cases $d=3, q=2^3$ and $d=4,q=2^5$, we were
only able to find safe instantiations that did \emph{not} meet the
preconditions. For example, one can clearly see that the leading
$2\times 2$ submatrix of the following matrix is singular, and hence the
matrix is not MDS:
\[{\bm\gamma} = \begin{pmatrix}
\texttt{4} & \texttt{2} & \texttt{6}\\
\texttt{4} & \texttt{2} & \texttt{3}\\
\texttt{4} & \texttt{2} & \texttt{3}
\end{pmatrix}.\]
Yet (surprisingly), ${\bm\gamma}$ and $\om-{\bm\gamma}$ satisfy all
requirements of \cref{cond41} over $\mathbb{F}_{2^3}$.

Nonetheless, the precondition is clearly helpful in the vast majority of
cases. From our experiments, \emph{in cases where any preconditioned
safe matrix exists}, then nearly all safe matrices satisfy the
precondition,
while a significant fraction of random matrices do not. Enforcing the precondition by construction or as a first check is then indeed a way to improve the performance of a
random search of a safe matrix.
This is especially true for larger orders; for example, we did not find
any safe matrices for order $d=6$ over $\mathbb{F}_{2^{15}}$ by random
search, but only by imposing \cref{pcond41}.

Lastly, one should notice that specifically considering Cauchy matrices seems to further increase the odds of a matrix being safe, beyond the fact that it satisfies
\cref{cond41}: in the case $d = 4$, $\mathbb{F}_{2^8}$, \cref{tbl:stats_4} gives a success probability of 0.11, which is
significantly larger than the 0.063 of \cref{tbl:stats_precond48}, and in the case $d = 5$, $\mathbb{F}_{2^{13}}$, \cref{tbl:stats_4_2} gives 0.2, also quite higher than the 0.14 of \cref{tbl:stats_precond513}. As of yet, we do not have an explanation for this observation. 

\subsection{Instantiations of \cite[§4]{DBLP:conf/crypto/BelaidBPPTV17}}
\label{sec:inst41}

We conclude by giving explicit matrices allowing to safely instantiate the
scheme of~\cite[§4]{DBLP:conf/crypto/BelaidBPPTV17} over various fields
from order 3 up to 6; the case of order at most $2$ is treated
in \cref{sec:analytic} (Belaïd \etal also provided examples for $d=2$).
Our examples include practically-relevant instances with $d=3,4$ over $\mathbb{F}_{2^8}$.

We only give one matrix ${\bm\gamma}$ for every case we list, but we emphasise that as is required by the masking scheme, this means that both ${\bm\gamma}$ and ${\bm\delta} = \om_{d\times d} - {\bm\gamma}$ satisfy \cref{cond41}.
We list instances only for the smallest field size we know of, and for
$\mathbb{F}_{2^8}$ (when applicable), but have computed explicit
instances for all field sizes up to $\mathbb{F}_{2^{16}}$. The larger-field
instantiations are given in \cref{app:instantiations}.

\subsubsection{Instantiations at order 3.}

The smallest (binary) field for which we could find an instantiation at
order 3 was $\mathbb{F}_{2^3}$. Recall that we also have an explicit
construction in \cref{sec:analytic} for any $2^k$ with $k\ge 4$.

\[
{\bm\gamma}(\mathbb{F}_{2^3}) = \begin{pmatrix}
\texttt{3} & \texttt{5} & \texttt{4}\\
\texttt{3} & \texttt{6} & \texttt{7}\\
\texttt{3} & \texttt{5} & \texttt{4}
\end{pmatrix} \qquad
{\bm\gamma}(\mathbb{F}_{2^8}) = \begin{pmatrix}
\texttt{e3} & \texttt{b7} & \texttt{50}\\
\texttt{bd} & \texttt{e8} & \texttt{8b}\\
\texttt{53} & \texttt{25} & \texttt{a0}\\
\end{pmatrix}
\]

\subsubsection{Instantiations at order 4.}

The smallest (binary) field for which we could find an instantiation at order 4 was $\mathbb{F}_{2^5}$. The following matrices $\gam(\Fq)$ may be used
to instantiate the scheme over \Fq.

\[
\begin{array}{cc}
{\bm\gamma}(\mathbb{F}_{2^5}) = \begin{pmatrix}
\texttt{1c} & \texttt{ c} & \texttt{1e} & \texttt{ b}\\
\texttt{1c} & \texttt{ c} & \texttt{1e} & \texttt{12}\\
\texttt{10} & \texttt{18} & \texttt{17} & \texttt{14}\\
\texttt{1c} & \texttt{ c} & \texttt{1e} & \texttt{10}\\
\end{pmatrix} &
{\bm\gamma}(\mathbb{F}_{2^8}) = \begin{pmatrix}
\texttt{56} & \texttt{5e} & \texttt{a1} & \texttt{3d} \\
\texttt{97} & \texttt{27} & \texttt{71} & \texttt{c7} \\
\texttt{f5} & \texttt{ae} & \texttt{68} & \texttt{88} \\
\texttt{1c} & \texttt{ 3} & \texttt{9c} & \texttt{8e}
\end{pmatrix}\\
\end{array}
\]

\subsubsection{Instantiations at order 5.}
The smallest field for which we could find an instantiation at order 5 was
$\mathbb{F}_{2^{10}}$. The following matrix may be used
to instantiate the scheme over $\mathbb{F}_{2^{10}}$.

\[
\begin{array}{cc}
{\bm\gamma}(\mathbb{F}_{2^{10}}) = \begin{pmatrix}
\texttt{276} & \texttt{13e} & \texttt{ 64} & \texttt{1ab} & \texttt{120}\\
\texttt{189} & \texttt{181} & \texttt{195} & \texttt{30f} & \texttt{3fe}\\
\texttt{20a} & \texttt{3a1} & \texttt{199} & \texttt{ 30} & \texttt{2db}\\
\texttt{156} & \texttt{1ab} & \texttt{2f8} & \texttt{ e5} & \texttt{2a8}\\
\texttt{303} & \texttt{321} & \texttt{265} & \texttt{ d8} & \texttt{ 3a}\\
\end{pmatrix}\\
\end{array}
\]

\subsubsection{Instantiations at order 6.}
The smallest field for which we could find an instantiation at order 6 was
$\mathbb{F}_{2^{15}}$. The following matrix may be used
to instantiate the scheme over $\mathbb{F}_{2^{15}}$.

\[
{\bm\gamma}(\mathbb{F}_{2^{15}}) = \begin{pmatrix}
\texttt{151d} & \texttt{5895} & \texttt{5414} & \texttt{392b} & \texttt{2092} & \texttt{29a6}\\
\texttt{5c69} & \texttt{2f9e} & \texttt{241d} & \texttt{2ef7} & \texttt{ baa} & \texttt{6f40}\\
\texttt{6e0d} & \texttt{ 8cf} & \texttt{7ca1} & \texttt{6503} & \texttt{23dc} & \texttt{6b3b}\\
\texttt{10d7} & \texttt{588e} & \texttt{2c22} & \texttt{1245} & \texttt{6a38} & \texttt{6484}\\
\texttt{1637} & \texttt{7062} & \texttt{2ae0} & \texttt{ d1b} & \texttt{5305} & \texttt{381f}\\
\texttt{23f6} & \texttt{ 7d5} & \texttt{21bf} & \texttt{2879} & \texttt{2033} & \texttt{4377}\\
\end{pmatrix}
\]

\subsection{Instantiations of \cite[§5]{DBLP:conf/crypto/BelaidBPPTV17} up to order 6}
\label{sec:inst51}

We now give similar instantiation results for the scheme with linear randomness complexity. This time, only a single
matrix of dimension $(d+1)\times d$ is necessary to obtain a $d$-NI scheme. As in the previous case, we only focus here
on the cases where $3 \leq d \leq 6$, and only list the matrices over
the smallest field we have as well as $\mathbb{F}_{2^8}$ (where
possible). We refer to the supplementary material for all other cases.

\subsubsection{Instantiations at order 3.}

The smallest (binary) field for which we could find an instantiation at
order 3 was $\mathbb{F}_{2^3}$. Recall that we also have an explicit
construction in \cref{sec:analytic} for any $2^k$ with $k\ge 4$.

\[
{\bm\gamma}(\mathbb{F}_{2^3}) = \begin{pmatrix}
\texttt{1} & \texttt{7} & \texttt{4}\\
\texttt{4} & \texttt{4} & \texttt{4}\\
\texttt{2} & \texttt{1} & \texttt{4}\\
\texttt{7} & \texttt{2} & \texttt{4}\\
\end{pmatrix} \qquad
{\bm\gamma}(\mathbb{F}_{2^8}) = \begin{pmatrix}
\texttt{da} & \texttt{d5} & \texttt{e6}\\
\texttt{e8} & \texttt{1d} & \texttt{44}\\
\texttt{ad} & \texttt{b3} & \texttt{ce}\\
\texttt{9f} & \texttt{7b} & \texttt{6c}
\end{pmatrix}
\]

\subsubsection{Instantiations at order 4.}

The smallest (binary) field for which we could find an instantiation at order 4
was $\mathbb{F}_{2^5}$. The following matrices $\gam(\Fq)$ may be used
to instantiate the scheme over \Fq.
\[
\begin{array}{cc}
{\bm\gamma}(\mathbb{F}_{2^5}) = \begin{pmatrix}
\texttt{17} & \texttt{ f} & \texttt{13} & \texttt{16}\\
\texttt{ b} & \texttt{ 7} & \texttt{1a} & \texttt{11}\\
\texttt{ 1} & \texttt{1e} & \texttt{19} & \texttt{ 3}\\
\texttt{1b} & \texttt{10} & \texttt{ 2} & \texttt{ a}\\
\texttt{ 6} & \texttt{ 6} & \texttt{12} & \texttt{ e}
\end{pmatrix} &
{\bm\gamma}(\mathbb{F}_{2^8}) = \begin{pmatrix}
\texttt{ac} & \texttt{39} & \texttt{c0} & \texttt{36} \\
\texttt{79} & \texttt{5f} & \texttt{d9} & \texttt{51} \\
\texttt{9d} & \texttt{16} & \texttt{ca} & \texttt{63} \\
\texttt{a3} & \texttt{cb} & \texttt{ 6} & \texttt{81}\\
\texttt{eb} & \texttt{bb} & \texttt{d5} & \texttt{85}
\end{pmatrix}\\
\end{array}
\]

\subsubsection{Instantiations at order 5.}
The smallest field for which we could find an instantiation at order 5 was $\mathbb{F}_{2^{9}}$. The following matrix may be used
to instantiate the scheme over $\mathbb{F}_{2^{9}}$.

\[
\begin{array}{cc}
{\bm\gamma}(\mathbb{F}_{2^{9}}) = \begin{pmatrix}
\texttt{ 7d} & \texttt{12c} & \texttt{ 18} & \texttt{1a3} & \texttt{ da}\\
\texttt{121} & \texttt{131} & \texttt{109} & \texttt{1a7} & \texttt{ 3b}\\
\texttt{ 4a} & \texttt{131} & \texttt{ 91} & \texttt{ a4} & \texttt{1c4}\\
\texttt{17c} & \texttt{ cb} & \texttt{14b} & \texttt{ 41} & \texttt{ 57}\\
\texttt{ fd} & \texttt{ 87} & \texttt{ ac} & \texttt{17a} & \texttt{149}\\
\texttt{ 97} & \texttt{160} & \texttt{ 67} & \texttt{19b} & \texttt{ 3b}\\
\end{pmatrix}\\
\end{array}
\]

\subsubsection{Instantiations at order 6.}
The smallest field for which we could find an instantiation at order 6
was $\mathbb{F}_{2^{15}}$. The following matrix may be used
to instantiate the scheme over $\mathbb{F}_{2^{15}}$.

\[
{\bm\gamma}(\mathbb{F}_{2^{15}}) = \begin{pmatrix}
\texttt{475c} & \texttt{77e7} & \texttt{64ef} & \texttt{7893} & \texttt{4cd1} & \texttt{6e20}\\
\texttt{63dd} & \texttt{ 71f} & \texttt{29da} & \texttt{600e} & \texttt{36be} & \texttt{1db7}\\
\texttt{5511} & \texttt{ d63} & \texttt{3719} & \texttt{4874} & \texttt{ 664} & \texttt{5014}\\
\texttt{410e} & \texttt{7cf2} & \texttt{ 9d9} & \texttt{10a1} & \texttt{7525} & \texttt{6098}\\
\texttt{7bfe} & \texttt{2998} & \texttt{7e20} & \texttt{1438} & \texttt{35e6} & \texttt{ 51e}\\
\texttt{7564} & \texttt{75d3} & \texttt{221a} & \texttt{67c7} & \texttt{56f1} & \texttt{18d5}\\
\texttt{3e04} & \texttt{5d22} & \texttt{2fcf} & \texttt{33b7} & \texttt{6a39} & \texttt{5ed0}\\
\end{pmatrix}\\
\]

\subsection{Minimum field sizes for safe instantiations}

We conclude by briefly comparing the minimum field sizes for which we could find safe instantiations of \cref{cond41} and \cref{cond51} with the ones given by the non-constructive existence theorems
of Belaïd~\etal. Namely, \cite[Thm.~4.5]{DBLP:conf/crypto/BelaidBPPTV17} guarantees the existence of a pair of safe matrices for \cref{cond41} in dimension $d$ over \Fq as long as $q > 2d\cdot(12d)^d$,
and \cite[Thm.~5.4]{DBLP:conf/crypto/BelaidBPPTV17} of a safe matrix for \cref{cond51} as long as $q > d\cdot(d+1)\cdot(12d)^d$.
We give in \cref{tbl:insts} the explicit values provided by these two theorems for $2 \leq d \leq 6$ and $q$ a power of two, along with the experimental minima that we found. From these, it seems that
the sufficient condition of Belaïd~\etal is in fact rather pessimistic.

\begin{table}[!htb]
\caption{\label{tbl:insts}Sufficient field sizes for safe instantiations in characteristic two. Sizes are given as $\log(q)$.}
\begin{center}
\begin{tabu} to \textwidth {lcccc}
\toprule
$d\,/\,\min(\log(q))$\hspace{2em} & \cite[Thm.~4.5]{DBLP:conf/crypto/BelaidBPPTV17}~~~~& \cref{sec:inst41}~~~~~& \cite[Thm.~5.4]{DBLP:conf/crypto/BelaidBPPTV17}~~~~& \cref{sec:inst51} \\
\midrule
2 & 11 & 3  & 12 & 3\\
3 & 19 & 3  & 20 & 3\\
4 & 26 & 5  & 27 & 5\\
5 & 33 & 10 & 35 & 9\\
6 & 41 & 15 & 43 & 15\\
\bottomrule
\end{tabu}
\end{center}
\end{table}

\begin{center}
\decosix
\end{center}

\subsubsection*{Acknowledgements.}
We thank Daniel Augot for the interesting discussions we had in the early stages of this work.

This work was performed while the second author was graciously hosted by
the Laboratoire Jean Kuntzmann at the Universit\'e Grenoble Alpes.

The second author was supported in part by the
National Science Foundation under grants \#1319994 and \#1618269, and
by the Office of Naval Research award \#N0001417WX01516.

Some of the computations were performed using the Grace supercomputer
hosted by the U.S.\ Naval Academy Center for High Performance Computing,
with funding from the DoD HPC Modernization Program.

\bibliographystyle{amsalpha_nolower_eev2}

\begin{thebibliography}{MMM88}

\bibitem[Alb13]{m4rie}
Martin Albrecht, \titleEE{https://bitbucket.org/malb/m4rie}{\emph{The M4RIE
  Library}}, The M4RIE~Team, 2013.

\bibitem[BBD{\etalchar{+}}16]{DBLP:conf/ccs/BartheBDFGSZ16}
Gilles Barthe, Sonia Bela{\"{\i}}d, Fran{\c{c}}ois Dupressoir, Pierre{-}Alain
  Fouque, Benjamin Gr{\'{e}}goire, Pierre{-}Yves Strub, and R{\'{e}}becca
  Zucchini, \titleEE{http://doi.acm.org/10.1145/2976749.2978427}{\emph{{Strong
  Non-Interference and Type-Directed Higher-Order Masking}}}, ACM CCS 2016
  (Edgar~R. Weippl, Stefan Katzenbeisser, Christopher Kruegel, Andrew~C. Myers,
  and Shai Halevi, eds.), {ACM}, 2016, pp.~116--129.

\bibitem[BBP{\etalchar{+}}16]{DBLP:conf/eurocrypt/BelaidBPPTV16}
Sonia Bela{\"{\i}}d, Fabrice Benhamouda, Alain Passel{\`{e}}gue, Emmanuel
  Prouff, Adrian Thillard, and Damien Vergnaud,
  \titleEE{https://doi.org/10.1007/978-3-662-49896-5_22}{\emph{{Randomness
  Complexity of Private Circuits for Multiplication}}}, {EUROCRYPT} 2016 (Marc
  Fischlin and Jean{-}S{\'{e}}bastien Coron, eds.), Lecture Notes in Computer
  Science, vol. 9666, Springer, 2016, pp.~616--648.

\bibitem[BBP{\etalchar{+}}17]{DBLP:conf/crypto/BelaidBPPTV17}
Sonia Bela{\"{\i}}d, Fabrice Benhamouda, Alain Passel{\`{e}}gue, Emmanuel
  Prouff, Adrian Thillard, and Damien Vergnaud,
  \titleEE{https://doi.org/10.1007/978-3-319-63697-9_14}{\emph{{Private
  Multiplication over Finite Fields}}}, {CRYPTO} 2017 (Jonathan Katz and Hovav
  Shacham, eds.), Lecture Notes in Computer Science, vol. 10403, Springer,
  2017, pp.~397--426.

\bibitem[CJRR99]{DBLP:conf/crypto/ChariJRR99}
Suresh Chari, Charanjit~S. Jutla, Josyula~R. Rao, and Pankaj Rohatgi,
  \titleEE{https://doi.org/10.1007/3-540-48405-1_26}{\emph{{Towards Sound
  Approaches to Counteract Power-Analysis Attacks}}}, in Wiener
  \cite{DBLP:conf/crypto/1999}, pp.~398--412.

\bibitem[CPRR16]{DBLP:journals/iacr/CarletPRR16}
Claude Carlet, Emmanuel Prouff, Matthieu Rivain, and Thomas Roche,
  \titleEE{http://eprint.iacr.org/2016/321}{\emph{{Algebraic Decomposition for
  Probing Security}}}, {IACR} Cryptology ePrint Archive \textbf{2016} (2016),
  321.

\bibitem[GP99]{DBLP:conf/ches/GoubinP99}
Louis Goubin and Jacques Patarin,
  \titleEE{https://doi.org/10.1007/3-540-48059-5_15}{\emph{{{DES} and
  Differential Power Analysis (The "Duplication" Method)}}}, CHES'99
  ({\c{C}}etin~Kaya Ko{\c{c}} and Christof Paar, eds.), Lecture Notes in
  Computer Science, vol. 1717, Springer, 1999, pp.~158--172.

\bibitem[ISW03]{DBLP:conf/crypto/IshaiSW03}
Yuval Ishai, Amit Sahai, and David~A. Wagner,
  \titleEE{https://doi.org/10.1007/978-3-540-45146-4_27}{\emph{{Private
  Circuits: Securing Hardware against Probing Attacks}}}, CRYPTO 2003 (Dan
  Boneh, ed.), Lecture Notes in Computer Science, vol. 2729, Springer, 2003,
  pp.~463--481.

\bibitem[KJJ99]{DBLP:conf/crypto/KocherJJ99}
Paul~C. Kocher, Joshua Jaffe, and Benjamin Jun,
  \titleEE{https://doi.org/10.1007/3-540-48405-1_25}{\emph{{Differential Power
  Analysis}}}, in Wiener \cite{DBLP:conf/crypto/1999}, pp.~388--397.

\bibitem[MS06]{mdsConj}
Florence~Jessie MacWilliams and Neil James~Alexander Sloane,
  \titleEE{}{\emph{{The Theory of Error-Correcting Codes}}}, 12 ed.,
  North-Holland Mathematical Library, North-Holland, 2006.

\bibitem[RS85]{DBLP:journals/tit/RothS85}
Ron~M. Roth and Gadiel Seroussi,
  \titleEE{https://doi.org/10.1109/TIT.1985.1057113}{\emph{{On generator
  matrices of {MDS} codes}}}, {IEEE} Trans. Information Theory \textbf{31}
  (1985), no.~6, 826--830.

\bibitem[{Sag}16]{sagemath}
The {Sage Developers}, \titleEE{http://www.sagemath.org}{\emph{{S}ageMath, the
  {S}age {M}athematics {S}oftware {S}ystem ({V}ersion 7.4)}}, 2016.

\bibitem[Sch80]{DBLP:journals/jacm/Schwartz80}
Jacob~T. Schwartz,
  \titleEE{http://doi.acm.org/10.1145/322217.322225}{\emph{{Fast Probabilistic
  Algorithms for Verification of Polynomial Identities}}}, J. {ACM} \textbf{27}
  (1980), no.~4, 701--717.

\bibitem[Wie99]{DBLP:conf/crypto/1999}
Michael~J. Wiener (ed.),
  \titleEE{https://doi.org/10.1007/3-540-48405-1}{\emph{Advances in Cryptology
  --- {CRYPTO} '99}}, vol. 1666, Springer, 1999.

\end{thebibliography}

\newcommand{\etalchar}[1]{$^{#1}$}
\providecommand{\bysame}{\leavevmode\hbox to3em{\hrulefill}\thinspace}
\providecommand{\MR}{\relax\ifhmode\unskip\space\fi MR }
\providecommand{\MRhref}[2]{%
  \href{http://www.ams.org/mathscinet-getitem?mr=#1}{#2}
}
\providecommand{\href}[2]{#2}
\providecommand{\showEE}[1]{#1}
\providecommand{\titleEE}[2]{\ifthenelse{\equal{#2}{}}{\warning{missing
  title}}{}\ifthenelse{\equal{#1}{}}{#2}{\href{#1}{#2}}}

\appendix

\section{Explicit instantiations of schemes}
\label{app:instantiations}

We provide a complete listing of the safe ${\bm\gamma}$ matrices we have computed for both masking
schemes.

\subsection{Instantiations of \cite[§4]{DBLP:conf/crypto/BelaidBPPTV17}}

\subsubsection{Instantiations at order 3.}

The smallest (binary) field for which we could find an instantiation at
order 3 was $\mathbb{F}_{2^3}$. The following matrices $\gam(\Fq)$ may be used
to instantiate the scheme over \Fq.

\[
{\bm\gamma}(\mathbb{F}_{2^3}) = \begin{pmatrix}
\texttt{3} & \texttt{5} & \texttt{4}\\
\texttt{3} & \texttt{6} & \texttt{7}\\
\texttt{3} & \texttt{5} & \texttt{4}
\end{pmatrix} \quad
{\bm\gamma}(\mathbb{F}_{2^4}) = \begin{pmatrix}
\texttt{4} & \texttt{b} & \texttt{e}\\
\texttt{f} & \texttt{7} & \texttt{5}\\
\texttt{3} & \texttt{d} & \texttt{c}
\end{pmatrix} \quad
{\bm\gamma}(\mathbb{F}_{2^5}) = \begin{pmatrix}
\texttt{15} & \texttt{ 8} & \texttt{14}\\
\texttt{ f} & \texttt{1d} & \texttt{ c}\\
\texttt{16} & \texttt{ 7} & \texttt{ 5}
\end{pmatrix} \quad
{\bm\gamma}(\mathbb{F}_{2^6}) = \begin{pmatrix}
\texttt{36} & \texttt{30} & \texttt{1d}\\
\texttt{21} & \texttt{ 5} & \texttt{1a}\\
\texttt{35} & \texttt{31} & \texttt{1b}
\end{pmatrix}\]

\medskip

\[
{\bm\gamma}(\mathbb{F}_{2^7}) = \begin{pmatrix}
\texttt{7b} & \texttt{5a} & \texttt{11}\\
\texttt{64} & \texttt{5b} & \texttt{60}\\
\texttt{42} & \texttt{72} & \texttt{79}
\end{pmatrix} \qquad
{\bm\gamma}(\mathbb{F}_{2^8}) = \begin{pmatrix}
\texttt{e3} & \texttt{b7} & \texttt{50}\\
\texttt{bd} & \texttt{e8} & \texttt{8b}\\
\texttt{53} & \texttt{25} & \texttt{a0}
\end{pmatrix} \qquad
{\bm\gamma}(\mathbb{F}_{2^9}) = \begin{pmatrix}
\texttt{ c4} & \texttt{149} & \texttt{ 8c}\\
\texttt{112} & \texttt{167} & \texttt{ 5d}\\
\texttt{ da} & \texttt{110} & \texttt{13b}
\end{pmatrix}
\]

\medskip

\[
{\bm\gamma}(\mathbb{F}_{2^{10}}) = \begin{pmatrix}
\texttt{39f} & \texttt{2e4} & \texttt{2a9}\\
\texttt{ 67} & \texttt{25a} & \texttt{ 63}\\
\texttt{ 93} & \texttt{1d2} & \texttt{34a}
\end{pmatrix} \qquad
{\bm\gamma}(\mathbb{F}_{2^{11}}) = \begin{pmatrix}
\texttt{462} & \texttt{ 60} & \texttt{14b}\\
\texttt{3d5} & \texttt{3ce} & \texttt{1ab}\\
\texttt{ 22} & \texttt{223} & \texttt{11c}
\end{pmatrix} \qquad
{\bm\gamma}(\mathbb{F}_{2^{12}}) = \begin{pmatrix}
\texttt{7ef} & \texttt{ 7a} & \texttt{e06}\\
\texttt{3c9} & \texttt{be9} & \texttt{ca8}\\
\texttt{a7d} & \texttt{8b9} & \texttt{14d}
\end{pmatrix}
\]

\medskip

\[
{\bm\gamma}(\mathbb{F}_{2^{13}}) = \begin{pmatrix}
\texttt{720} & \texttt{ cff} & \texttt{1871}\\
\texttt{786} & \texttt{1596} & \texttt{ 37f}\\
\texttt{8bf} & \texttt{155e} & \texttt{ 8fc}
\end{pmatrix} \qquad
{\bm\gamma}(\mathbb{F}_{2^{14}}) = \begin{pmatrix}
\texttt{3c30} & \texttt{2f24} & \texttt{ 723}\\
\texttt{244b} & \texttt{3452} & \texttt{295c}\\
\texttt{1572} & \texttt{2682} & \texttt{1c92}
\end{pmatrix}
\]

\medskip

\[
{\bm\gamma}(\mathbb{F}_{2^{15}}) = \begin{pmatrix}
\texttt{4bf5} & \texttt{39c5} & \texttt{3929}\\
\texttt{  69} & \texttt{3f99} & \texttt{220e}\\
\texttt{40ad} & \texttt{7285} & \texttt{4538}
\end{pmatrix} \qquad
{\bm\gamma}(\mathbb{F}_{2^{16}}) = \begin{pmatrix}
\texttt{5ba1} & \texttt{264b} & \texttt{ 288}\\
\texttt{d51c} & \texttt{f2f7} & \texttt{43cb}\\
\texttt{22b0} & \texttt{ea98} & \texttt{4ddc}
\end{pmatrix}
\]

\subsubsection{Instantiations at order 4.}

The smallest (binary) field for which we could find an instantiation at order 4 was $\mathbb{F}_{2^5}$. The following matrices $\gam(\Fq)$ may be used
to instantiate the scheme over \Fq.

\[
\begin{array}{cc}
{\bm\gamma}(\mathbb{F}_{2^5}) = \begin{pmatrix}
\texttt{1c} & \texttt{ c} & \texttt{1e} & \texttt{ b}\\
\texttt{1c} & \texttt{ c} & \texttt{1e} & \texttt{12}\\
\texttt{10} & \texttt{18} & \texttt{17} & \texttt{14}\\
\texttt{1c} & \texttt{ c} & \texttt{1e} & \texttt{10}\\
\end{pmatrix} &
{\bm\gamma}(\mathbb{F}_{2^6}) = \begin{pmatrix}
\texttt{26} & \texttt{1b} & \texttt{ 8} & \texttt{3f}\\
\texttt{14} & \texttt{ 6} & \texttt{1e} & \texttt{2c}\\
\texttt{13} & \texttt{2a} & \texttt{33} & \texttt{22}\\
\texttt{3c} & \texttt{10} & \texttt{14} & \texttt{28}
\end{pmatrix}
\end{array}
\]

\medskip

\[\begin{array}{cc}
{\bm\gamma}(\mathbb{F}_{2^7}) = \begin{pmatrix}
\texttt{ e} & \texttt{6e} & \texttt{60} & \texttt{3d}\\
\texttt{51} & \texttt{27} & \texttt{6d} & \texttt{46}\\
\texttt{1d} & \texttt{21} & \texttt{43} & \texttt{13}\\
\texttt{48} & \texttt{2e} & \texttt{76} & \texttt{16}
\end{pmatrix}&
{\bm\gamma}(\mathbb{F}_{2^8}) = \begin{pmatrix}
\texttt{56} & \texttt{5e} & \texttt{a1} & \texttt{3d} \\
\texttt{97} & \texttt{27} & \texttt{71} & \texttt{c7} \\
\texttt{f5} & \texttt{ae} & \texttt{68} & \texttt{88} \\
\texttt{1c} & \texttt{ 3} & \texttt{9c} & \texttt{8e}
\end{pmatrix}\\
\end{array}
\]

\medskip

\[
\begin{array}{cc}
{\bm\gamma}(\mathbb{F}_{2^9}) = \begin{pmatrix}
\texttt{1b8} & \texttt{  30} & \texttt{1cf} & \texttt{  c}\\
\texttt{ fa} & \texttt{11d} &  \texttt{  f} & \texttt{16f}\\
\texttt{ 8f} & \texttt{ 56} &  \texttt{ 60} & \texttt{17f}\\
\texttt{104} & \texttt{ ec} &  \texttt{100} & \texttt{17e}\\
\end{pmatrix}&
{\bm\gamma}(\mathbb{F}_{2^{10}}) = \begin{pmatrix}
\texttt{23a} & \texttt{ ea} & \texttt{11b} & \texttt{16d}\\
\texttt{  9} & \texttt{3e2} & \texttt{387} & \texttt{197}\\
\texttt{2c4} & \texttt{148} & \texttt{296} & \texttt{1fc}\\
\texttt{14c} & \texttt{2c3} & \texttt{117} & \texttt{355}\\
\end{pmatrix}\\
\end{array}
\]

\medskip

\[
\begin{array}{cc}
{\bm\gamma}(\mathbb{F}_{2^{11}}) = \begin{pmatrix}
\texttt{36c} & \texttt{27a} & \texttt{32f} & \texttt{ 73}\\
\texttt{3bd} & \texttt{39d} & \texttt{610} & \texttt{254}\\
\texttt{3b1} & \texttt{27c} & \texttt{33a} & \texttt{3e4}\\
\texttt{42c} & \texttt{3f1} & \texttt{723} & \texttt{142}\\
\end{pmatrix}&
{\bm\gamma}(\mathbb{F}_{2^{12}}) = \begin{pmatrix}
\texttt{f19} & \texttt{ef4} & \texttt{16f} & \texttt{6b7}\\
\texttt{cfc} & \texttt{71c} & \texttt{b5d} & \texttt{f69}\\
\texttt{d23} & \texttt{440} & \texttt{b39} & \texttt{1e8}\\
\texttt{915} & \texttt{5c0} & \texttt{526} & \texttt{882}\\
\end{pmatrix}\\
\end{array}
\]

\medskip

\[
\begin{array}{cc}
{\bm\gamma}(\mathbb{F}_{2^{13}}) = \begin{pmatrix}
\texttt{ 4bf} & \texttt{ 559} & \texttt{ 1ef} & \texttt{ 2f2}\\
\texttt{ d75} & \texttt{1154} & \texttt{ fec} & \texttt{ a68}\\
\texttt{ a34} & \texttt{ ce6} & \texttt{ 41c} & \texttt{ e99}\\
\texttt{1941} & \texttt{18a0} & \texttt{ b83} & \texttt{17ae}\\
\end{pmatrix}&
{\bm\gamma}(\mathbb{F}_{2^{14}}) = \begin{pmatrix}
\texttt{ aa9} & \texttt{3b79} & \texttt{309e} & \texttt{258f}\\
\texttt{1711} & \texttt{1e67} & \texttt{1f6b} & \texttt{192b}\\
\texttt{ ecb} & \texttt{3c84} & \texttt{1cba} & \texttt{ da9}\\
\texttt{3b47} & \texttt{ 772} & \texttt{ 5cd} & \texttt{38c8}\\
\end{pmatrix}\\
\end{array}
\]

\medskip

\[
\begin{array}{cc}
{\bm\gamma}(\mathbb{F}_{2^{15}}) = \begin{pmatrix}
\texttt{2251} & \texttt{11d0} & \texttt{605a} & \texttt{63e6}\\
\texttt{7f22} & \texttt{68e6} & \texttt{ ed7} & \texttt{6bb7}\\
\texttt{487f} & \texttt{6fcf} & \texttt{5c3f} & \texttt{23ee}\\
\texttt{3b25} & \texttt{7289} & \texttt{19c4} & \texttt{50d4}\\
\end{pmatrix}&
{\bm\gamma}(\mathbb{F}_{2^{16}}) = \begin{pmatrix}
\texttt{4b5f} & \texttt{758b} & \texttt{ed70} & \texttt{40a2}\\
\texttt{9d32} & \texttt{ f21} & \texttt{6ca6} & \texttt{388e}\\
\texttt{8691} & \texttt{f39a} & \texttt{6def} & \texttt{860f}\\
\texttt{6576} & \texttt{897d} & \texttt{5020} & \texttt{b398}\\
\end{pmatrix}\\
\end{array}
\]

\subsubsection{Instantiations at order 5.}
The smallest field for which we could find an instantiation at order 5 was
$\mathbb{F}_{2^{10}}$. The following matrices $\gam(\Fq)$ may be used
to instantiate the scheme over \Fq.

\[
\begin{array}{cc}
{\bm\gamma}(\mathbb{F}_{2^{10}}) = \begin{pmatrix}
\texttt{276} & \texttt{13e} & \texttt{ 64} & \texttt{1ab} & \texttt{120}\\
\texttt{189} & \texttt{181} & \texttt{195} & \texttt{30f} & \texttt{3fe}\\
\texttt{20a} & \texttt{3a1} & \texttt{199} & \texttt{ 30} & \texttt{2db}\\
\texttt{156} & \texttt{1ab} & \texttt{2f8} & \texttt{ e5} & \texttt{2a8}\\
\texttt{303} & \texttt{321} & \texttt{265} & \texttt{ d8} & \texttt{ 3a}\\
\end{pmatrix}&
{\bm\gamma}(\mathbb{F}_{2^{11}}) = \begin{pmatrix}
\texttt{19d} & \texttt{57f} & \texttt{5b8} & \texttt{148} & \texttt{473}\\
\texttt{45f} & \texttt{176} & \texttt{517} & \texttt{1c9} & \texttt{2f7}\\
\texttt{171} & \texttt{699} & \texttt{41d} & \texttt{18e} & \texttt{5cb}\\
\texttt{6fe} & \texttt{ af} & \texttt{7a4} & \texttt{100} & \texttt{47d}\\
\texttt{482} & \texttt{181} & \texttt{441} & \texttt{44a} & \texttt{793}\\
\end{pmatrix}\\
\end{array}
\]

\medskip

\[
\begin{array}{cc}
{\bm\gamma}(\mathbb{F}_{2^{12}}) = \begin{pmatrix}
\texttt{866} & \texttt{440} & \texttt{a83} & \texttt{a02} & \texttt{b05}\\
\texttt{d77} & \texttt{449} & \texttt{a38} & \texttt{bd1} & \texttt{554}\\
\texttt{5b3} & \texttt{84a} & \texttt{a09} & \texttt{90c} & \texttt{c64}\\
\texttt{25e} & \texttt{c5f} & \texttt{d45} & \texttt{445} & \texttt{aa5}\\
\texttt{b56} & \texttt{5ac} & \texttt{4af} & \texttt{aa3} & \texttt{193}\\
\end{pmatrix}&
{\bm\gamma}(\mathbb{F}_{2^{13}}) = \begin{pmatrix}
\texttt{ 559} & \texttt{1ef} & \texttt{ 2f2} & \texttt{ 7c4} & \texttt{ 755}\\
\texttt{1154} & \texttt{fec} & \texttt{ a68} & \texttt{19f7} & \texttt{1c3b}\\
\texttt{ ce6} & \texttt{41c} & \texttt{ e99} & \texttt{10fc} & \texttt{1fda}\\
\texttt{18a0} & \texttt{b83} & \texttt{17ae} & \texttt{ 8bd} & \texttt{ f35}\\
\texttt{ c98} & \texttt{8fc} & \texttt{ efb} & \texttt{1200} & \texttt{14ae}\\
\end{pmatrix}\\
\end{array}
\]

\medskip

\[
\begin{array}{cc}
{\bm\gamma}(\mathbb{F}_{2^{14}}) = \begin{pmatrix}
\texttt{1ded} & \texttt{346c} & \texttt{2bc3} & \texttt{10d8} & \texttt{12be}\\
\texttt{2b47} & \texttt{3638} & \texttt{2032} & \texttt{3386} & \texttt{18f6}\\
\texttt{ 1a5} & \texttt{269a} & \texttt{ 70c} & \texttt{ 7e7} & \texttt{1c07}\\
\texttt{34bf} & \texttt{2462} & \texttt{ 8cf} & \texttt{1bd5} & \texttt{3941}\\
\texttt{3aef} & \texttt{3699} & \texttt{1faf} & \texttt{ cb2} & \texttt{3c41}\\
\end{pmatrix}&
{\bm\gamma}(\mathbb{F}_{2^{15}}) = \begin{pmatrix}
\texttt{3d33} & \texttt{3494} & \texttt{6bae} & \texttt{5d57} & \texttt{79e4}\\
\texttt{627a} & \texttt{ 1dd} & \texttt{ e95} & \texttt{3f5b} & \texttt{134c}\\
\texttt{ a03} & \texttt{4087} & \texttt{ b8c} & \texttt{31f0} & \texttt{75e8}\\
\texttt{4930} & \texttt{531b} & \texttt{4f33} & \texttt{2e8f} & \texttt{1a4c}\\
\texttt{1103} & \texttt{3dde} & \texttt{2834} & \texttt{1853} & \texttt{4754}\\
\end{pmatrix}\\
\end{array}
\]

\medskip

\[
{\bm\gamma}(\mathbb{F}_{2^{16}}) = \begin{pmatrix}
\texttt{758b} & \texttt{ed70} & \texttt{40a2} & \texttt{f1c7} & \texttt{9b8c}\\
\texttt{ f21} & \texttt{6ca6} & \texttt{388e} & \texttt{c9c9} & \texttt{1b09}\\
\texttt{f39a} & \texttt{6def} & \texttt{860f} & \texttt{d582} & \texttt{1cc3}\\
\texttt{897d} & \texttt{5020} & \texttt{b398} & \texttt{234b} & \texttt{2598}\\
\texttt{a9ea} & \texttt{f2ee} & \texttt{c8f3} & \texttt{1f04} & \texttt{ba18}\\
\end{pmatrix}
\]

\subsubsection{Instantiations at order 6.}
The smallest field for which we could find an instantiation at order 6 was $\mathbb{F}_{2^{15}}$. The following matrices $\gam(\Fq)$ may be used
to instantiate the scheme over \Fq.

\[
{\bm\gamma}(\mathbb{F}_{2^{15}}) = \begin{pmatrix}
\texttt{151d} & \texttt{5895} & \texttt{5414} & \texttt{392b} & \texttt{2092} & \texttt{29a6}\\
\texttt{5c69} & \texttt{2f9e} & \texttt{241d} & \texttt{2ef7} & \texttt{ baa} & \texttt{6f40}\\
\texttt{6e0d} & \texttt{ 8cf} & \texttt{7ca1} & \texttt{6503} & \texttt{23dc} & \texttt{6b3b}\\
\texttt{10d7} & \texttt{588e} & \texttt{2c22} & \texttt{1245} & \texttt{6a38} & \texttt{6484}\\
\texttt{1637} & \texttt{7062} & \texttt{2ae0} & \texttt{ d1b} & \texttt{5305} & \texttt{381f}\\
\texttt{23f6} & \texttt{ 7d5} & \texttt{21bf} & \texttt{2879} & \texttt{2033} & \texttt{4377}\\
\end{pmatrix}
\]
\[
{\bm\gamma}(\mathbb{F}_{2^{16}}) = \begin{pmatrix}
\texttt{9f80} & \texttt{97e3} & \texttt{1a0a} & \texttt{2dbf} & \texttt{93e7} & \texttt{c7a8}\\
\texttt{9dcf} & \texttt{3e14} & \texttt{ d5d} & \texttt{ec34} & \texttt{2375} & \texttt{28d6}\\
\texttt{4ee9} & \texttt{2f79} & \texttt{1bdd} & \texttt{1389} & \texttt{3f17} & \texttt{8803}\\
\texttt{1667} & \texttt{2d1f} & \texttt{d4ea} & \texttt{d573} & \texttt{49f6} & \texttt{697f}\\
\texttt{5877} & \texttt{2c2d} & \texttt{995d} & \texttt{a867} & \texttt{64e6} & \texttt{e758}\\
\texttt{e58c} & \texttt{c5a8} & \texttt{18cb} & \texttt{b3cd} & \texttt{a42b} & \texttt{722b}\\
\end{pmatrix}\\
\]

\subsection{Instantiations of \cite[§5]{DBLP:conf/crypto/BelaidBPPTV17}}

\subsubsection{Instantiations at order 3.}

The smallest (binary) field for which we could find an instantiation at
order 3 was $\mathbb{F}_{2^3}$. The following matrices $\gam(\Fq)$ may be used
to instantiate the scheme over \Fq.

\[
{\bm\gamma}(\mathbb{F}_{2^{3}}) = \begin{pmatrix}
\texttt{1} & \texttt{7} & \texttt{4}\\
\texttt{4} & \texttt{4} & \texttt{4}\\
\texttt{2} & \texttt{1} & \texttt{4}\\
\texttt{7} & \texttt{2} & \texttt{4}\\
\end{pmatrix} \quad
{\bm\gamma}(\mathbb{F}_{2^{4}}) = \begin{pmatrix}
\texttt{9} & \texttt{a} & \texttt{6}\\
\texttt{f} & \texttt{6} & \texttt{9}\\
\texttt{5} & \texttt{1} & \texttt{6}\\
\texttt{3} & \texttt{d} & \texttt{9}
\end{pmatrix} \quad
{\bm\gamma}(\mathbb{F}_{2^{5}}) = \begin{pmatrix}
\texttt{1b} & \texttt{ 9} & \texttt{ 4}\\
\texttt{ 5} & \texttt{13} & \texttt{1e}\\
\texttt{ e} & \texttt{1f} & \texttt{18}\\
\texttt{10} & \texttt{ 5} & \texttt{ 2}
\end{pmatrix} \quad
{\bm\gamma}(\mathbb{F}_{2^{6}}) = \begin{pmatrix}
\texttt{ c} & \texttt{25} & \texttt{3d}\\
\texttt{3f} & \texttt{2e} & \texttt{2c}\\
\texttt{24} & \texttt{ d} & \texttt{ 7}\\
\texttt{17} & \texttt{ 6} & \texttt{16}
\end{pmatrix}
\]

\medskip

\[
{\bm\gamma}(\mathbb{F}_{2^{7}}) = \begin{pmatrix}
\texttt{17} & \texttt{3c} & \texttt{1e}\\
\texttt{21} & \texttt{15} & \texttt{4e}\\
\texttt{35} & \texttt{14} & \texttt{16}\\
\texttt{ 3} & \texttt{3d} & \texttt{46}
\end{pmatrix} \qquad
{\bm\gamma}(\mathbb{F}_{2^{8}}) = \begin{pmatrix}
\texttt{da} & \texttt{d5} & \texttt{e6}\\
\texttt{e8} & \texttt{1d} & \texttt{44}\\
\texttt{ad} & \texttt{b3} & \texttt{ce}\\
\texttt{9f} & \texttt{7b} & \texttt{6c}
\end{pmatrix} \qquad
{\bm\gamma}(\mathbb{F}_{2^{9}}) = \begin{pmatrix}
\texttt{14b} & \texttt{ bd} & \texttt{ f6}\\
\texttt{ 62} & \texttt{ 4d} & \texttt{1b4}\\
\texttt{ 1a} & \texttt{124} & \texttt{18f}\\
\texttt{133} & \texttt{1d4} & \texttt{ cd}
\end{pmatrix}
\]

\medskip

\[
{\bm\gamma}(\mathbb{F}_{2^{10}}) = \begin{pmatrix}
\texttt{ 78} & \texttt{25b} & \texttt{ 97}\\
\texttt{35c} & \texttt{ ae} & \texttt{328}\\
\texttt{14c} & \texttt{292} & \texttt{ d2}\\
\texttt{268} & \texttt{ 67} & \texttt{36d}
\end{pmatrix} \qquad
{\bm\gamma}(\mathbb{F}_{2^{11}}) = \begin{pmatrix}
\texttt{111} & \texttt{1a5} & \texttt{50f}\\
\texttt{7c4} & \texttt{443} & \texttt{ 5a}\\
\texttt{697} & \texttt{76e} & \texttt{53b}\\
\texttt{ 42} & \texttt{288} & \texttt{ 6e}
\end{pmatrix} \qquad
{\bm\gamma}(\mathbb{F}_{2^{12}}) = \begin{pmatrix}
\texttt{91f} & \texttt{7b0} & \texttt{4c2}\\
\texttt{ad6} & \texttt{a47} & \texttt{7e3}\\
\texttt{743} & \texttt{3c4} & \texttt{ c8}\\
\texttt{48a} & \texttt{e33} & \texttt{3e9}
\end{pmatrix}
\]

\medskip

\[
{\bm\gamma}(\mathbb{F}_{2^{13}}) = \begin{pmatrix}
\texttt{1385} & \texttt{ fc8} & \texttt{153f}\\
\texttt{173d} & \texttt{1920} & \texttt{113a}\\
\texttt{ 40a} & \texttt{ 1b0} & \texttt{ 423}\\
\texttt{  b2} & \texttt{1758} & \texttt{  26}
\end{pmatrix} \qquad
{\bm\gamma}(\mathbb{F}_{2^{14}}) = \begin{pmatrix}
\texttt{3795} & \texttt{38e8} & \texttt{14fa}\\
\texttt{268a} & \texttt{ df7} & \texttt{27a2}\\
\texttt{ 259} & \texttt{359e} & \texttt{3cfe}\\
\texttt{1346} & \texttt{  81} & \texttt{ fa6}
\end{pmatrix}
\]

\medskip

\[
{\bm\gamma}(\mathbb{F}_{2^{15}}) = \begin{pmatrix}
\texttt{1852} & \texttt{ 689} & \texttt{305d}\\
\texttt{320d} & \texttt{33a4} & \texttt{3aaf}\\
\texttt{7873} & \texttt{4270} & \texttt{46d4}\\
\texttt{522c} & \texttt{775d} & \texttt{4c26}
\end{pmatrix} \qquad
{\bm\gamma}(\mathbb{F}_{2^{16}}) = \begin{pmatrix}
\texttt{4f70} & \texttt{6517} & \texttt{a398}\\
\texttt{e7a8} & \texttt{9d98} & \texttt{5b74}\\
\texttt{e251} & \texttt{3130} & \texttt{6ebf}\\
\texttt{4a89} & \texttt{c9bf} & \texttt{9653}
\end{pmatrix}
\]

\subsubsection{Instantiations at order 4.}

The smallest (binary) field for which we could find an instantiation at order 4 was $\mathbb{F}_{2^5}$. The following matrices $\gam(\Fq)$ may be used
to instantiate the scheme over \Fq.
\[
\begin{array}{cc}
{\bm\gamma}(\mathbb{F}_{2^5}) = \begin{pmatrix}
\texttt{17} & \texttt{ f} & \texttt{13} & \texttt{16}\\
\texttt{ b} & \texttt{ 7} & \texttt{1a} & \texttt{11}\\
\texttt{ 1} & \texttt{1e} & \texttt{19} & \texttt{ 3}\\
\texttt{1b} & \texttt{10} & \texttt{ 2} & \texttt{ a}\\
\texttt{ 6} & \texttt{ 6} & \texttt{12} & \texttt{ e}
\end{pmatrix} &
{\bm\gamma}(\mathbb{F}_{2^6}) = \begin{pmatrix}
\texttt{ f} & \texttt{2f} & \texttt{20} & \texttt{25}\\
\texttt{1c} & \texttt{28} & \texttt{ 6} & \texttt{25}\\
\texttt{32} & \texttt{2c} & \texttt{ 9} & \texttt{ 8}\\
\texttt{26} & \texttt{28} & \texttt{11} & \texttt{13}\\
\texttt{ 7} & \texttt{ 3} & \texttt{3e} & \texttt{1b}
\end{pmatrix}\\
\end{array}
\]

\medskip

\[
\begin{array}{cc}
{\bm\gamma}(\mathbb{F}_{2^7}) = \begin{pmatrix}
\texttt{7f} & \texttt{14} & \texttt{50} & \texttt{5f}\\
\texttt{35} & \texttt{58} & \texttt{45} & \texttt{6b}\\
\texttt{24} & \texttt{60} & \texttt{5e} & \texttt{2e}\\
\texttt{11} & \texttt{1e} & \texttt{2d} & \texttt{7b}\\
\texttt{7f} & \texttt{32} & \texttt{66} & \texttt{61}
\end{pmatrix}&
{\bm\gamma}(\mathbb{F}_{2^8}) = \begin{pmatrix}
\texttt{ac} & \texttt{39} & \texttt{c0} & \texttt{36} \\
\texttt{79} & \texttt{5f} & \texttt{d9} & \texttt{51} \\
\texttt{9d} & \texttt{16} & \texttt{ca} & \texttt{63} \\
\texttt{a3} & \texttt{cb} & \texttt{ 6} & \texttt{81}\\
\texttt{eb} & \texttt{bb} & \texttt{d5} & \texttt{85}
\end{pmatrix}\\
\end{array}
\]

\medskip

\[
\begin{array}{cc}
{\bm\gamma}(\mathbb{F}_{2^9}) = \begin{pmatrix}
\texttt{3e} & \texttt{1e0} & \texttt{  5} & \texttt{1ef}\\
\texttt{ e} & \texttt{ 19} &  \texttt{180} & \texttt{ c4}\\
\texttt{93} & \texttt{186} &  \texttt{ d9} & \texttt{ 98}\\
\texttt{82} & \texttt{ 49} &  \texttt{ 36} & \texttt{191}\\
\texttt{21} & \texttt{ 36} &  \texttt{16a} & \texttt{ 22}\\
\end{pmatrix}&
{\bm\gamma}(\mathbb{F}_{2^{10}}) = \begin{pmatrix}
\texttt{ad} & \texttt{244} & \texttt{388} & \texttt{1d3}\\
\texttt{7a} & \texttt{253} & \texttt{ 32} & \texttt{3d4}\\
\texttt{b2} & \texttt{370} & \texttt{128} & \texttt{1cc}\\
\texttt{41} & \texttt{ b7} & \texttt{2c0} & \texttt{390}\\
\texttt{24} & \texttt{3d0} & \texttt{ 52} & \texttt{ 5b}\\
\end{pmatrix}\\
\end{array}
\]

\medskip

\[
\begin{array}{cc}
{\bm\gamma}(\mathbb{F}_{2^{11}}) = \begin{pmatrix}
\texttt{6a7} & \texttt{ e6} & \texttt{ ee} & \texttt{ 5c}\\
\texttt{13d} & \texttt{29e} & \texttt{781} & \texttt{7cd}\\
\texttt{225} & \texttt{75a} & \texttt{534} & \texttt{25b}\\
\texttt{25a} & \texttt{364} & \texttt{479} & \texttt{37d}\\
\texttt{7e5} & \texttt{646} & \texttt{622} & \texttt{6b7}\\
\end{pmatrix}&
{\bm\gamma}(\mathbb{F}_{2^{12}}) = \begin{pmatrix}
\texttt{4db} & \texttt{48a} & \texttt{5b9} & \texttt{83e}\\
\texttt{f2e} & \texttt{616} & \texttt{941} & \texttt{725}\\
\texttt{58a} & \texttt{b17} & \texttt{543} & \texttt{ 3e}\\
\texttt{ 6c} & \texttt{243} & \texttt{caf} & \texttt{aab}\\
\texttt{e13} & \texttt{bc8} & \texttt{514} & \texttt{58e}\\
\end{pmatrix}\\
\end{array}
\]

\medskip

\[
\begin{array}{cc}
{\bm\gamma}(\mathbb{F}_{2^{13}}) = \begin{pmatrix}
\texttt{ fa9} & \texttt{ 50f} & \texttt{1f87} & \texttt{ a97}\\
\texttt{181e} & \texttt{ 1cf} & \texttt{1725} & \texttt{ 86c}\\
\texttt{ e22} & \texttt{ 8eb} & \texttt{1800} & \texttt{118d}\\
\texttt{168f} & \texttt{ e76} & \texttt{1f81} & \texttt{ e8d}\\
\texttt{ f1a} & \texttt{ 25d} & \texttt{ f23} & \texttt{1dfb}\\
\end{pmatrix}&
{\bm\gamma}(\mathbb{F}_{2^{14}}) = \begin{pmatrix}
\texttt{261d} & \texttt{  ff} & \texttt{1fcb} & \texttt{ ae1}\\
\texttt{ 4f8} & \texttt{3575} & \texttt{1be2} & \texttt{ ea6}\\
\texttt{139a} & \texttt{3353} & \texttt{3ca8} & \texttt{116c}\\
\texttt{2d98} & \texttt{1eb9} & \texttt{ 9d7} & \texttt{3fad}\\
\texttt{1ce7} & \texttt{1860} & \texttt{3156} & \texttt{2a86}\\
\end{pmatrix}\\
\end{array}
\]

\medskip

\[
\begin{array}{cc}
{\bm\gamma}(\mathbb{F}_{2^{15}}) = \begin{pmatrix}
\texttt{246d} & \texttt{79de} & \texttt{632b} & \texttt{2b2f}\\
\texttt{1fe9} & \texttt{3986} & \texttt{13da} & \texttt{6a77}\\
\texttt{4e15} & \texttt{6f28} & \texttt{4e9a} & \texttt{2778}\\
\texttt{5389} & \texttt{6a45} & \texttt{7849} & \texttt{7770}\\
\texttt{2618} & \texttt{4535} & \texttt{4622} & \texttt{1150}\\
\end{pmatrix}&
{\bm\gamma}(\mathbb{F}_{2^{16}}) = \begin{pmatrix}
\texttt{dfd3} & \texttt{a0b4} & \texttt{ca3b} & \texttt{39bb}\\
\texttt{b92f} & \texttt{f0a7} & \texttt{b829} & \texttt{bf8d}\\
\texttt{ae71} & \texttt{3990} & \texttt{7757} & \texttt{3943}\\
\texttt{5bd5} & \texttt{f925} & \texttt{ 188} & \texttt{af4f}\\
\texttt{9358} & \texttt{90a6} & \texttt{ 4cd} & \texttt{103a}\\
\end{pmatrix}\\
\end{array}
\]

\subsubsection{Instantiations at order 5.}
The smallest field for which we could find an instantiation at order 5 was $\mathbb{F}_{2^{9}}$. The following matrices $\gam(\Fq)$ may be used
to instantiate the scheme over \Fq.

\[
\begin{array}{cc}
{\bm\gamma}(\mathbb{F}_{2^{9}}) = \begin{pmatrix}
\texttt{ 7d} & \texttt{12c} & \texttt{ 18} & \texttt{1a3} & \texttt{ da}\\
\texttt{121} & \texttt{131} & \texttt{109} & \texttt{1a7} & \texttt{ 3b}\\
\texttt{ 4a} & \texttt{131} & \texttt{ 91} & \texttt{ a4} & \texttt{1c4}\\
\texttt{17c} & \texttt{ cb} & \texttt{14b} & \texttt{ 41} & \texttt{ 57}\\
\texttt{ fd} & \texttt{ 87} & \texttt{ ac} & \texttt{17a} & \texttt{149}\\
\texttt{ 97} & \texttt{160} & \texttt{ 67} & \texttt{19b} & \texttt{ 3b}\\
\end{pmatrix}&
{\bm\gamma}(\mathbb{F}_{2^{10}}) = \begin{pmatrix}
\texttt{ 33} & \texttt{314} & \texttt{2b6} & \texttt{ 4d} & \texttt{236}\\
\texttt{285} & \texttt{339} & \texttt{ 8a} & \texttt{3bb} & \texttt{ 79}\\
\texttt{ 56} & \texttt{118} & \texttt{ b6} & \texttt{373} & \texttt{326}\\
\texttt{132} & \texttt{1b5} & \texttt{2cd} & \texttt{  7} & \texttt{335}\\
\texttt{ 72} & \texttt{ d4} & \texttt{101} & \texttt{26e} & \texttt{10e}\\
\texttt{3a0} & \texttt{ 54} & \texttt{146} & \texttt{2ec} & \texttt{352}\\
\end{pmatrix}\\
\end{array}
\]

\medskip

\[
\begin{array}{cc}
{\bm\gamma}(\mathbb{F}_{2^{11}}) = \begin{pmatrix}
\texttt{1ce} & \texttt{ d9} & \texttt{5d5} & \texttt{690} & \texttt{6ae}\\
\texttt{176} & \texttt{7fa} & \texttt{44e} & \texttt{559} & \texttt{ a2}\\
\texttt{3e2} & \texttt{532} & \texttt{ c9} & \texttt{ 7a} & \texttt{447}\\
\texttt{4f1} & \texttt{ 4d} & \texttt{64b} & \texttt{ 36} & \texttt{ 65}\\
\texttt{ bc} & \texttt{26d} & \texttt{1cc} & \texttt{645} & \texttt{ 84}\\
\texttt{717} & \texttt{ 31} & \texttt{6d5} & \texttt{5c0} & \texttt{2aa}\\
\end{pmatrix}&
{\bm\gamma}(\mathbb{F}_{2^{12}}) = \begin{pmatrix}
\texttt{8ef} & \texttt{276} & \texttt{61a} & \texttt{b58} & \texttt{2ab}\\
\texttt{d02} & \texttt{ 63} & \texttt{871} & \texttt{ 61} & \texttt{cb8}\\
\texttt{4da} & \texttt{ d8} & \texttt{ced} & \texttt{3f5} & \texttt{ce6}\\
\texttt{bc3} & \texttt{d44} & \texttt{c82} & \texttt{ 1a} & \texttt{c2a}\\
\texttt{ c6} & \texttt{125} & \texttt{ed3} & \texttt{9fc} & \texttt{906}\\
\texttt{a32} & \texttt{eac} & \texttt{ d7} & \texttt{12a} & \texttt{7d9}\\
\end{pmatrix}\\
\end{array}
\]

\medskip

\[
\begin{array}{cc}
{\bm\gamma}(\mathbb{F}_{2^{13}}) = \begin{pmatrix}
\texttt{ 89a} & \texttt{1c76} & \texttt{ e56} & \texttt{ ae5} & \texttt{ a19}\\
\texttt{14c4} & \texttt{ 20c} & \texttt{ 198} & \texttt{13f1} & \texttt{ 886}\\
\texttt{ 6bf} & \texttt{ e58} & \texttt{1ed8} & \texttt{1ae3} & \texttt{19fb}\\
\texttt{ 519} & \texttt{1171} & \texttt{1c43} & \texttt{10e7} & \texttt{ f50}\\
\texttt{ fd5} & \texttt{13de} & \texttt{ c24} & \texttt{1f01} & \texttt{1a9d}\\
\texttt{102d} & \texttt{128d} & \texttt{ 171} & \texttt{ c11} & \texttt{ ea9}\\
\end{pmatrix}&
{\bm\gamma}(\mathbb{F}_{2^{14}}) = \begin{pmatrix}
\texttt{1d03} & \texttt{3719} & \texttt{39b0} & \texttt{3a21} & \texttt{3598}\\
\texttt{ 550} & \texttt{ 82a} & \texttt{3f3f} & \texttt{2aba} & \texttt{35cb}\\
\texttt{3f2f} & \texttt{3a81} & \texttt{1109} & \texttt{37f0} & \texttt{2175}\\
\texttt{23c2} & \texttt{194a} & \texttt{ dc6} & \texttt{3fa3} & \texttt{29a4}\\
\texttt{3e3f} & \texttt{ 571} & \texttt{23c6} & \texttt{31ee} & \texttt{3c23}\\
\texttt{3a81} & \texttt{1989} & \texttt{3986} & \texttt{2926} & \texttt{34a1}\\
\end{pmatrix}\\
\end{array}
\]

\medskip

\[
\begin{array}{cc}
{\bm\gamma}(\mathbb{F}_{2^{15}}) = \begin{pmatrix}
\texttt{ 2bd} & \texttt{662d} & \texttt{3f88} & \texttt{5519} & \texttt{6e67}\\
\texttt{4519} & \texttt{71cc} & \texttt{44a5} & \texttt{102c} & \texttt{3f61}\\
\texttt{313c} & \texttt{160f} & \texttt{131b} & \texttt{6695} & \texttt{4631}\\
\texttt{2c83} & \texttt{53b7} & \texttt{1b64} & \texttt{504b} & \texttt{ dd1}\\
\texttt{4733} & \texttt{1baa} & \texttt{11a4} & \texttt{ b15} & \texttt{46ff}\\
\texttt{1d28} & \texttt{49f3} & \texttt{62f6} & \texttt{78fe} & \texttt{5c19}\\
\end{pmatrix}
{\bm\gamma}(\mathbb{F}_{2^{16}}) = \begin{pmatrix}
\texttt{f4ff} & \texttt{3efb} & \texttt{b917} & \texttt{5dab} & \texttt{c491}\\
\texttt{9179} & \texttt{d251} & \texttt{abbd} & \texttt{544d} & \texttt{426b}\\
\texttt{3242} & \texttt{e774} & \texttt{cc82} & \texttt{2de0} & \texttt{  55}\\
\texttt{ d5e} & \texttt{2439} & \texttt{28ca} & \texttt{539f} & \texttt{c5ab}\\
\texttt{9659} & \texttt{1cbc} & \texttt{7431} & \texttt{2eae} & \texttt{f356}\\
\texttt{ccc3} & \texttt{335b} & \texttt{82d3} & \texttt{5937} & \texttt{b052}\\
\end{pmatrix}\\
\end{array}
\]

\subsubsection{Instantiations at order 6.}
The smallest field for which we could find an instantiation at order 6
was $\mathbb{F}_{2^{15}}$. The following matrices may be used
to instantiate the scheme over $\Fq$.

\[
{\bm\gamma}(\mathbb{F}_{2^{15}}) = \begin{pmatrix}
\texttt{475c} & \texttt{77e7} & \texttt{64ef} & \texttt{7893} & \texttt{4cd1} & \texttt{6e20}\\
\texttt{63dd} & \texttt{ 71f} & \texttt{29da} & \texttt{600e} & \texttt{36be} & \texttt{1db7}\\
\texttt{5511} & \texttt{ d63} & \texttt{3719} & \texttt{4874} & \texttt{ 664} & \texttt{5014}\\
\texttt{410e} & \texttt{7cf2} & \texttt{ 9d9} & \texttt{10a1} & \texttt{7525} & \texttt{6098}\\
\texttt{7bfe} & \texttt{2998} & \texttt{7e20} & \texttt{1438} & \texttt{35e6} & \texttt{ 51e}\\
\texttt{7564} & \texttt{75d3} & \texttt{221a} & \texttt{67c7} & \texttt{56f1} & \texttt{18d5}\\
\texttt{3e04} & \texttt{5d22} & \texttt{2fcf} & \texttt{33b7} & \texttt{6a39} & \texttt{5ed0}\\
\end{pmatrix}\\
\]

\medskip

\[
{\bm\gamma}(\mathbb{F}_{2^{16}}) = \begin{pmatrix}
\texttt{d997} & \texttt{8a77} & \texttt{f6eb} & \texttt{b902} & \texttt{a02d} & \texttt{f8f6}\\
\texttt{a7b9} & \texttt{239c} & \texttt{c977} & \texttt{8270} & \texttt{7b14} & \texttt{34a8}\\
\texttt{571c} & \texttt{bc5c} & \texttt{539b} & \texttt{c981} & \texttt{16a4} & \texttt{ff58}\\
\texttt{9417} & \texttt{b095} & \texttt{f080} & \texttt{e399} & \texttt{d925} & \texttt{687b}\\
\texttt{5f28} & \texttt{6048} & \texttt{cf5a} & \texttt{1158} & \texttt{2db9} & \texttt{b4e1}\\
\texttt{8ae1} & \texttt{75e7} & \texttt{fb1c} & \texttt{77e9} & \texttt{22ec} & \texttt{74fb}\\
\texttt{68ec} & \texttt{b08d} & \texttt{a8c1} & \texttt{77db} & \texttt{1bed} & \texttt{9b67}\\
\end{pmatrix}\\
\]

\end{document}